\newcommand{\rev}[1]{{{#1}}}
\theoremstyle{plain}
\newtheorem{thm}{Theorem}[section]
\newtheorem{prop}[thm]{Proposition}
\newtheorem{lem}[thm]{Lemma}
\newtheorem{cor}[thm]{Corollary}
\theoremstyle{definition}
\newtheorem{assum}[thm]{Assumption}
\newtheorem{problem}{Problem}
\theoremstyle{remark}
\newtheorem{remark}[thm]{Remark}
\theoremstyle{plain}
\newtheorem{exmp}[thm]{Example}
\theoremstyle{definition}
\newenvironment{assumptionp}[1]{
  
  \assumptionalt
}{\endassumptionalt}
\newcommand{\paren}[1]{\ensuremath{\left( #1\right)}}
\newcommand{\set}[1]{\ensuremath{\left\{ #1\right\}}}
\newcommand{\matr}[1]{\ensuremath{\begin{bmatrix} #1 \end{bmatrix}}}
\newcommand{\snorm}[1]{\ensuremath{\| #1\|}}
\newcommand{\tr}{\mathrm{tr}}
\icmltitlerunning{Predictive Linear Online Tracking for Unknown Targets}
\begin{document}

\twocolumn[
\icmltitle{Predictive Linear Online Tracking for Unknown Targets}

\icmlsetsymbol{equal}{*}

\begin{icmlauthorlist}
\icmlauthor{Anastasios Tsiamis}{equal,ifa}
\icmlauthor{Aren Karapetyan}{equal,ifa}
\icmlauthor{Yueshan Li}{idsc}
\icmlauthor{Efe C. Balta}{ifa,ins}
\icmlauthor{John Lygeros}{ifa}
\end{icmlauthorlist}

\icmlaffiliation{ifa}{Automatic Control Laboratory, ETH Zürich, Zürich, Switzerland}
\icmlaffiliation{idsc}{Institute for Dynamic Systems and Control, ETH Zürich, Zürich, Switzerland}
\icmlaffiliation{ins}{Control and Automation Group, inspire AG, Zürich, Switzerland}

\icmlcorrespondingauthor{Anastasios Tsiamis}{atsiamis@ethz.ch}

\icmlkeywords{Online Learning, Tracking, Control, Recursive Learning}

\vskip 0.3in
]

\printAffiliationsAndNotice{\icmlEqualContribution} %

\begin{abstract}
In this paper, we study the problem of online tracking in linear control systems, where the objective is to follow a moving target. Unlike classical tracking control, the target is unknown, non-stationary, and its state is revealed sequentially, thus, fitting the framework of online non-stochastic control. 
We consider the case of quadratic costs and propose a new algorithm, called predictive linear online tracking (PLOT). The algorithm uses recursive least squares with exponential forgetting to learn a time-varying dynamic model of the target. The learned model is used in the optimal policy under the framework of receding horizon control. We show the dynamic regret of PLOT scales with $\mathcal{O}(\sqrt{TV_T})$, where $V_T$ is the total variation of the target dynamics and $T$ is the time horizon. 
Unlike prior work, our theoretical results hold for non-stationary targets. We implement PLOT on a real quadrotor and provide open-source software, thus, showcasing one of the first successful applications of online control methods on real hardware. 
\end{abstract}

\section{Introduction}
Target tracking is a fundamental control task for autonomous agents, allowing them to be used in a variety of applications including environmental monitoring~\cite{environmental}, agriculture~\cite{agriculture}, and air shows~\cite{dance} to name a few. 
Typically, an autonomous agent is given a reference trajectory, which should be tracked with as little error as possible. In the case of linear systems with quadratic costs the problem, also known as Linear Quadratic Tracking (LQT), can be formulated as follows.
Let the autonomous agent be governed by the linear dynamics
\begin{equation}\label{eq:system}
    x_{t+1} = Ax_t+Bu_t,
\end{equation}
where $x_t\in\mathbb{R}^n$ is the state, and $u_t\in\mathbb{R}^m$ is the control input, while $A\in\mathbb{R}^{n\times n}$ and $B\in\mathbb{R}^{n\times m}$ \rev{denote} the \emph{known} system dynamics. 
Given a target trajectory $r_t\in\mathbb{R}^{n\times 1}$, $t\ge 0$ the goal is finding the optimal control input of the \rev{finite-horizon} problem
\begin{equation}\label{eq:LQT}
    \min_{u_{0:T-1}}  \!\sum_{t=0}^{T-1} \left(\lVert x_t-r_t\rVert_Q^2 + \lVert u_t\rVert_R^2\right)\! +\! \lVert x_T-r_T\rVert_{Q_T}^2, 
\end{equation}
subject to the dynamics~\eqref{eq:system}, for some \rev{total time (horizon) $T$}.  The state, input, and terminal penalties, $Q\in\mathbb{R}^{n\times n},R\in\mathbb{R}^{m\times m},Q_T\in\mathbb{R}^{n\times n}$, respectively, are design choices.

In classical target tracking, the target state $r_t$ is known a priori and is precomputed.
However, in many cases, the target might be time-varying and \emph{unknown}. Such scenarios arise, for example, in the case of adversarial target tracking, wild animal tracking, moving obstacle avoidance, pedestrian tracking, etc. In such cases, the target trajectory is generated by an external unknown dynamical system, also known as ``exosystem"~\cite{nikiforov2022adaptive}. 
In this work, we assume that the target is generated by time-varying, auto-regressive (AR) dynamics. Let
\[
z_t=\begin{bmatrix}r^\top_{t}&\cdots&r^\top_{t-p+1}\end{bmatrix}^\top
\]
contain past target states, with $p$ the memory or past horizon of the dynamics. Then, the target at the next time step is given by the AR model
\begin{equation}\label{eq:exosystem}
\begin{aligned}
 r_{t+1} 
    &=S_{t+1} z_t,
    \end{aligned}
\end{equation}
where $S_t\in\mathbb{R}^{n\times np}$ is an \emph{unkonwn} matrix.

Adaptive control has a long history of dealing with dynamical uncertainty~\cite{annaswamy2021historical}, where the goal is to simultaneously control the system and adapt to the uncertainty. One of the most widely-used algorithms in the adaptive control literature with ubiquitous applications has been the celebrated Recursive Least Squares (RLS) algorithm with forgetting factors~\cite{aastrom1977theory}. While adaptive control has been extensively studied in the stochastic or time-invariant regime~\cite{guo1995performance,ljung1990adaptation}, results for non-stochastic, time-varying systems are relatively scarce. By studying the problem through the modern lens of online learning and non-stochastic control~\cite{hazan2022introduction}, we can provide guarantees for new classes of time-varying systems.

Most works on adaptive control follow one of the two possible paradigms: i) \emph{direct} control and ii) \emph{indirect} control, mirroring model-free and model-based Reinforcement Learning. In the former case, adaptation occurs directly at the control policy; in the latter case, a model of the uncertainty is kept at all times and the control policy is updated indirectly based on the current model. 
A notable benefit of indirect architectures is that they decouple control from learning, making it easier to incorporate changes in the control objective or robustness specifications.
In the case of non-stochastic control, learning the optimal feedforward law~\cite{agarwal2019online,foster2020logarithmic} fits the direct paradigm. Prior works that use predictions~\cite{li2019online,zhang2021regret} are closer to the indirect paradigm but they either do not provide a prediction method or the regret guarantees are suboptimal in the case of targets with dynamic structure.

\subsection{Contribution}

In this work, we propose a tracking control algorithm, called \rev{Predictive Linear Online Tracking} (PLOT).  
It adapts to unknown targets online using sequentially measured target data, based on the indirect paradigm.  
PLOT uses a modified version of the RLS algorithm with forgetting factors~\cite{yuan2020trading} to learn the target dynamics and predict future target states. Then, it computes a receding horizon control input using the predicted target states in a certainty equivalent fashion. To characterize closed-loop performance, we use the notion of dynamic regret~\cite{zinkevich2003online}. 
In our setting, dynamic regret compares the incurred closed-loop cost of any online algorithm against the cost incurred by the optimal control actions in hindsight, that is, the actions generated by the optimal non-causal policy that has full knowledge about future target states. 
Our contributions are the following.

\paragraph{Dynamic regret for online tracking.} 
We prove that the dynamic regret of PLOT is upper-bounded by $\mathcal{O}(\sqrt{TV_T})$, where $V_T$ is the total variation of the target dynamics $S_t$ (path length). When the target dynamics are static (zero variation), the regret becomes logarithmic and we recover prior results~\cite{foster2020logarithmic}. While prior dynamic regret bounds exist for linear quadratic control, they either focus on direct online control~\cite{baby2022optimal}, where we learn over a directly parameterized disturbance affine feedback policy, or they assume no structure for the targets~\cite{li2019online,karapetyan2023online}. Instead, we focus on indirect online control, where we learn an internal model representation for the unknown target. By setting up multiple step ahead predictors, we have more flexible and frequent feedback, thus avoiding extra logarithmic factors due to delayed learning feedback as in~\cite{baby2022optimal,foster2020logarithmic}. 
Using the total variation of the target \emph{dynamics} $S_t$ to characterize dynamic regret is a new point of view, improving prior bounds that use the variation of the target state $r_t$ itself~\cite{li2019online}.

\paragraph{Prediction of time-varying partially observed systems.} We employ the RLS algorithm to predict target states multiple time steps into the future.  Our result is of independent interest since it applies to time-varying AR systems or systems with exogenous inputs (ARX), which is a common model in system identification. We obtain dynamic regret guarantees for prediction by adapting the analysis of~\citet{yuan2020trading} to the setting of learning with delayed feedback~\cite{joulani2013online}. This result can be seen as a generalization of time-invariant Kalman filtering to the non-stochastic, time-varying, multi-step ahead case. 

\paragraph{Experimental demonstration.} 
While the theory of online non-stochastic control has matured over the past years~\cite{hazan2022introduction}, its practice is still lagging, with only \rev{a few} notable experimental demonstrations on hardware~\cite{suo2021machine,snyder2023online}. In this work, we perform extensive \rev{simulations showing how the regret analysis can be used as a tool to tune the controller hyperparameters, provide comparisons of several online control methods in simulations, and implement PLOT on a hardware setup}. To facilitate benchmarking and future developments, we make our code open-source and implement it on Crazyflie drones with a reconfigurable software architecture built on the Robot Operating System (ROS). To the best of our knowledge, our work is one of the first to demonstrate the use of online non-stochastic control \rev{with guarantees on a real quadrotor}.

\subsection{Organization and Notation} The rest of the paper is organized as follows. Section~\ref{sec:problem statement} states the assumptions 
and defines the control objective. The proposed algorithm, PLOT, is presented in Section~\ref{sec:algorithm}, and its dynamic regret is analyzed in Section~\ref{sec:analysis}. The experiment results in simulation and on hardware are presented in Section~\ref{sec:case_study}. Section~\ref{sec:conclusion} provides concluding remarks and future directions. Additional background material, detailed proofs, and further implementation and experimental details can be found in the Appendix. 

For a matrix $M$, $\| M \|$ denotes the $\ell_2$ induced  operator norm, while $\snorm{M}_F$ denotes the Frobenius norm. For positive definite $P\succ 0$, we define the weighted Frobenius norm as $\lVert M \rVert_{F,P} = \sqrt{\tr({MPM^\top})}$, where $\tr(\cdot)$ denotes the trace. For a vector $x\in\mathbb{R}^n$, $\|x\|$ denotes the Euclidean norm and $\|x\|_P = \sqrt{x^T P x}$ denotes the weighted $P$ norm for $P\succ 0$.  We use $x_{1:t}:=\{x_1, ..., x_{t}\}$ as a sequence of variables spanning from time step 1 to t. Given $n \in \mathbb{N}^+$, the identity matrix of dimension $n$ is defined by $\mathbf{I}_n$.

\section{Problem Statement}
\label{sec:problem statement}

Consider the LQT problem~\eqref{eq:LQT}. Let the target state $r_t$ be sequentially revealed. The following events happen in sequence at every time step $t$. i) The current state of the system $x_t$ and the target $r_t$ are received and the current tracking cost is incurred; ii) The controller applies an action $u_t$; iii) The system and the target evolve according to~\eqref{eq:system},~\eqref{eq:exosystem} respectively.
We assume the initial value $z_{-1}$ to be known.

Many fundamental targets can be captured by dynamics~\eqref{eq:exosystem}. For example, a target moving on a circle or a straight line with a constant speed can be captured by linear time-invariant dynamics--see Appendix~\ref{app_sec:trajectory_examples}. By considering time-varying dynamics, we can model more complex target trajectories, e.g., combinations of lines, circles, switching patterns, etc. \rev{Our results can be generalized directly to targets driven by exogenous variables (ARX) $\textstyle{r_{t+1}=S_{t+1}z_t+v_{t+1}}$--it is sufficient to redefine $\tilde{z}_t=\matr{z^\top_t&1}^\top$ and $\tilde{S}_{t}\triangleq \matr{S_{t} &v_t}$. We can thus capture even richer targets, including other control systems.
In fact, \emph{any} arbitrary bounded adversarial target can be captured by such a model; it is sufficient to take $S_{t+1}=0$ and $v_{t+1} = r_{t+1}$. We note that the above representation is not unique--see also Appendix~\ref{app_sec:trajectory_examples}. } 

We remark that the LQT problem is equivalent to the Linear Quadratic Regulator (LQR) with disturbances--see Appendix~\ref{app_sec:LQT}. Hence, the techniques applied here could also be applied to noisy system dynamics~\eqref{eq:system}, where we attach a dynamic structure to the disturbances.

By applying the one step ahead model~\eqref{eq:exosystem} multiple times, we obtain $k$-step ahead recursions of the form
\begin{equation}\label{eq:exosystem_multi_step}
r_{t+k}=S_{t+k|t}z_t,
\end{equation}
where $k\in\mathbb{N}$ is the number of future steps. The $k-$step ahead matrix $S_{t+k|t}$ is a nonlinear function (multinomial) of $S_{t+k},\dots,S_{t+1}$. 
Expressions for the multi-step ahead recursions~\eqref{eq:exosystem_multi_step} can be found in Appendix~\ref{app_sec:multi_step_dynamics}. By definition $S_{t+1|t}\equiv S_{t+1}$.
The notation $S_{t+k|t}$ indicates that given all information $z_t$ at time $t$, the product $S_{t+k|t}z_t$ acts as a $k-$step ahead prediction of $r_{t+k}$ at time $t$. 

To ensure that the online tracking problem is well-defined, we consider the following boundedness assumptions on the target dynamics and trajectory.  
\begin{assum}[Bounded Signals]
\label{assumption:boundedness}
  The target state is bounded. For some $D_r\ge 0$, we have $\snorm{r_t}\leq D_r$
  , for $t=1,...,T$.
\end{assum}
\begin{assum}[Stability]
\label{assumption:stability} Let $\mathcal{S}\triangleq\{S\in\mathbb{R}^{n\times pn}\snorm{S}\le M\}$, for some $M\ge 0$.
    The $k-$step ahead dynamics are uniformly bounded $\snorm{ S_{t+k|t}} \in\mathcal{S}$, for all $t=0,\dots,T,\,k\le T-t$. 
\end{assum}
Such boundedness conditions reflect typical assumptions in online learning~\cite{yuan2020trading}. The second condition allows general $S_k$ as long as the multi-step ahead matrices do not blow up.

\subsection{Control Objective}
Our goal is to design an online controller that adapts to the unknown target dynamics. To evaluate \rev{the} online performance, we use dynamic regret, which compares the incurred cost of the online controller to the \emph{optimal} controller in hindsight that knows all future target states in advance (called simply the \textit{optimal controller}). 
For a target state realization $\textbf{r}:=r_{0:T}$, we denote the cumulative cost by
\begin{equation*}
    J_T(u_{0:T-1};\textbf{r})\!\triangleq\!\sum_{t=0}^{T-1}\!\left(\lVert x_t\!-\!r_t\rVert^2_Q\!+\!\lVert u_t\rVert^2_R\right) + \lVert x_T-r_T\rVert_{Q_T}^2.
\end{equation*}
The optimal controller minimizes the cumulative cost, given knowledge of the whole target realization 
\begin{align*}
    u^*_{0:T-1} = \mathrm{argmin}_{u_{0:T-1}}J_T(u_{0:T-1};\textbf{r}).
\end{align*}
On the contrary, for an online policy $\pi$, the input $u_t$ is a function of the current state $x_t$ and the target states only up to time step $t$, so that  
\begin{align*}
    u^{\pi}_t= \pi_t(x_t;r_{1:t}),\quad \forall t=0,...,T-1,
\end{align*} 
where the notation $u^{\pi}$ denotes the control input under the policy $\pi$. In other words, the optimal controller $u^*_t$ is \emph{non-causal} while the online controller $u^{\pi}_t$ is \emph{causal}.

The dynamic regret is given by the cumulative difference between the cost achieved by the online causal policy and the cost achieved by the non-causal optimal controller
\begin{equation}
   \mathcal{R}(\pi) = J_T(u_{1:T-1}^{\pi};\textbf{r}) - J_T(u_{1:T-1}^*;\textbf{r}),
\end{equation}
\rev{defined for a target state realization $\textbf{r}$}. We can now summarize the main objective of the paper.

\begin{mdframed}[roundcorner=3pt, backgroundcolor=blue!10]
\begin{problem}[Dynamic Regret]
Design an online controller $\pi$ that adapts to the unknown target states $r_t$, and characterize the dynamic regret $\mathcal{R}(\pi)$. 
\end{problem}
\end{mdframed}

To make sure that the control problem is well-defined, we introduce the following assumption which is standard in the control literature~\cite{zhang2021regret}. It guarantees that the LQT controller will be (internally) stable. 
\begin{assum}[Well-posed LQT]\label{ass:LQT_stability}
    The pair $(A,B)$ is stabilizable and 
    $Q$, $R$ are symmetric positive definite.
\end{assum}
We assume the terminal cost matrix $Q_T$ is chosen as the solution $X$ of the Discrete Algebraic Riccati Equation.

\begin{assum}[Terminal Cost]\label{ass:terminal}
    We select $Q_T=X$, where $X$ is the unique solution to Riccati equation
    \begin{equation}
    \label{eq:P*}
    X =Q+A^\top XA-A^\top XB(R+B^\top X B)^{-1}B^\top X A.
\end{equation}
\end{assum}
Assumption~\ref{ass:terminal} is only introduced to streamline the presentation and it is not restrictive. If we select a different terminal cost, then the effect on the optimal total cost~\eqref{eq:LQT} will be negligible (will only differ by a constant). 

\subsection{LQT Optimal Controller}
Having full access to future target states, the optimal controller admits the following analytical solution \cite{foster2020logarithmic, goel2022power}
\begin{equation}
\label{eq:optimal action}
    u_t^*(x_t) = -K(x_t-r_t) - \underbrace{\sum_{i=t}^{T-1}K_{i-t}(Ar_{i}-r_{i+1})}_{q_t(r_{t:T})},
\end{equation}
where the feedback and feedforward matrices 
are given by
\begin{align}
\label{eq:K*}
    K&=(R+B^\top XB)^{-1}B^\top XA, \\
\label{eq:Kd}
    K_{t}&=(R+B^\top X B)^{-1}B^\top(A-BK)^{\top,t}X,\; t\ge 0,
\end{align}
with $X$ defined in~\eqref{eq:P*}.
Since the dynamics~\eqref{eq:system} are known, all gain matrices $K,K_t,\,t\ge 0$ are known. The feedback term $-K(x_t-r_t)$ can be computed since the error $x_t-r_t$ is measured before choosing the action. The only unknown is the feedforward term $q_t(r_{t:T})$.  Note that the \emph{functional form} of the feedforward term is known. The missing piece of information is the future target states.
Following the paradigm of indirect control, we will use this structural observation to decompose the problem of online tracking into \rev{one of} online prediction and control design.

A notable property of LQT is that, under Assumption~\ref{ass:LQT_stability},
matrix $(A-BK)$ has all eigenvalues inside the unit circle. 
Therefore, there exist $c_0>0$ and $\rho\in (0,1)$ such that
\begin{equation}\label{eq:LQT_gains_exponenitally_decaying}
    \lVert K_{k}\rVert \leq c_0\rho^{k} ,\text{ for all }k\ge 0.
\end{equation}
Upper bounds for $c_0,\rho$ as well as a relaxed version of Assumption~\ref{ass:LQT_stability} can be found in Appendix~\ref{app_sec:LQT}. Interestingly, the above property implies that future target states get discounted when considering the current action; we only need to know accurately the imminent target states.

\section{Predictive Linear Online Tracking}
\label{sec:algorithm}
Our proposed online tracking algorithm can be decoupled into two steps at every time step $t$: i) predicting the future target states for up to $W-$steps into the future, for some prediction horizon $W$; ii) computing the current online control action based on the certainty equivalence principle and the receding horizon control framework.

\paragraph{Target Prediction.}
For the prediction step, we employ the RLS algorithm with forgetting factors. 
At every time step $t$, the RLS algorithm provides predictions $r_{t+1|t},\dots,r_{t+W|t}$ of the future target states $r_{t+1},\dots,r_{t+W}$, for some horizon $W>0$. \rev{For every step ahead prediction, we keep $W$ separate predictors $\hat S_{t+1|t}$,..., $\hat S_{t+W|t}$ without exploiting the shared structure. This parameterization leads to a higher-dimensional but convex problem. In contrast, learning over matrices $S_{t+W},\dots,S_{t+1}$  leads to non-convex optimization problems since the multi-step ahead predictors are products of entries in $S_{t+W},\dots,S_{t+1}$--see also Remark~\ref{rem:improper_learning}.}

When predicting the target state $r_{t+k|t}$ for $k$ steps ahead, the true value $r_{t+k}$ is revealed after $k$ steps 
into the future. To deal with this \emph{delayed feedback} issue we follow the approach of~\citet{joulani2013online}. 
\rev{Within each $k-$step ahead predictor, we maintain $k$ \rev{copies} $\{\hat S_{t+k-j|t-j}\}^{k-1}_{j=0}$\rev--we refer to them as learners--which are updated independently at non-overlapping time steps.}
Hence, the estimate $\hat{S}_{t+k|t}$ is updated based only on pairs of $(r_{t-ik},z_{t-(i+1)k})$, for $i=0,1,2,\dots$.
At each time step $t$, only one of the $k$ independent \rev{learners} is invoked, which aims to minimize the following 
prediction error $\forall t>1$:
\begin{equation}\label{eq:RLS_batch}
\begin{aligned}
    \min\limits_{S\in\mathcal{S}}&\sum_{i=0}^{\lfloor (t+1)/k \rfloor-1}\gamma^{i} f_{t-ik,k}(S), \\
    &f_{\tau,k}(S)\triangleq  \lVert r_{\tau} - Sz_{\tau-k}\rVert^2
    \end{aligned}
\end{equation}
where $\gamma\in(0, 1]$ is the forgetting factor that attributes higher weights to more recent data points. Recall that $\mathcal{S}$ is the set in which the dynamics lie--see Assumption~\ref{assumption:stability}. Note that each \rev{independent learner is updated based on} at most $\lceil T/k \rceil$ data points.  
Instead of solving~\eqref{eq:RLS_batch}, we opt for a recursive implementation with projections adapted from~\citet{yuan2020trading} and can be found in Algorithm~\ref{alg:RLS}.

\begin{algorithm}[hb!]
\caption{RLS for $k$-step-ahead prediction}
\label{alg:RLS}
\begin{algorithmic}[1]
\REQUIRE{Forgetting factor $\gamma \in (0, 1)$, Regularizer \rev{$\varepsilon>0$.}}
 \STATE Initialize $k$ learners $\hat{S}_{j+k|j}\in\mathcal{S}$, $j=-1,...,k-2$.
 \STATE Initialize $P_{j|j-k} = \varepsilon \mathbf{I}$ with $j=-1,...,k-2$.
 \FOR{$t = k-1,\dots, T$}
  \STATE Predict $r_{t|t-k}=\hat{S}_{t|t-k}z_{t-k}$.
  \STATE Receive true target state $r_t$; Incur loss $f_{t,k}(\hat{S}_{t|t-k})$
  \STATE Update $P_{t|t-k} = \gamma P_{t-k|t-2k} + z_{t-k}z_{t-k}^T$;\label{code:RLSupdateP}
    \STATE $\hat{S}^*_{t+k|t} = \hat{S}_{t|t-k} + (r_{t} - \hat{S}_{t|t-k} z_{t-k})z_{t-k}^\top P_{t|t-k}^{-1}$. \label{code:RLSupdate} 
    \STATE Project $\hat{S}_{t+k|t} = \Pi^{P_{t|t-k}}_{\mathcal{S}}(\hat{S}^*_{t+k|t})$,\\
    where $\Pi^{P_{t|t-k}}_{\mathcal{S}}(Y)\triangleq \arg\min_{S\in\mathcal{S}} \snorm{S-Y}_{F,P_{t|t-k}}$.
 \ENDFOR
\end{algorithmic}
\end{algorithm}

Since we need $k$ \rev{learners} for every $k-$step ahead predictor, we have a total number of $W(W+1)/2$ \rev{learners}. \rev{However, only $W$ of them are actively updated at every time step (one active learner per predictor). }
An visualization of the learning architecture for $k=1,2,3$ can be found in \cref{fig:schedule} in \cref{app_sec:multi_step_pred_proof}.

Different from direct approaches~\cite{foster2020logarithmic,baby2022optimal}, where the learning feedback is delayed by $W$ steps, here the learning feedback delay adapts to the $k-$step ahead horizon offering more flexibility. For example, the feedback for the one-step ahead predictor is always available without waiting $W$ steps. A downside is that we update more predictors.
\begin{remark}[Improper Learning versus Single Learner]\label{rem:improper_learning}
 \rev{  To obtain the multi-step ahead predictions $\hat r_{t+k|t}$, $k=1,\dots,W$, we keep $W$ \emph{separate} predictors $\hat S_{t+1|t}$,..., $\hat S_{t+W|t}$ without exploiting the shared structure, i.e., we follow improper learning. By following this choice, we convexify the problem, potentially at the expense of higher sample complexity. An alternative, ``naive" approach is to estimate the one-step ahead predictor $\hat{S}_{t+1|t}$, using a single learner. Then, to predict $k$ steps ahead, e.g. $\hat r_{t+k|t}$, we could propagate the target states through $\hat S_{t+1}$, for $k$ time steps. However, propagating through multiple time steps can lead to instability. For example, if $z_t=r_t$ (namely the AR memory length $p=1$), we will have $\hat r_{t+k|t}=\hat S^k_{t+1}r_t$. Due to the exponent $k$, this could lead to a rapid accumulation of errors. Enforcing the boundedness constraint of Assumption 2.2 directly on $\hat S^k_{t+1}$ would not work either since it leads to a non-convex problem. A detailed comparison of PLOT and this ``naive" approach in simulation can be found in Appendix~\ref{app_sec:naive_RLS}}

\end{remark}

\paragraph{Receding Horizon Control.}
In the receding horizon control framework, instead of looking over the \rev{total time} $T$ as in~\eqref{eq:LQT}, we only look over a window of length $W$ and solve the following optimal control problem
\begin{equation}
	\begin{aligned}
     \min_{u_{t:t+W-1}}& \sum_{k=0}^{W-1}\left(\lVert x_{t+k}-r_{t+k|t}\rVert_Q^2 + \lVert u_{t+k}\rVert_R^2\right) \\
     &\qquad+ \lVert x_{t+W}-r_{t+W|t}\rVert_P^2 \\
    \text{s.t.} \quad & x_{t+k+1} = Ax_{t+k} + Bu_{t+k},\; k\le W-1,
	\end{aligned},
\label{eq:mpc formulation}
\end{equation}
where $r_{t+k|t}$ are the predictions of the RLS algorithm. We replace the true target states of~\eqref{eq:LQT} with their prediction in~\eqref{eq:mpc formulation} according to the certainty equivalence principle. The receding horizon control policy that minimizes~\eqref{eq:mpc formulation} can be computed in closed form as
\begin{equation}
\label{eq:mpc action}
    u_t^{\pi}(x_t) \!=\! -K (x_t-r_t)- \!\sum_{i=t}^{t+W-1}\!K_{i-t}(Ar_{i|t}-r_{i+1|t}).
\end{equation}
The functional form of the receding horizon control is similar to the one of the optimal controller in~\eqref{eq:optimal action}. The main difference is that the feedforward term is applied to the predicted disturbances, instead of the actual ones.

\begin{algorithm}[hbt!]
\caption{PLOT: Predictive Linear Online Tracking }\label{alg:mpc}
\begin{algorithmic}[1]
\REQUIRE{Horizon $W$, forgetting factor $\gamma \in (0, 1)$}
\STATE Compute $X, K, \{K_0, ..., K_{W-1}\}$ as in \eqref{eq:P*}, \eqref{eq:K*}, \eqref{eq:Kd}.
\STATE Initialize RLS \rev{predictors} according to the Algorithm \ref{alg:RLS}, for $k=1,...,W$, respectively.\label{code:MPCinit}
\FOR{t = 0, ..., $T-1$}
  \STATE Observe system state $x_t$, target state $r_t$. \label{code:MPCmeasurement}
  \FOR{$k = 1, ..., W$}
    \IF{$t+k \leq T$}
        \STATE Update the $k$-step-ahead \rev{predictor} and predict $r_{t+k|t}$ according to the Algorithm~\ref{alg:RLS}.
    \ELSE
        \STATE Set $r_{t+k-1|t} = 0$.\label{code:MPCcorner}
    \ENDIF
  \ENDFOR
  \STATE Compute $u_t^{\pi}$ as in~\eqref{eq:mpc action}
 \ENDFOR
\end{algorithmic}
\end{algorithm}

The main part of our proposed algorithm, Predictive Linear Online Tracking (PLOT), is described in Algorithm~\ref{alg:mpc}. We initialize $W$ \rev{RLS predictors} (Algorithm~\ref{alg:RLS}) to estimate the $k$-step-ahead dynamics $S_{t+k|t}$ for $k=1,...,W$ independently. 
At each time step, the algorithm receives the latest target and state measurements, updates the RLS predictors, and makes new predictions. Then, it computes the receding horizon control input given the predictions and the state information. The whole procedure is iterated over all time steps up to the \rev{total time} $T$.
Note that if the prediction time $k+t$ exceeds the  \rev{total time} $T$, then we set the predicted target to zero; as seen by~\eqref{eq:optimal action} only the target states up to time $T$ affect the control problem.

\section{Dynamic Regret and Tuning}
\label{sec:analysis}
To characterize the performance of the PLOT algorithm, we
provide dynamic regret bounds in terms of the total variation of the target dynamics, which is defined as
\begin{equation}\label{eq:path_length_dynamics}
   V_T\triangleq  \sum_{t=1}^{T}\lVert S_t - S_{t-1}\rVert_F.
\end{equation}
For arbitrary $\gamma\in(0,1)$, we have the following guarantees.
\begin{thm}[Dynamic Regret]
\label{thm:main}
Select a prediction horizon $W$ and a forgetting factor $\gamma\in(0, 1)$. Let $\rho$ be the decay rate of the LQT gains as in~\eqref{eq:LQT_gains_exponenitally_decaying} and let $\tilde{W}=\min\{(1-\rho)^{-1},W\}$. The dynamic regret of the PLOT policy, as given by Algorithm \ref{alg:mpc}, is upper bounded by
\begin{align*}
    &\mathcal{R}(\pi) \leq  \alpha_1\rho^{2W}(1-\rho)^{-2}T+\alpha_2 \tilde{W}^4 V_T(1-\gamma)^{-1}\\
   &\quad - \alpha_3 \tilde{W}^2 (T+1)\log \gamma- \alpha_4 \tilde{W}^3 \log(1-\gamma)+\alpha_5  \tilde{W}^3,
\end{align*}
where 
$\alpha_1,\alpha_2,\alpha_3,\alpha_4$ (given in~\eqref{eq:alpha_coeffs}) are positive constants related to system-specific constants, the state dimension $n$, and the target memory $p$.
\end{thm}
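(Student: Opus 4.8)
The plan is to reduce the dynamic regret to a sum of squared feedforward errors and then control those errors by separating a deterministic truncation part from a stochastic-free prediction part. Both the optimal and the PLOT policies share the identical feedback term $-K(x_t-r_t)$ and differ only in their feedforward part; writing each input as $u_t=-Kx_t+v_t$, the closed loop becomes $x_{t+1}=(A-BK)x_t+Bv_t$, which is exponentially stable by Assumption~\ref{ass:LQT_stability}. Since the cumulative cost $J_T$ is a convex quadratic in the feedforward sequence $v_{0:T-1}$ (there is a bijection between feedforward and input sequences given the feedback law and initial state), and $v^*$ is its minimizer, the first-order term vanishes and a second-order performance-difference expansion gives $\mathcal{R}(\pi)\le C\sum_{t=0}^{T-1}\snorm{v_t^\pi-v_t^*}^2$, where $C$ is bounded uniformly in $T$ because $(A-BK)$ is stable. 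By~\eqref{eq:optimal action} and~\eqref{eq:mpc action}, $v_t^\pi-v_t^*=q_t^*-q_t^\pi$, so it suffices to bound $\sum_t\snorm{q_t^*-q_t^\pi}^2$.

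The next step is to split this feedforward error into a truncation part and a prediction part. Writing $\delta_{i|t}\triangleq r_i-r_{i|t}$ for the prediction error,
\begin{align*}
q_t^*-q_t^\pi &= \sum_{i=t+W}^{T-1}K_{i-t}(Ar_i-r_{i+1})\\
&\quad + \sum_{i=t}^{t+W-1}K_{i-t}\big(A\delta_{i|t}-\delta_{i+1|t}\big),
\end{align*}
where the first sum is the truncation term and the second is the prediction term. For the truncation term I would use the exponential decay $\snorm{K_k}\le c_0\rho^k$ from~\eqref{eq:LQT_gains_exponenitally_decaying} together with $\snorm{r_t}\le D_r$ (Assumption~\ref{assumption:boundedness}) to obtain $\snorm{\text{truncation}}\le C\rho^W(1-\rho)^{-1}$; squaring and summing over $t$ yields the first term $\alpha_1\rho^{2W}(1-\rho)^{-2}T$. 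For the prediction term, applying Cauchy--Schwarz against the geometric weights $\rho^{i-t}$ produces one factor of the effective horizon $\tilde W=\min\{(1-\rho)^{-1},W\}$, and swapping the order of summation reduces the bound to $C\tilde W\sum_{k=1}^{W}\rho^k\sum_t\snorm{r_{t+k}-r_{t+k|t}}^2$, a geometrically weighted sum of the cumulative $k$-step prediction regrets.

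The technical core is bounding the cumulative prediction error $\sum_t\snorm{r_{t+k}-r_{t+k|t}}^2$ of each $k$-step predictor. Here I would invoke the prediction guarantee advertised as the paper's second contribution, obtained by adapting the forgetting-factor RLS dynamic-regret analysis of~\citet{yuan2020trading} to the delayed, interleaved scheduling of~\citet{joulani2013online}. Because the $k$-step predictor is split into $k$ learners updated on non-overlapping subsequences of at most $\lceil T/k\rceil$ points, the per-learner regret contributes a tracking term proportional to $(-\log\gamma)$ times the sample count, a regularization term proportional to $-\log(1-\gamma)$, and a comparator-drift term proportional to the path length of the $k$-step dynamics $S_{t+k|t}$ divided by $(1-\gamma)$. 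A separate lemma would bound this path length: under Assumption~\ref{assumption:stability} the matrix $S_{t+k|t}$ is a bounded multinomial of $S_{t+1},\dots,S_{t+k}$, so telescoping the increment $\snorm{S_{t+k|t}-S_{t+k-1|t-1}}_F$ against bounded factors controls it by a sum of the constituent one-step variations, giving cumulative path length of order $k\,V_T$.

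Substituting these per-$k$ bounds and summing the geometric weights $\rho^k$ (each of $\sum_k\rho^k$, $\sum_k k\rho^k$ contributing further factors of order $\tilde W$) produces the remaining terms, with the exponents of $\tilde W$ arising from the compounding of the outer Cauchy--Schwarz factor, the weighted sum over predictors, the path-length amplification of the multi-step dynamics, and the per-learner sample accounting: the drift term accrues $\tilde W^4 V_T(1-\gamma)^{-1}$, the regularization terms scale as $\tilde W^3$, and the tracking term as $\tilde W^2(T+1)(-\log\gamma)$. The hardest and most delicate step is precisely this prediction analysis --- establishing the forgetting-factor RLS dynamic-regret bound under delayed, interleaved updates and tracking the exact dependence on $\tilde W$ through the multi-step path-length amplification, which is what pins down the final exponents $\tilde W^4,\tilde W^3,\tilde W^2$. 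The performance-difference reduction and the truncation estimate are comparatively routine once the stability constants $c_0,\rho$ are fixed.
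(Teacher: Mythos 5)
Your overall architecture matches the paper's proof: a performance-difference reduction to squared feedforward errors (the paper invokes the exact performance difference lemma, $\mathcal{R}(\pi)=\sum_{t}\snorm{\hat q_t-q_t}^2_{R+B^\top XB}$, while you derive an upper bound with a uniform constant from convexity of $J_T$ in the feedforward sequence --- both are valid, yours just loses the exact per-step weight), the same truncation/prediction split, the same per-$k$ RLS-with-forgetting regret bound under the interleaved delayed-feedback scheme of \citet{joulani2013online}, and the same geometric resummation. Your unregrouped treatment of the prediction term, bounding $K_{i-t}(A\delta_{i|t}-\delta_{i+1|t})$ directly by Cauchy--Schwarz against geometric weights, also works (with constants picking up $\snorm{A}$), whereas the paper collects coefficients into $L_i=K_iA-K_{i-1}=-\Sigma^{-1}B^\top(A-BK)^{\top,i-1}Q$ with $\Sigma=R+B^\top XB$ to get the cleaner decay $\snorm{L_i}\le c_0\rho^{i-1}$; this difference is cosmetic.

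The genuine flaw is your path-length amplification claim. You assert the cumulative path length of the $k$-step dynamics is ``of order $k\,V_T$,'' obtained by telescoping the shift-one increment $\snorm{S_{t+k|t}-S_{t+k-1|t-1}}_F$. But the quantity the RLS analysis needs is the shift-$k$ variation $V^k_T=\sum_t\snorm{S_{t+k|t}-S_{t|t-k}}$: learner $i$ of the $k$-step predictor is updated on a stride-$k$ subsequence, so its comparator drift compares matrices that are $k$ steps apart, not one. Converting each shift-$k$ increment into shift-one increments via the triangle inequality costs one factor of $k$, and bounding each shift-one increment of a multi-step matrix by a window of $k$ one-step variations (each one-step variation being counted up to $k$ times once you sum over $t$) costs another; the correct bound is the paper's Lemma~\ref{app_lem:perturbation}, $V^k_T\le\sqrt p\,k^2\tilde M^2V_T$, and the quadratic dependence is tight: for $p=1$ scalar dynamics with a single jump from $S=1$ to $S'=1-\epsilon$ one has $V_T=\epsilon$ but $V^k_T\sim k^2\epsilon$. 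Your own bookkeeping exposes the inconsistency: with drift proportional to $kV_T$ per predictor, $\sum_k\rho^{k-1}k\lesssim\tilde W^2$ times your outer Cauchy--Schwarz factor $\tilde W$ would give a drift term $\tilde W^3V_T(1-\gamma)^{-1}$, not the $\tilde W^4V_T(1-\gamma)^{-1}$ that you state and the theorem asserts. The exponent $4$ only emerges from the $k^2$ amplification, via $\sum_k\rho^{k-1}k^2\lesssim\tilde W^3$ multiplied by the outer $\tilde W$. So the proof closes only after replacing your path-length step with the correct quadratic-in-$k$ lemma.
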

The first term in the regret bound captures the truncation effect, i.e., the fact that we only use $W$-step (instead of $T$) ahead predictions in~\eqref{eq:mpc formulation}. The other terms capture the effect of prediction errors on the control performance. Note that there is a tradeoff between large and small prediction horizons. Smaller prediction horizons lead to better prediction performance but worse truncation error and, conversely, larger prediction horizons improve the truncation term but degrade the prediction performance. 
Nonetheless, if we increase the prediction horizon $W$ past the threshold of $(1-\rho)^{-1}$, then the regret guarantees stop degrading\rev{, thus improving prior work~\cite{foster2020logarithmic,baby2022optimal}}. Essentially, $\tilde{W}$ can be thought of as the ``effective" prediction horizon; beyond $\tilde{W}$ any prediction errors have negligible effect. \rev{This is a consequence of deploying multiple step-ahead predictors of varying delays} and the exponentially decaying LQT gains,

Assume that $\gamma$ is close to $1$. Then $-\log\gamma\approx 1-\gamma$ and the dominant regret terms are the second and third terms, $V(1-\gamma)^{-1}$ and $T(1-\gamma)$ respectively. By balancing these two terms, we obtain the following interpretable rate, similar to~\citet{yuan2020trading}. 
\begin{cor}[Tuning]
\label{corollary}%
Select prediction horizon $W=-\frac{\log T}{2\log\rho}$ and forgetting factor $\gamma=1-\sqrt{\frac{\max\{V_T,\log^2T/T\}}{4MT}}$, where $\rho$ is the decay rate of the LQT gains as in~\eqref{eq:LQT_gains_exponenitally_decaying}. Then, the dynamic regret of the PLOT policy is upper-bounded by
\begin{equation*}
    \mathcal{R}(\pi) = \max\{\mathcal{O}(\sqrt{TV_T}), \mathcal{O}(\log T)\}.
\end{equation*}
\end{cor}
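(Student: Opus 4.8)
The plan is to substitute the prescribed horizon $W=-\tfrac{\log T}{2\log\rho}$ and forgetting factor $\gamma=1-\sqrt{\max\{V_T,\log^2 T/T\}/(4MT)}$ directly into the five-term bound of \cref{thm:main} and show that each term collapses into either $\mathcal{O}(\sqrt{TV_T})$, $\mathcal{O}(\log T)$, or a lower-order constant. Before doing so I would check that the choices are admissible: since $\rho\in(0,1)$ we have $-\log\rho>0$, so $W>0$ is a valid positive horizon, and since $\max\{V_T,\log^2T/T\}/(4MT)\to 0$ we have $\gamma\in(0,1)$ with $\gamma\to 1$ for $T$ large, as the subsequent approximations require.

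First I would dispatch the truncation term. Writing $\rho^{2W}=\exp(2W\log\rho)$ and plugging in $W$ gives $2W\log\rho=-\log T$, hence $\rho^{2W}=1/T$, so the first term becomes $\alpha_1(1-\rho)^{-2}$, a constant independent of $T$. At the same time, because $W=\Theta(\log T)$ eventually exceeds the fixed constant $(1-\rho)^{-1}$, the effective horizon saturates: $\tilde{W}=\min\{(1-\rho)^{-1},W\}=(1-\rho)^{-1}=\mathcal{O}(1)$ for $T$ large. This is the key structural simplification, since it lets me treat every power of $\tilde{W}$ as an absolute constant and reduces the remaining analysis to the two genuinely $T$-dependent terms.

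Next I would balance the two dominant prediction terms, $\alpha_2\tilde{W}^4 V_T(1-\gamma)^{-1}$ and $-\alpha_3\tilde{W}^2(T+1)\log\gamma$. Using the elementary bounds $1-\gamma\le-\log\gamma\le(1-\gamma)/\gamma$, valid for $\gamma\in(0,1)$ and tight as $\gamma\to1$, the third term is $\Theta(T(1-\gamma))$ up to constants. Substituting $(1-\gamma)^{-1}=\sqrt{4MT/\max\{V_T,\log^2T/T\}}$ and $1-\gamma=\sqrt{\max\{V_T,\log^2T/T\}/(4MT)}$, and using $V_T\le\max\{V_T,\log^2T/T\}$ in the second term, shows both are $\mathcal{O}(\sqrt{T\max\{V_T,\log^2T/T\}})$; this is exactly the balancing that motivates the choice of $\gamma$. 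A two-case split then finishes it: if $V_T\ge \log^2T/T$ the max is $V_T$ and both terms are $\mathcal{O}(\sqrt{TV_T})$, while if $V_T<\log^2T/T$ the max is $\log^2T/T$ and both are $\mathcal{O}(\log T)$.

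It remains to check that the leftover terms are lower order. The term $-\alpha_4\tilde{W}^3\log(1-\gamma)$ equals $\tfrac{1}{2}\alpha_4\tilde{W}^3\log\big(4MT/\max\{V_T,\log^2T/T\}\big)$; bounding $\max\{V_T,\log^2T/T\}\ge\log^2T/T$ makes the argument at most $4MT^2/\log^2T$, so this term is $\mathcal{O}(\log T)$, while $\alpha_5\tilde{W}^3=\mathcal{O}(1)$. Collecting the five contributions yields $\mathcal{R}(\pi)=\max\{\mathcal{O}(\sqrt{TV_T}),\mathcal{O}(\log T)\}$. I expect the main obstacle to be technical rather than conceptual: converting between $-\log\gamma$ and $1-\gamma$ with inequalities that hold for finite $T$ (not merely asymptotically), and tracking the max-induced case split consistently across the second, third, and fourth terms so that the two regimes combine cleanly into the single stated rate.
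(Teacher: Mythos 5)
Your proposal is correct and takes essentially the same route as the paper's own proof: substitute the choices into Theorem~\ref{thm:main}, note $\rho^{2W}T=1$ and $\tilde{W}\le(1-\rho)^{-1}$ so the truncation and constant terms are $\mathcal{O}(1)$, bound the fourth term by $\mathcal{O}(\log T)$ because $(1-\gamma)^{-1}$ is polynomial in $T$, and balance the second and third terms via $\log\gamma\ge 1-1/\gamma$ together with a case split on $\max\{V_T,\log^2 T/T\}$. The finite-$T$ technicality you flag at the end is resolved in the paper exactly as one would expect: since $V_T\le 2MT$, the chosen forgetting factor satisfies $\gamma\ge 1-1/\sqrt{2}$, so $(1-\gamma)/\gamma\le\frac{\sqrt{2}}{\sqrt{2}-1}(1-\gamma)$ and the conversion between $-\log\gamma$ and $1-\gamma$ holds non-asymptotically.
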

To deal with the truncation term, it is sufficient to choose a prediction horizon $W$, which grows logarithmically with the \rev{total time} $T$, as also noted in~\citet{zhang2021regret}. The constant $4M$ in the denominator guarantees that the forgetting factor is positive. To tune the forgetting factor and obtain the rate, we require knowledge of the path length $V_T$, which is not always available. Following the procedure of~\citet{baby2022optimal}, we can overcome this limitation by initiating multiple learners at various time steps and by running a follow-the-leading-history (FLH) meta-algorithm on top,  which can also improve the dynamic regret bound to $T^{1/3}V_T^{2/3}$. We do not explore this possibility here, since we focus on studying the performance of the \rev{RLS algorithm with forgetting factors}. When the path length is close to 0, i.e. the target dynamics are almost static, we obtain logarithmic regret guarantees. In this case, the optimal policy can be rewritten as a static affine control law, with respect to \rev{the} past target states. The regret, in this case, is the static regret with respect to the best (static) affine policy, recovering the result of~\citet{foster2020logarithmic}. 

\begin{remark}[Path Length and Complexity] \label{rem:complexity}
  In certain works~\cite{li2019online, karapetyan2023online}, the dynamic regret is given with respect to the total variation of the target states themselves, that is, $L_T=\sum_{t=2}^{T-1}\lVert r_t - r_{t-1}\rVert$. 
  In contrast, here, we capture learning complexity by the total variation of the target dynamics $V_T$. We argue that the former notion of complexity can be suboptimal in the case of targets with dynamic structure, while ours is superior in this setting. For example, consider a target that is moving on a circle with constant velocity--see \cref{exmp:circular}. Then, the former path length is linear $L_T=\mathcal{O}(T)$ while the latter is zero $V_T=0$. \rev{In control applications, many targets of interest have dynamic structures, e.g. tracking other drones, tracking moving objects, etc. Nonetheless, if the targets are fully unstructured and arbitrary, PLOT might not perform as well since it tries to fit a dynamic model to arbitrary data--see Appendix~\ref{app_sec:random}.} \rev{Still, the regret order cannot be worse than the one in prior work if we are learning over ARX models--see discussion in Appendix~\ref{app_sec:trajectory_examples}. }  
\end{remark}
The full proof of Theorem~\ref{thm:main} and Corollary~\ref{corollary} can be found in Appendix~\ref{app_sec:control_proof}. In the following, we provide a sketch of the proof.
First, similar to \citet{foster2020logarithmic}, we invoke the ``performance difference lemma"~\cite{kakade2003sample} to turn the control problem into a prediction one \footnote{In our case, the optimal controller is the true minimizer of the cost, hence, the regret in this case becomes the advantage function of~\cite{foster2020logarithmic}.}.
\begin{lem}[Performance Difference Lemma~\cite{foster2020logarithmic}]
For any policy $\pi$ such that $u^{\pi}_t=-K(x_t-r_t)+\hat{q}_t$, where $K$ is given by~\eqref{eq:K*}, and the optimal action defined in \eqref{eq:optimal action}, the dynamic regret can be written as
\begin{equation*}
\label{lemma:mpc regret}
   \mathcal{R}(\pi) =\sum_{t=0}^{T-1}\lVert \hat{q}_t - q_t(r_{t:W})\rVert^2_{(R+B^TXB)}.
\end{equation*}
\end{lem}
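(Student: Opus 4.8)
The plan is to treat this as an instance of the standard LQR advantage/cost-decomposition identity, specialized to the tracking (LQT) setting. First I would pass to error/disturbance coordinates so that the LQT problem becomes an LQR problem with an exogenous affine term, as in the equivalence noted in Appendix~\ref{app_sec:LQT}. In these coordinates the optimal cost-to-go from time $t$ is an \emph{exact} quadratic $V_t^\star(x)=x^\top X x+2p_t^\top x+c_t$, whose Hessian is the DARE solution $X$ of Assumption~\ref{ass:terminal}: because the terminal penalty is chosen as $Q_T=X$, the backward Riccati recursion has $X$ as a fixed point, so $P_t\equiv X$ for all $t$ and the feedback gain $(R+B^\top P_{t+1}B)^{-1}B^\top P_{t+1}A\equiv K$ is time-invariant, matching~\eqref{eq:K*}. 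The linear and constant coefficients $p_t,c_t$ are determined by a backward recursion driven by the future targets $r_{t:T}$, and writing out the optimal action as a function of state recovers exactly $u_t^\star(x)=-K(x-r_t)-q_t(r_{t:T})$, i.e.~\eqref{eq:optimal action} with $x_t$ replaced by an arbitrary $x$.

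The key step is to compute the one-step state–action value $Q_t(x,u)=\lVert x-r_t\rVert_Q^2+\lVert u\rVert_R^2+V_{t+1}^\star(Ax+Bu)$ and complete the square in $u$. Because $X$ solves the DARE~\eqref{eq:P*}, every term quadratic or linear in $x$ cancels against $V_t^\star(x)$ and all state–input cross terms vanish, leaving the clean advantage identity
\[
A_t(x,u)\;\triangleq\;Q_t(x,u)-V_t^\star(x)\;=\;\lVert u-u_t^\star(x)\rVert_{R+B^\top X B}^2 .
\]
I expect this algebraic cancellation — performed once, at the level of the Bellman backup — to be the main obstacle, since it is precisely where the stationarity of the Riccati solution (Assumption~\ref{ass:terminal}) and the consistency of the linear/constant value-function coefficients $p_t,c_t$ with the definition of $q_t$ must line up exactly for no residual state-dependent term to survive.

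With the advantage identity in hand, I would run the telescoping (performance-difference) argument along the \emph{realized online trajectory} $\{x_t^\pi\}$. Using $V_{t+1}^\star(x_{t+1}^\pi)=Q_t(x_t^\pi,u_t^\pi)-\lVert x_t^\pi-r_t\rVert_Q^2-\lVert u_t^\pi\rVert_R^2$ and summing from $t=0$ to $T-1$, the intermediate $V_t^\star$ terms cancel; the telescope closes at the right end because $V_T^\star(x)=\lVert x-r_T\rVert_{Q_T}^2$ by Assumption~\ref{ass:terminal}, and at the left end the initial term equals $V_0^\star(x_0)=J_T(u_{0:T-1}^\star;\mathbf r)$ since the online and optimal trajectories share the same initial state $x_0$. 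This yields the cost decomposition $J_T(u^\pi;\mathbf r)-J_T(u^\star;\mathbf r)=\sum_{t=0}^{T-1}A_t(x_t^\pi,u_t^\pi)$, with each advantage nonnegative, as the footnote anticipates.

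Finally I would substitute the two controllers into the advantage identity. Both $u_t^\pi=-K(x_t^\pi-r_t)+\hat q_t$ and the optimal action evaluated at the \emph{same} online state, $u_t^\star(x_t^\pi)=-K(x_t^\pi-r_t)-q_t(r_{t:T})$, carry the identical feedback term $-K(x_t^\pi-r_t)$, so it cancels in the difference and only the feedforward discrepancy remains:
\[
A_t(x_t^\pi,u_t^\pi)=\lVert \hat q_t-q_t(r_{t:T})\rVert_{R+B^\top X B}^2 ,
\]
which is the claimed summand, up to the sign convention absorbed into the definition of the feedforward $q_t$ and reading its subscript as the full horizon. Summing over $t$ gives the lemma. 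The remaining bookkeeping — fixing signs, and noting that for PLOT specifically $\hat q_t=-\sum_{i=t}^{t+W-1}K_{i-t}(A r_{i|t}-r_{i+1|t})$ from~\eqref{eq:mpc action} — is routine.
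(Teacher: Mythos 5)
Your proof is correct, and it is essentially the canonical argument: the paper itself does not prove this lemma but invokes it from \cite{foster2020logarithmic}, and the proof there is exactly your chain of reasoning --- the stationary quadratic value function with Hessian $X$ (valid precisely because Assumption~\ref{ass:terminal} sets $Q_T=X$, freezing the Riccati recursion), the completion-of-squares advantage identity $Q_t(x,u)-V_t^\star(x)=\lVert u-u_t^\star(x)\rVert^2_{R+B^\top XB}$, and telescoping along the realized online trajectory. You also correctly navigate the two typos in the paper's statement: with the stated convention $u_t^{\pi}=-K(x_t-r_t)+\hat q_t$ the summand should literally read $\lVert \hat q_t + q_t(r_{t:T})\rVert^2_{R+B^\top XB}$ (the appendix instead uses $u_t^{\pi}=-K(x_t-r_t)-\hat q_t$, which restores the displayed minus sign), and the subscript $r_{t:W}$ should be $r_{t:T}$.
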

The above fundamental result shows that the control problem can be cast as a prediction problem, where we try to predict the affine part of the optimal control law.
Second, we bound the prediction error by analyzing the dynamic regret of the RLS algorithm following the steps of~\citet{yuan2020trading}. 
Define the prediction regret for $k-$step ahead prediction as
\[
 \mathcal{R}^{(k)}_{\mathrm{pred}}\triangleq\sum_{t=k-1}^T\snorm{\hat{S}_{t|t-k}z_{t-k}-r_t}^2. 
\]
Since the sequence $r_{0:T}$ satisfies~\eqref{eq:exosystem}, there is no subtracted term in the regret; matrices $S_{t|t-k}$ achieve zero error. In the Appendix, we prove regret bounds for the prediction problem that hold for non-realizable sequences as well.
\begin{thm}[Regret for AR system prediction]
\label{thm:ar_prediction_regret}
Define the total variation of the  $k$-step ahead dynamics as
$V^k_{T}\triangleq \sum_{t=k}^{T-k}\snorm{S_{t+k|t}-S_{t|t-k}}$. The dynamic regret of the RLS algorithm (Algorithm~\ref{alg:RLS}) for $k$-steps ahead prediction is upper bounded by
\begin{align*}
    &\mathcal{R}^{(k)}_{\mathrm{pred}} \le  \frac{\beta_1}{1-\gamma}V^k_T+\beta_2T\log \frac{1}{\gamma}+k\beta_3\log \frac{1}{1-\gamma}+k\beta_4,
\end{align*}
where 
$\beta_1,\beta_2,\beta_3,\beta_4$ are given in~\eqref{app_eq:betas} and are positive constants related to system-specific constants, the state dimension $n$, and the target memory $p$.
\end{thm}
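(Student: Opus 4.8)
The plan is to reduce the $k$-step ahead prediction regret to $k$ independent, undelayed online least-squares problems and then invoke a dynamic-regret bound for RLS with forgetting in the spirit of \citet{yuan2020trading}. First I would exploit the delayed-feedback structure of Algorithm~\ref{alg:RLS}: the $k$-step predictor maintains $k$ learners, each updated only on the non-overlapping subsequence of times sharing a common residue modulo $k$. Following the reduction of \citet{joulani2013online}, within each such subsequence the feedback is effectively immediate — the pair $(r_\tau,z_{\tau-k})$ used to update a learner becomes available exactly at that learner's next invocation — so the $k$-step delay disappears and each learner faces a standard undelayed least-squares problem. Because the target satisfies the multi-step recursion~\eqref{eq:exosystem_multi_step}, the comparator sequence $\{S_{t|t-k}\}$ attains zero loss $f_{t,k}(S_{t|t-k})=0$, so $\mathcal{R}^{(k)}_{\mathrm{pred}}$ equals the sum over the $k$ subsequences of the dynamic regret of each learner against this time-varying comparator.

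Next I would establish the per-learner bound. Each learner runs RLS with forgetting factor $\gamma$ and regularizer $\varepsilon$, followed by a projection onto the convex set $\mathcal{S}$ in the weighted norm $\snorm{\cdot}_{F,P_{t|t-k}}$. The analysis rests on three ingredients. (i) Each loss $f_{\tau,k}$ is quadratic, so the regularized RLS objective is strongly convex in the metric induced by $P_{t|t-k}$; the update of line~\ref{code:RLSupdate} is exactly its minimizer, and the projection is non-expansive in this same weighted norm, hence does not increase regret. (ii) A Bregman/telescoping argument bounds the instantaneous regret by a difference of weighted distances to the comparator plus a term charged to the movement of the comparator; summing the latter produces the path-length term, while Assumptions~\ref{assumption:boundedness} and~\ref{assumption:stability} keep every constant finite by bounding $\snorm{r_\tau}$ and $\snorm{z_{\tau-k}}$ uniformly. (iii) A log-determinant potential argument controls the cumulative error against the accumulated covariance: the sum $\sum_\tau z_{\tau-k}^\top P_{\tau|\tau-k}^{-1} z_{\tau-k}$ telescopes through $\log\det$, and with forgetting each step contributes $\mathcal{O}(\log(1/\gamma))$, yielding the $T\log(1/\gamma)$-type term, while the initial regularization $P=\varepsilon\mathbf{I}$ produces the $\log(1/(1-\gamma))$ and constant offsets.

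Finally I would aggregate over the $k$ learners. Within the subsequence of a fixed residue the consecutive comparators, at invocation times $t$ and $t+k$, are $S_{t|t-k}$ and $S_{t+k|t}$, so the per-subsequence path lengths sum exactly to $V^k_T=\sum_{t=k}^{T-k}\snorm{S_{t+k|t}-S_{t|t-k}}$, giving the $\tfrac{\beta_1}{1-\gamma}V^k_T$ term. Since the $k$ subsequences partition the time steps, their lengths sum to $T$ and the log-determinant contributions combine into $\beta_2 T\log(1/\gamma)$; the initialization-related terms appear once per learner, yielding $k\beta_3\log(1/(1-\gamma))$ and $k\beta_4$. I expect the main obstacle to be steps (ii)–(iii) of the per-learner analysis: carefully tracking how the forgetting factor, the projection step, and the time-varying comparator interact in the weighted-norm Bregman telescoping — in particular isolating the exact $(1-\gamma)^{-1}$ dependence of the path-length term and the $\log(1/\gamma)$ dependence of the potential — is the delicate part, and is precisely where adapting \citet{yuan2020trading} to our projected, delayed-feedback setting requires the most care. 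Verifying that the projection preserves the telescoping and that the data norms stay uniformly bounded are the supporting technical lemmas.
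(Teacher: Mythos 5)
Your proposal follows essentially the same route as the paper's proof: the Joulani-style reduction of the $k$-step delayed problem into $k$ undelayed learners on non-overlapping subsequences, a per-learner dynamic-regret bound for projected RLS with forgetting in the style of \citet{yuan2020trading} (non-expansive weighted projection, telescoping against the moving comparator for the $(1-\gamma)^{-1}V_T$ term, and a log-determinant potential for the $\log(1/\gamma)$ and $\log(1/(1-\gamma))$ terms), and finally the aggregation in which the per-subsequence path lengths sum to $V^k_T$, the per-step potential terms sum to $\beta_2 T\log(1/\gamma)$, and the per-learner constants contribute the factors of $k$. This matches the paper's two-stage argument (Theorem~\ref{app_thm:RLS_k_1} plus the learner decomposition in Appendix~\ref{app_sec:multi_step_pred_proof}) in both structure and in how each term of the bound arises.
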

Again, we can select the forgetting factor similar to Corollary~\ref{corollary}, to obtain more specific regret bounds.
 The dominant dimensional dependence is hidden in the coefficient $\beta_2$ and is of the order of $np^2$ which is worse by $p$ compared to the optimal linear regression rate of $np$. This is not a limitation of the RLS algorithm but an artifact of the norm of $z_t$ scaling with $\sqrt{p}$. We can overcome this limitation by imposing ``fading memory" constraints on the dynamics. If we rewrite the dynamics as $S_{t} z_{t-1}=S^{[1]}_{t} r_{t-1}+\dots+S^{[p]}_{t} r_{t-p}$, then fading memory constraints would force $S^{[i]}_t$ to decay exponentially with $i$. 
 
To conclude the proof, we combine the previous steps while also accounting for the truncation effect. The only remaining step is to upper bound the $k$-steps ahead path lengths $V^k_T$ in terms of the path length of the one-step-ahead path length $V_T$ as defined in~\eqref{eq:path_length_dynamics}. In Lemma~\ref{app_lem:perturbation}, we prove that $V^k_T\le \sqrt{p} k^2M^2 V_T$. This is why the factor $\tilde{W}^4$ that multiplies the path length in the final regret bound in Theorem~\ref{thm:main} appears with an exponent of $4$ instead of just $2$.

\section{Simulations and Experimental Validation}
\label{sec:case_study}

In this section, we demonstrate the performance of PLOT both in simulation and in hardware experiments. 
We study the setting of tracking adversarial unknown targets with a quadrotor, which is of particular interest given the number of potential applications, such as hunting adversarial drones in airports \cite{dressel2019hunting} or artistic choreography \cite{dance}, \rev{to name a few}. \rev{The code for both the simulation and hardware experiments can be accessed from \footnotesize{\url{https://gitlab.nccr-automation.ch/akarapetyan/plot}}.}

\begin{figure*}[t]
\centering
    \begin{subfigure}[t]{0.23\textwidth}
        \centering
        \includegraphics[width=0.96\linewidth]{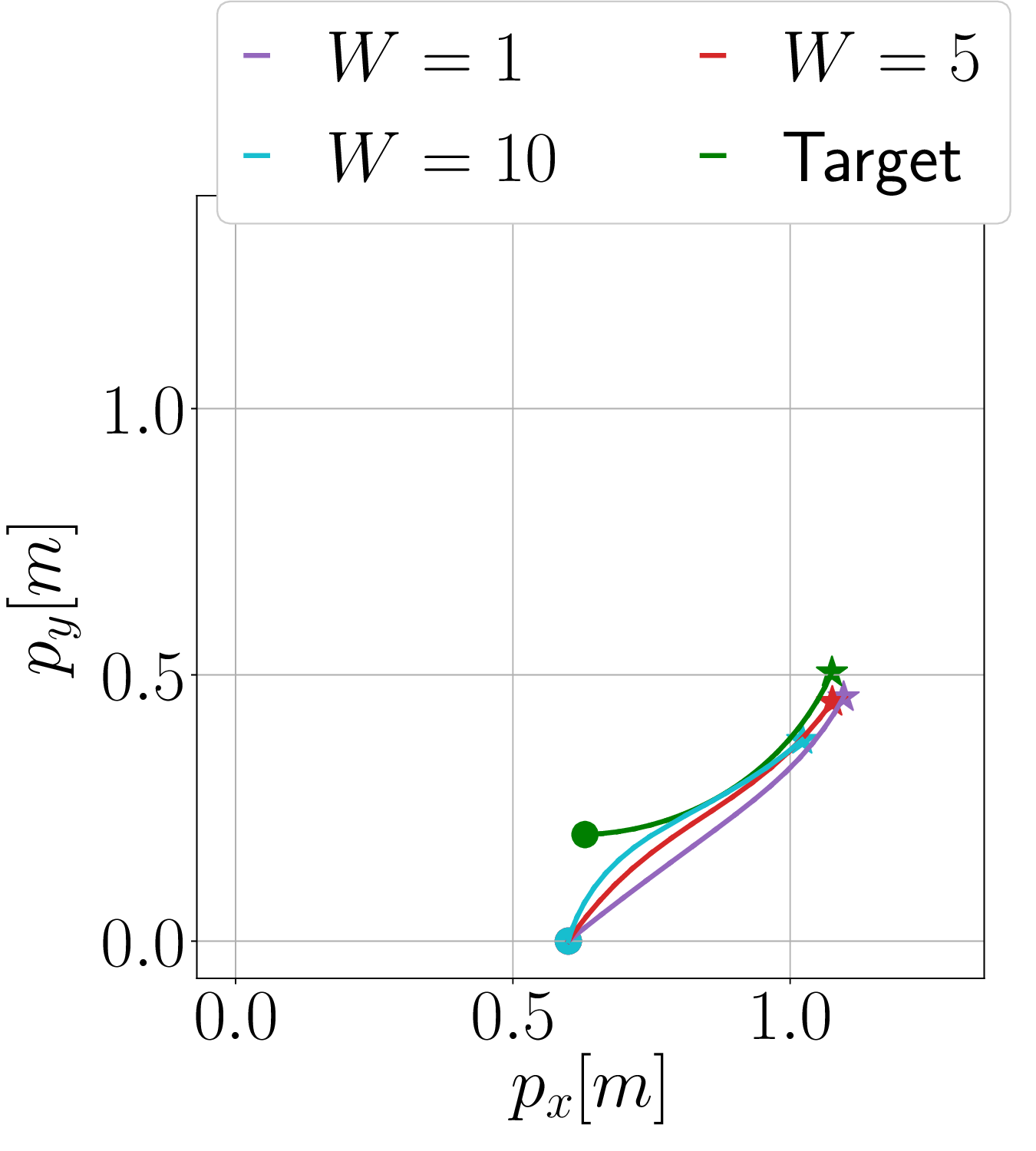}
        \caption{$T=2$ seconds}
    \end{subfigure}%
    ~ 
    \begin{subfigure}[t]{0.23\textwidth}
        \centering
        \includegraphics[width=0.96\linewidth]{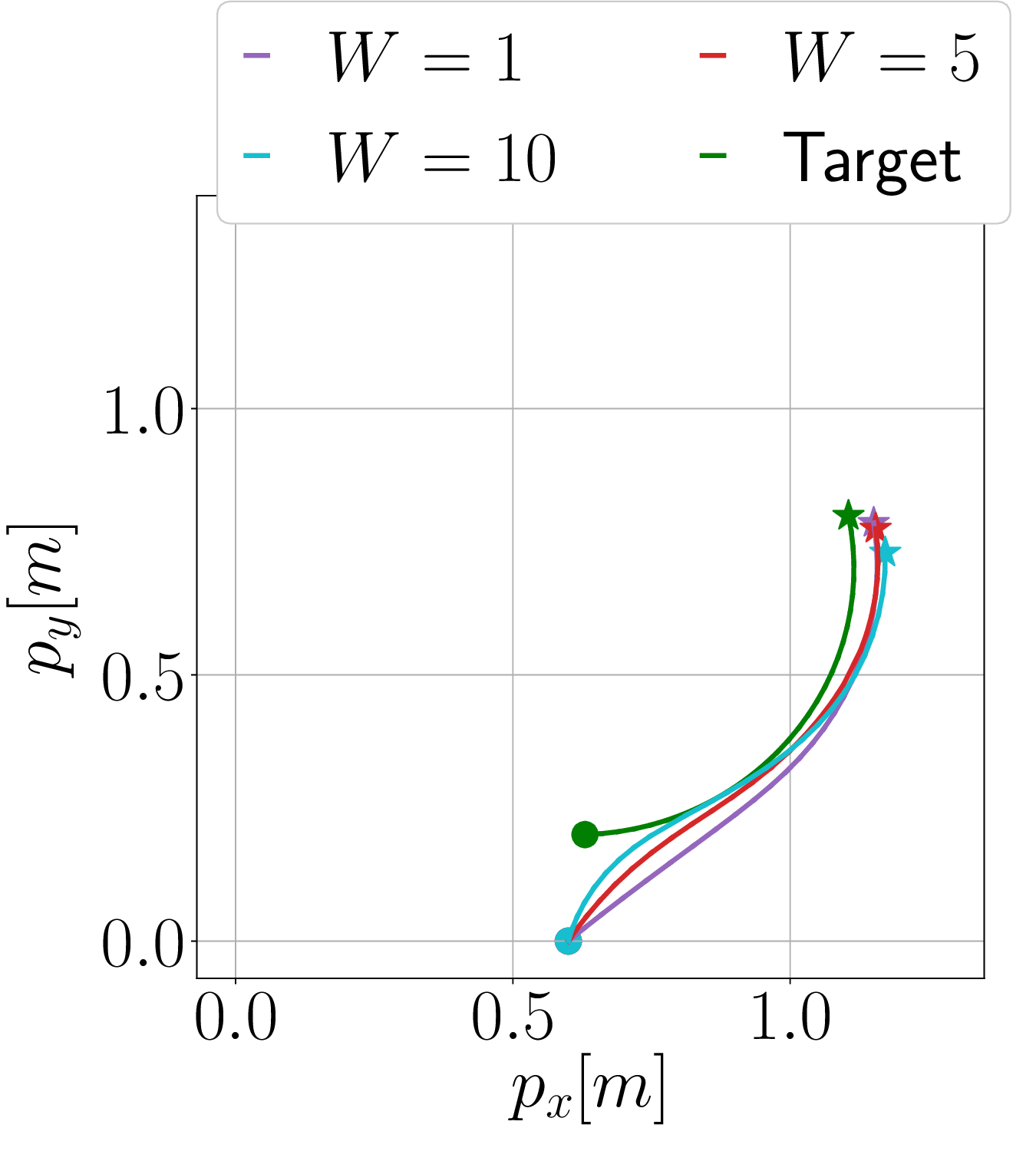}
        \caption{$T=3$ seconds}
    \end{subfigure}
    ~ 
    \begin{subfigure}[t]{0.23\textwidth}
        \centering
        \includegraphics[width=0.96\linewidth]{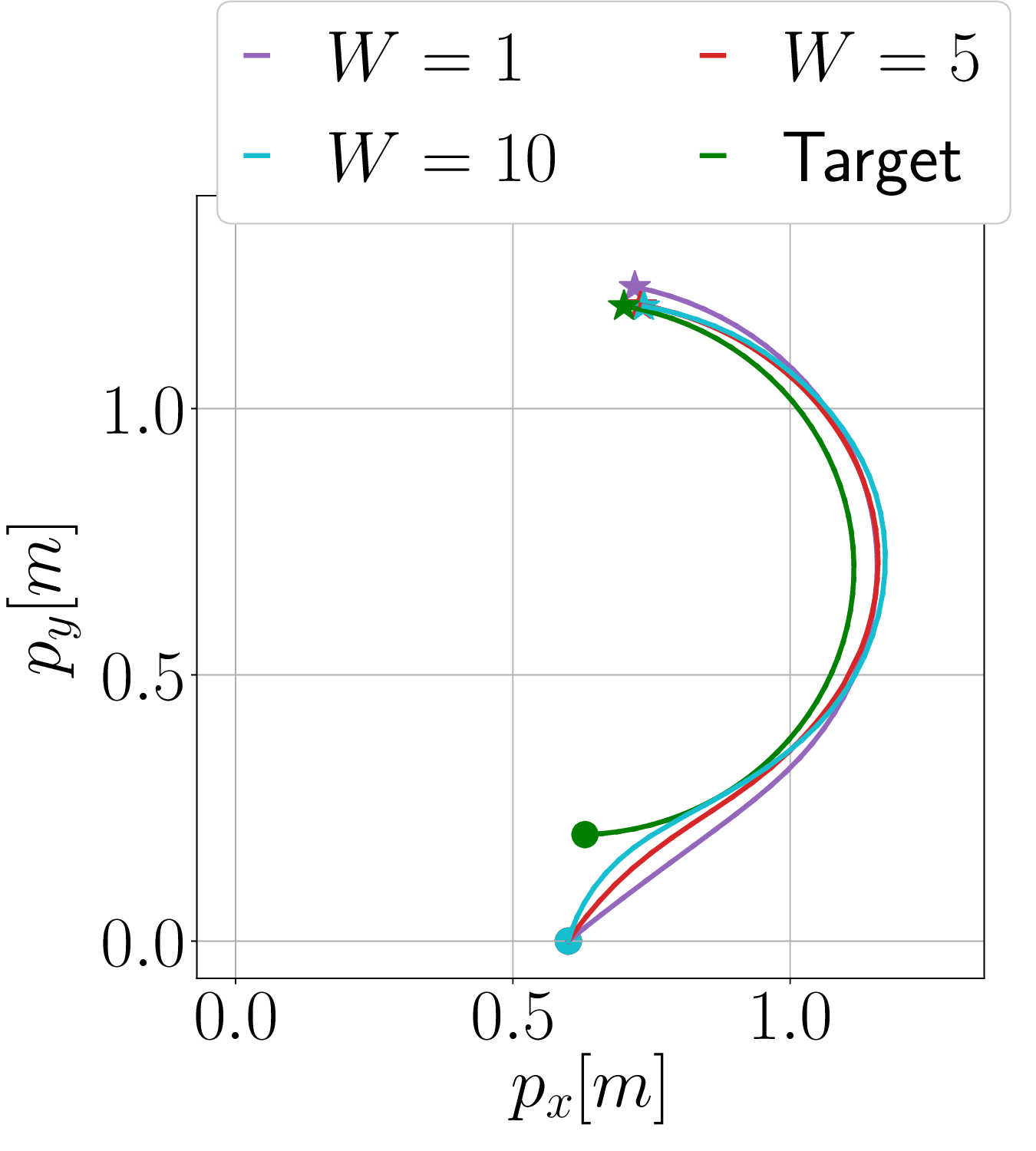}
        \caption{$T=5$ seconds}
    \end{subfigure}
    ~ 
    \begin{subfigure}[t]{0.23\textwidth}
        \centering
        \includegraphics[width=0.96\linewidth]{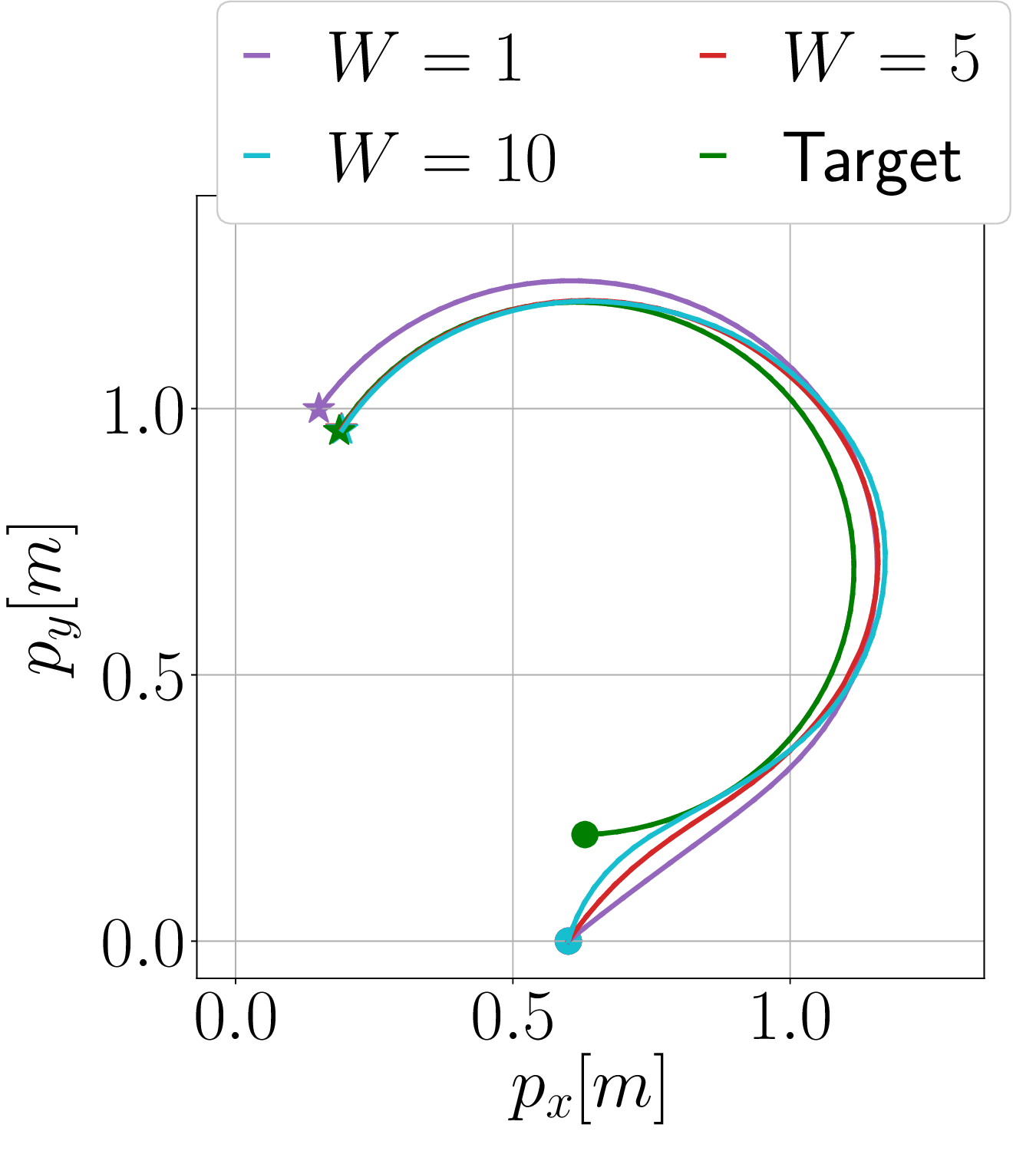}
        \caption{$T=7$ seconds}
    \end{subfigure}
    \caption{Trajectory plots of a circular target with a $V_T=0$ path length and the PLOT Algorithm for varying prediction horizon lengths, simulated for $T=2,3,5$ and $T=7$ seconds.}
    \label{fig:circle_rls_rhc_video_main}
\end{figure*}

We consider the Crazyflie 2.1 quadrotor~\cite{bitcraze}, a versatile, open-source, nano-sized quadrotor developed by Bitcraze \cite{bitcraze}, whose dynamics can be modeled with a linear time-invariant system linearized around a hovering point \cite{beuchat2019n}. For such a model, we define the state $\textstyle{x:=[\vec{p}~;~ \vec{\dot{p}}~;~\vec{\psi}]}$, where $\textstyle{\vec{p}:=\matr{x &y &z}^\top}$ is the position vector in the inertial frame, $\textstyle{\vec{\psi}:=\matr{\gamma& \beta& \alpha}^\top}$ is the attitude vector in the inertial frame with $\gamma$, $\beta$ and $\alpha$ for the roll, pitch and yaw angle, respectively. The action $\textstyle{u:=[f~;~\vec{\omega}]}$ includes the total thrust $f$ and the angular rate $\textstyle{\vec{\omega}:=\matr{\omega_x& \omega_y& \omega_z}^\top}$ in the body frame. For the detailed derivation of the linear dynamic model, see Appendix \ref{app:quadrotor_model} and \cite{beuchat2019n}. For all examples, we take the sampling time to be $T_s=0.1$ seconds and the LQR cost matrices fixed at $Q:=diag(80, 80, 80, 10, 10, 10,0.01,0.01,0.1)$, and $R:=diag(0.7, 2.5, 2.5, 2.5)$.

\subsection{Simulation Results}
\label{sec:simulation}

Here, we consider the linearized dynamics of the Crazyflie quadrotor, derived in Appendix \ref{app:quadrotor_model}.\rev{We show how PLOT successfully tracks both a static reference target with $V_T=0$ path length and a dynamic target with $V_T=\mathcal{O}(\sqrt{T})$. In addition, we also demonstrate how the proven regret guarantees can be used as a guideline for tuning the algorithm parameters such as the prediction horizon length $W$ and the forgetting factor $\gamma$. We also provide a performance comparison to other online control algorithms from the literature for tracking a non-stationary target}.

\subsubsection{Target with a $V_T=0$ path length}

Given the dynamic regret analysis and the main result in Theorem \ref{thm:main}, the smaller the path length $V_T$ the easier the control task is for PLOT. To show its tracking performance for such a target, \rev{as well as to visualize the effect of the prediction horizon $W$ on the control performance}, we consider the circular target  detailed in Example \ref{exmp:circular} that has $V_T=0$ path length. With this target revealed and measured online, as described in Section \ref{sec:problem statement}, we run  PLOT repeatedly for various horizon lengths. The trajectory plots of the online target and the quadrotor are shown in Figure \ref{fig:circle_rls_rhc_video_main} for the first $T=2,3,5$ and $T=7$ seconds. The regret plots for the considered multiple horizon lengths are shown in Figure \ref{fig:regret_rls_rhc_main} for a simulation of $T=200$ seconds. 
\begin{figure}
  \centering
\includegraphics[width=0.71\linewidth]{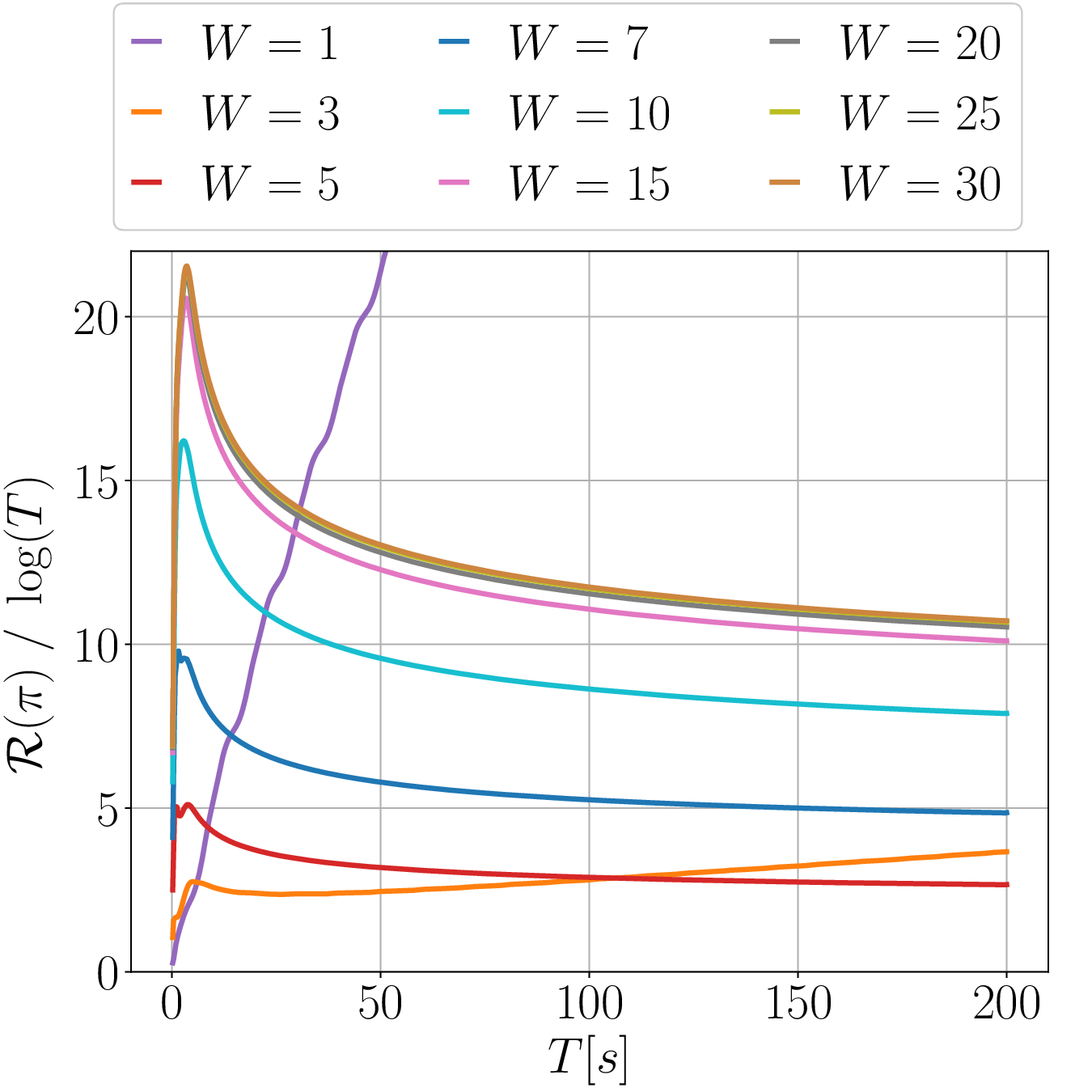}
  \caption{Log-normalized regret of the PLOT Algorithm with a range of prediction horizon lengths, simulated over a horizon of $T = 200$ seconds.}
  \label{fig:regret_rls_rhc_main}
\end{figure}
Firstly, both figures show that for too small $W$-s, PLOT exhibits a poorer tracking and regret performance \rev{due to the horizon truncation error}. As expected from Corollary \ref{corollary}, \rev{given the nature of the reference target}, PLOT with longer prediction horizons \rev{performs} better by learning the reference dynamics online, achieving sublinear regret as verified by Figure \ref{fig:regret_rls_rhc_main}. 
Additionally, the regret is increased for larger $W$-s \rev{only} up to a certain \rev{saturated} level. 
This behavior, \rev{as discussed in Section \ref{sec:analysis}, is thanks to the multi-predictor setup of PLOT, making its regret scale with $\Tilde{W}$, as opposed to linearly with $W$.} See Appendix \ref{sec:circle_reference} for further details.

\rev{
\subsubsection{Target with a $V_T = \mathcal{O}(\sqrt{T})$ Path Length}

For more challenging targets, one can make use of Corollary \ref{corollary} to tune for the best forgetting factor $\gamma$ if the path length $V_T$ is known. Here, we consider the case when only the order of the path length of a given reference is known, and, using Corollary \ref{corollary}, fix $\gamma_a = 1 - c_{\gamma}T^{-a}$, where $c_\gamma \in \mathbb{R}_+$ is a constant independent of $V_T$ and $T$, while  $a \in \mathbb{R}_+$ is such that $\gamma_a \in (0,1]$ and is tuned based on $V_T$ and $T$.

We fix a spiraling reference target with $V_T = \mathcal{O}(\sqrt{T})$, with exact dynamics detailed in Appendix \ref{sec:spiral_reference}. This order of $V_T$  results in the optimal value of $a=0.25$ and a forgetting factor $\gamma_{0.25}=0.78$ from Corollary \ref{corollary}, with $c_\gamma = 1.50$ obtained experimentally. 

\begin{figure}[!ht]
  \centering
  \includegraphics[width=0.71\linewidth]{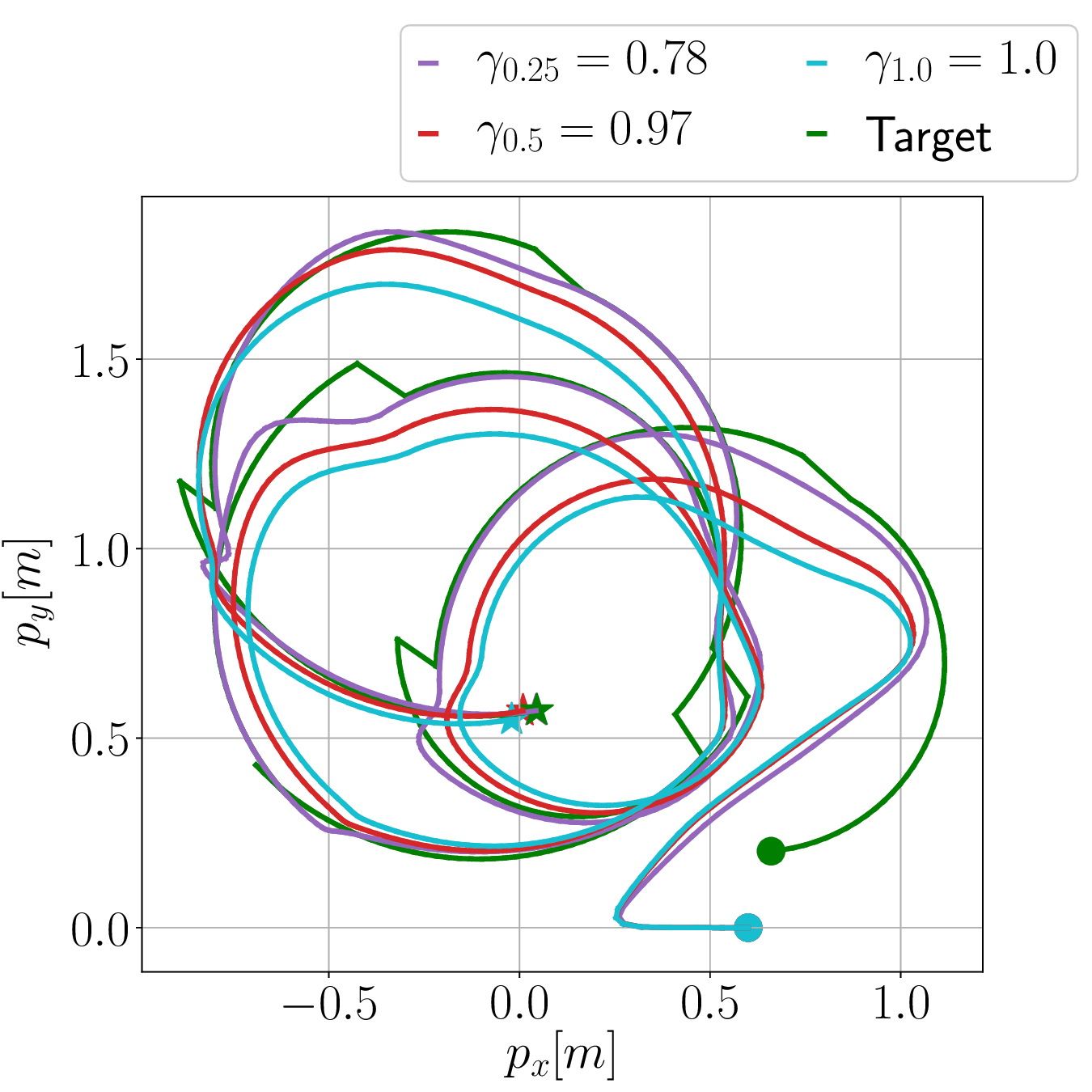}
  \caption{Trajectory plot of a spiral with a $V_T = \mathcal{O}(\sqrt{T})$ tracked with PLOT with a $W=5$ and a range of values for $\gamma$.}
  \label{fig:gamma_trajectory_main}
\end{figure}

PLOT's performance with three difference forgetting factors is shown in the trajectory plot in Figure \ref{fig:gamma_trajectory_main} for $T=40$ seconds, showing a better tracking performance for the regret-optimal $\gamma_{0.25}=0.78$. To show that $c_\gamma$ is independent of $T$, we fix, $c_\gamma$ to the tuned value of $1.50$ and perform experiments with varying horizon lengths from $T=150$ to $T=300$ seconds. For each $T$, PLOT is run with $4$ different forgetting factors with $a=0.1,~0.25,~0.5$ and $1$ (corresponding to $\gamma=1$), and the regret at the end of each experiment is shown in Figure \ref{fig:tuned_gamma_main}. The results confirm that  $\gamma_{0.25} = 1-1.5T^{-0.25}$ consistently outperforms other forgetting factors in terms of regret, as expected from Corollary \ref{corollary}. Further details on this example are provided in Appendix \ref{sec:spiral_reference}.

\begin{figure}[ht]
    \centering
    \includegraphics[width=0.71\linewidth]{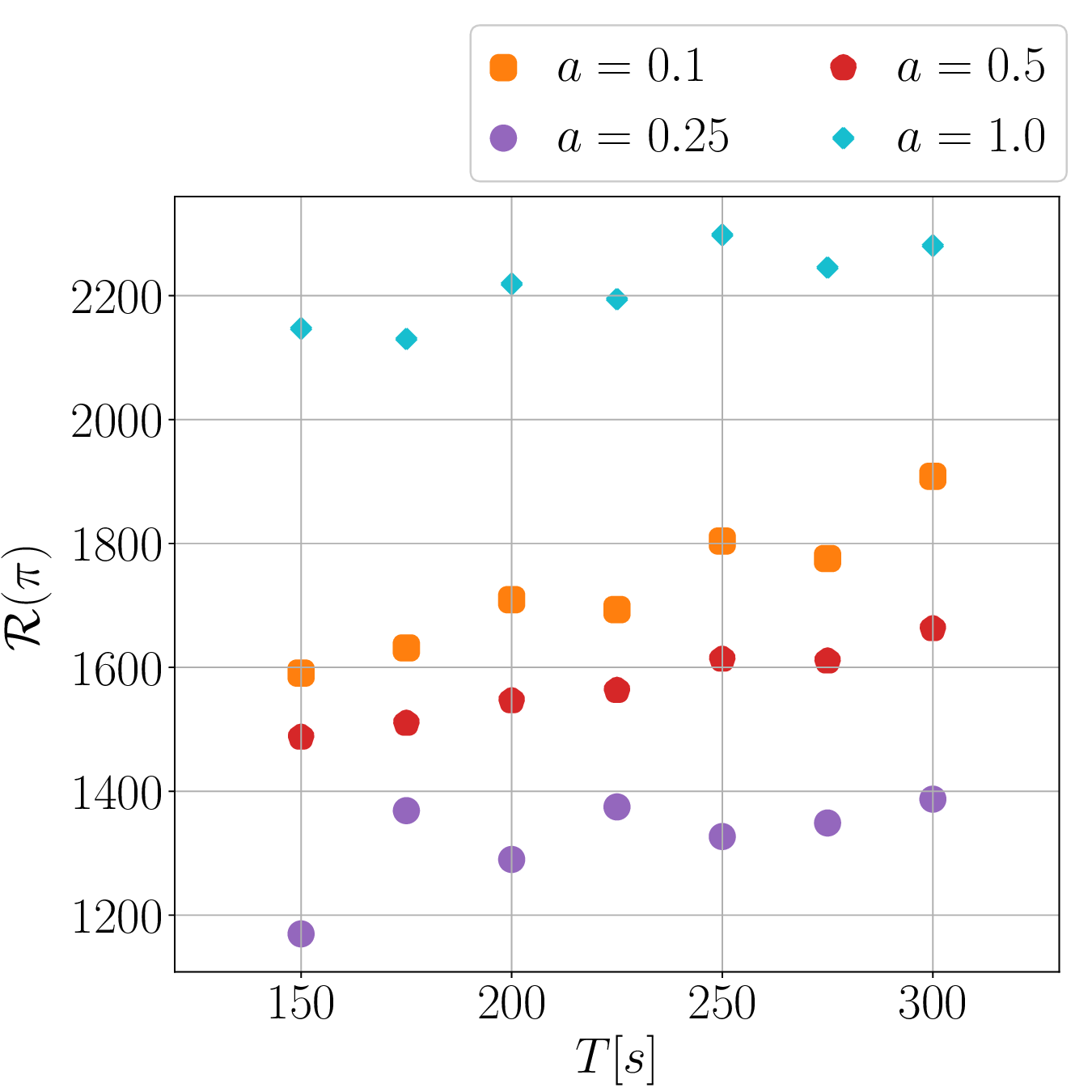}
    \caption{Regret of PLOT with varying $\gamma_a = 1-c_\gamma T^{-a}$; $\gamma_{0.25}$ is regret-optimal based on Corollary \ref{corollary}.}
    \label{fig:tuned_gamma_main}
\end{figure}
}

\subsubsection{Comparison with Other Online Control Methods}

Next, we compare the dynamic regret of PLOT to other controllers. Namely, the disturbance-action policy (DAP) with memory proposed in \cite{agarwal2019online}, the Riccatitron algorithm from \cite{foster2020logarithmic}, the SS-OGD algorithm by \cite{karapetyan2023online}, the Follow the Leader (FTL) algorithm \cite{pmlr-v32-abbasi-yadkori14}, as well as the Naive LQR controller that \rev{applies only an optimal state error feedback input} without any affine term.
The reference target is chosen to follow the  dynamics 
\begin{equation*}
    r_{t+1} =     S_{t+1|t} = 
    \begin{bmatrix}
        1 & 0 & T_s & 0\\
        0 & 1 & 0 & T_s\\
        0 & 0 & s_t\cos{\theta_t} & -s_t\sin{\theta_t}\\
        0 & 0 & s_t\sin{\theta_t} & s_t\cos{\theta_t}
    \end{bmatrix}r_t,
\end{equation*}
where $s_0 = 1, \theta_0 = 0.06$, $T_s = 0.1 s$ and $s_k = -s_{k-1}$, $\theta_k = -0.99\times\theta_{k-1}$ for every $k = \sqrt{T}$ given some $T$ \rev{that we simulate $T=200$ seconds.} 

\begin{figure}
    \centering
    \includegraphics[width=0.7\linewidth]{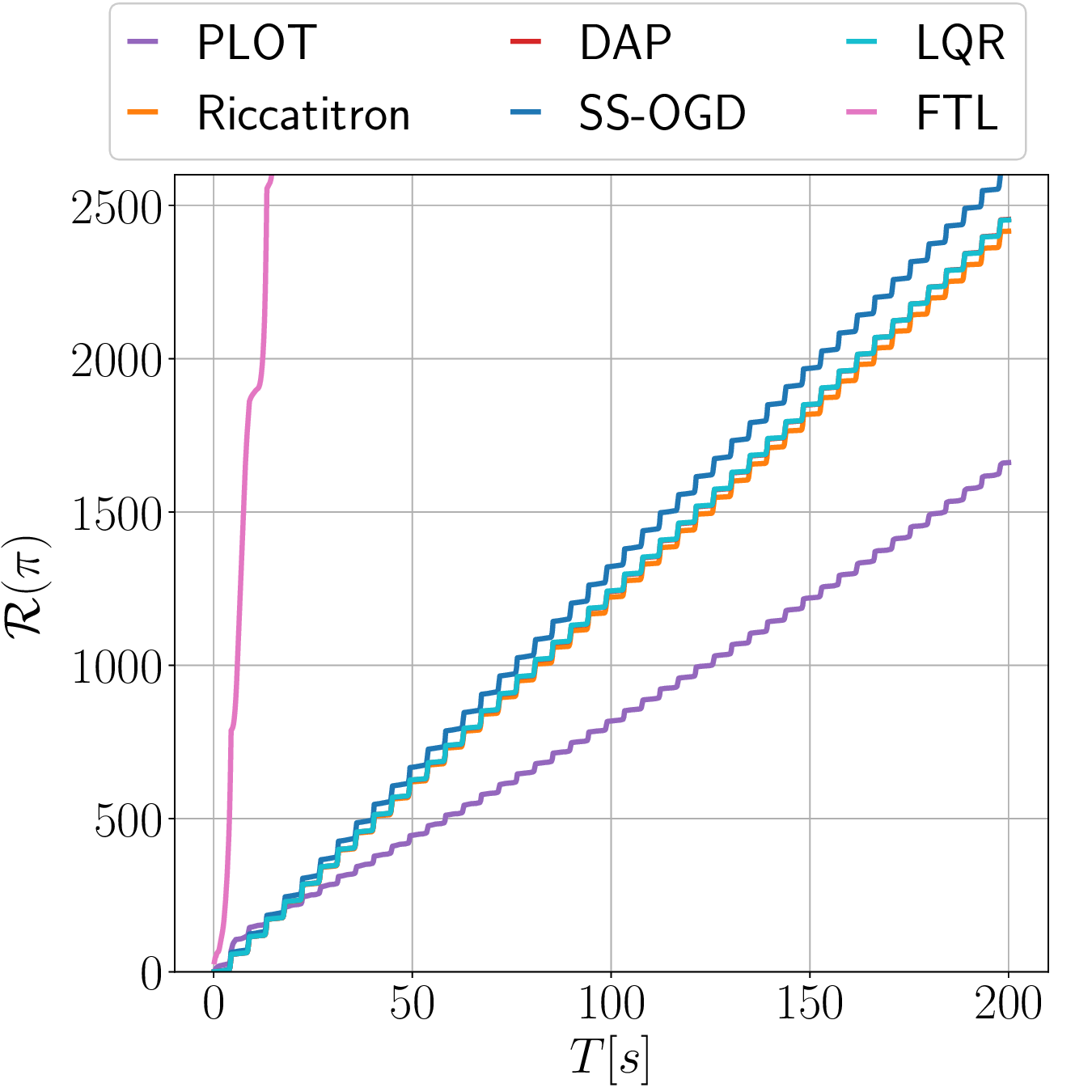}
    \caption{Dynamic Regret of online control algorithms applied to the online tracking problem.}
    \vspace{-0.25cm}
\label{fig:benchmark_regrets_main}
\end{figure}

Figure \ref{fig:benchmark_regrets_main} shows the accumulated regret of all algorithms. For the given challenging example, PLOT outperforms the others. This is mainly because, unlike all the others, PLOT implements a dynamic approach with a forgetting factor adapting to the fast-changing reference on time. Moreover, it deploys an indirect approach of learning the dynamics of the reference and then using it in the control, as opposed to a direct approach of learning the affine control term, implemented by all the others apart from \rev{Naive} LQR. For the given dynamic target the direct and static approaches produce a much smaller affine term compared to PLOT, leading to a performance very close to that of Naive LQR. \rev{The indirect approach of PLOT, however, may perform worse in cases where the reference targets lack dynamics and are mutually independent at each time step, as discussed in Remark \ref{rem:complexity}. Further details on this, as well as the considered algorithms' implementation are provided in Appendix \ref{app:benchmarks}.} 

\subsection{Experimental Validation on Quadrotors}
\label{sec:hardware}
\rev{We} validate the proposed algorithm on \rev{the Crazyflie 2.1} quadrotor. Due to the non-linear dynamics of the quadrotor and the noisy state measurements, the hardware experiments have tight requirements for the stability and robustness of the control algorithm. 
In the hardware experiments, we define the states and actions \rev{to be} the same as in the simulations. 
\rev{The control inputs are sent to the quadrotor through a radio communicator from a centralized computer that receives state measurements through a local area network connection to the motion capture system, and runs the online controller.}
The virtual reference trajectory is generated online after the drone has successfully taken off and is at a predefined hovering position.
The controller receives the target at a rate of $10$ Hz during flight. %

We implement PLOT for the linearized model of the drones derived in Appendix \ref{app:simulations} with a fixed prediction horizon of $W=5$ and a forgetting factor of $\gamma=0.8$. Figure~\ref{fig:figure8_crazyflie_main} shows the trajectory plots of PLOT and \rev{Naive} LQR for a horizon of $T=40$ seconds \rev{and an ``infinity"-shaped reference target.}. The \rev{Naive} LQR controller exhibits a delayed tracking behavior as expected, while PLOT achieves a smaller tracking error. Further details on the practical implementation of the algorithm are provided in Appendix \ref{app:implementation}.

\begin{figure}[h]
\centering
\includegraphics[width=0.71\linewidth]{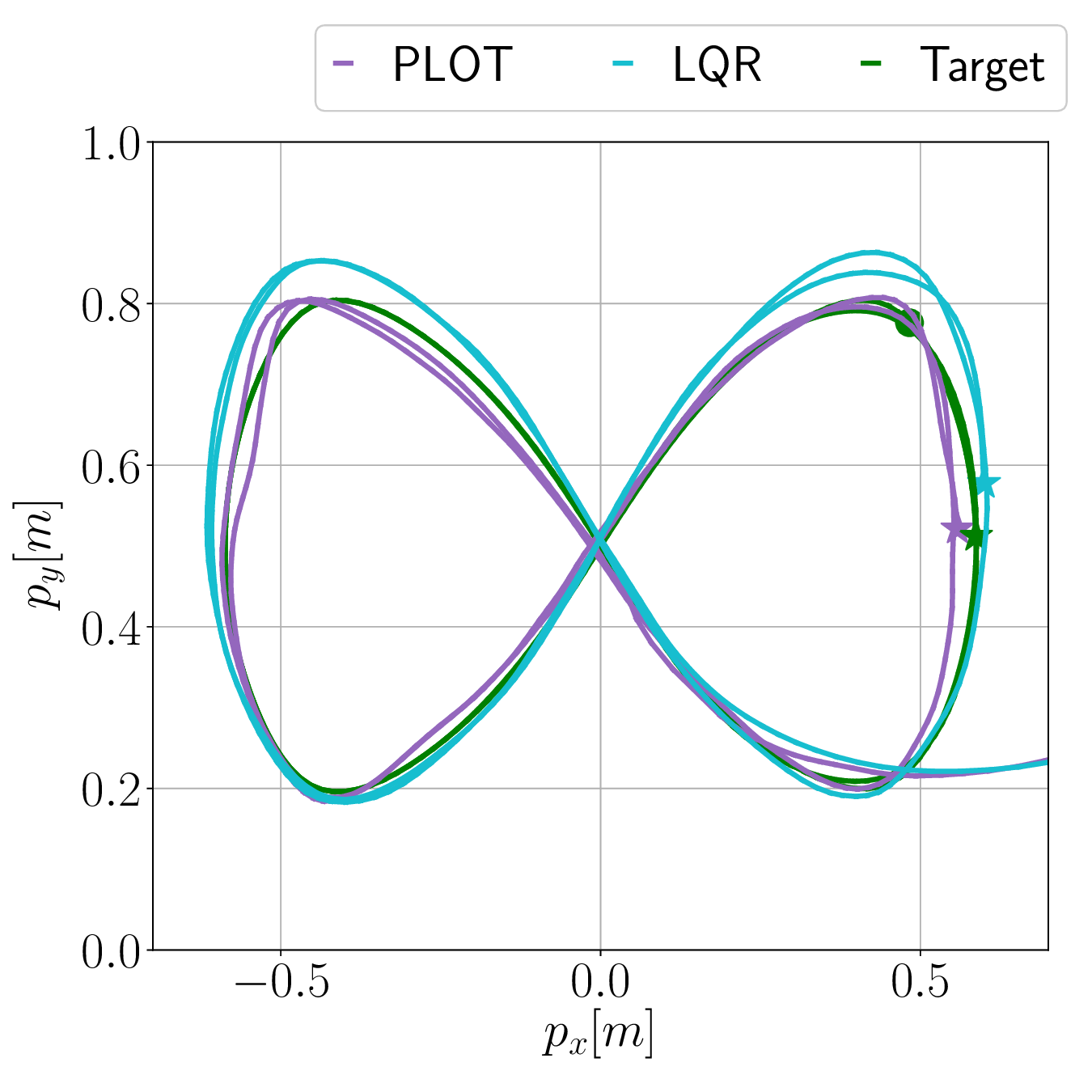}
  \caption{Trajectory plot of an ``infinity"-shaped reference tracked by the drone with the PLOT and Naive LQR controllers.}
  \vspace{-0.5cm}
  \label{fig:figure8_crazyflie_main}
\end{figure}

\section{Conclusion}
\label{sec:conclusion}
We studied online tracking of unknown targets for linear quadratic problems and provided an algorithm with dynamic regret guarantees. By exploiting dynamic structure, we can obtain sharper guarantees compared to prior work that assumes no structure.  It is open whether our regret analysis can be extended to \emph{non-realizable} targets, that is, targets that do not obey~\eqref{eq:system} exactly for matrices that satisfy Assumption~\ref{assumption:stability}. While the guarantees for prediction still hold, it is open whether the guarantees for the controller are retained. A benefit of our implicit control paradigm is that we can decompose tracking into prediction and certainty equivalent control. This makes the approach promising for Model Predictive Control tracking applications. Another open question is whether the guarantees can be extended to output tracking or non-quadratic costs.

\clearpage
\section*{Acknowledgements}

This work has been supported by the Swiss National Science Foundation under NCCR Automation (grant agreement $51\text{NF}40\_180545$), and by the  European Research Council under the ERC Advanced grant agreement  $787845$ (OCAL).

\section*{Impact Statement}

This paper presents work whose goal is to advance the field of Machine Learning. There are many potential societal consequences of our work, none which we feel must be specifically highlighted here.

\bibliography{literature_tracking}
\bibliographystyle{icml2024}

\newpage

\appendix
\onecolumn
\section*{Organization}
In Appendix~\ref{app_sec:LQT} we discuss properties of LQT control and the equivalence to the problem of LQR with disturbances. In Appendix~\ref{app_sec:AR}, we discuss issues regarding the dynamic representation of the targets. We present analytical expressions for the multi-step ahead predictors. We also discuss how the dynamic structure of the targets can lead to improved learning performance. Finally, we provide an expression for the total variation norm of multi-step ahead predictors in terms of the total variation of the system matrices $S_t$. The proof of Theorem~\ref{thm:ar_prediction_regret} can be found in Appendix~\ref{app_sec:prediction_proof}, where we also provide more details for the prediction problem. The proof of Theorem~\ref{thm:main} and Corollary~\ref{corollary} can be found in Appendix~\ref{app_sec:control_proof}.
Simulations visualizing the performance of PLOT, its hyperparemeter tuning based on the regret analysis, as well as comparison with other benchmark algorithms are provided in Appendix \ref{app:simulations}. Appendix \ref{app:implementation} provides the technical details of the implementation of PLOT on the Crazyflie quadrotors and its tracking performance on those. The code for the simulation and hardware experiments is provided in \url{https://gitlab.nccr-automation.ch/akarapetyan/plot}. 

\section{Additional Related Work}\label{app_sec:remaining_literature}

\paragraph{Online and non-stochastic control.} Online and non-stochastic control techniques have been applied to systems with known~\cite{agarwal2019online} as well as unknown system dynamics~\cite{simchowitz2020making}. In this paper, we focus on the case of known \emph{system} dynamics, where the main purpose of the online controller is to react to (non-stochastic) \emph{unknown} disturbances/targets.\footnote{Note that as shown in many works, see~\cite{foster2020logarithmic,karapetyan2023online}, the problem of tracking is equivalent under certain conditions to the problem of disturbance rejection. Hence, for the purpose of this literature review, we can use disturbance/target interchangeably.} In this setting, \citet{agarwal2019online}  achieved $\mathcal{O}(\sqrt{T})$ regret for general convex stage cost functions under a disturbance affine feedback policy. In the case of quadratic costs, and more generally strongly convex stage costs\footnote{Certain additional conditions are required.}, the guarantees were improved to logarithmic~\cite{foster2020logarithmic,simchowitz2020making,naman2019logarithmic}. The aforementioned works consider static problems in the sense that they compete against the best stationary policy. \rev{The dynamic setting was studied in~\cite{zhao2022non}, where dynamic regret guarantees of the order of $\mathcal{O}(\sqrt{TV_T})$ for general convex costs were derived for the first time, using gradient-based (first order) base learners.} In~\cite{baby2022optimal} dynamic regret guarantees for the quadratic cost case (LQR) were derived using a variant of the Follow the Leading History (FLH) algorithm. It was proved that the dynamic regret with respect to any dynamic disturbance affine policy is upper bounded by $T^{1/3}V^{2/3}_T$, where $V_T$ is the total variation of the parameters of the policy. 
Learning  disturbance affine feedback policies follows the direct online control paradigm since we adapt the policy parameters directly.

In this paper, we study implicit online control architectures, where we decouple online prediction of disturbances/targets from control design. Several works have studied the effect of predictions on receding horizon control~\cite{powerofprediction,li2019online,zhang2021regret}. However, either no method of prediction is provided or the dynamic regret guarantees are suboptimal in our setting. In particular, the regret guarantees scale with the total variation of the disturbance/target itself $\sum_{t=0}^{T-1}\snorm{r_{t+1}-r_t}$~\cite{li2019online}, which is appropriate for unstructured or slowly-varying targets but is suboptimal in the case of targets with dynamic structure; there exist targets for which the total variation of their state $r_t$ is linear, while the total variation of the target dynamics $S_t$ is zero--see Appendix~\ref{app_sec:trajectory_examples}.

 \citet{pmlr-v32-abbasi-yadkori14}~introduced the problem of non-stochastic control for target tracking, where the target can be adversarial. Logarithmic regret guarantees are provided for learning affine control policies with constant affine terms. However, policies with constant affine terms can only track static points as targets; they are not rich enough to track trajectories that are not points and, in general, time-varying targets.   Recently, \citet{niknejad2023online} studied online learning for target tracking in the case of unknown system dynamics and unknown time-invariant target dynamics. Our setting is different since we have known dynamics for the system, time-varying dynamics for the target, and we use dynamic regret. 
 Their static regret guarantee of $T^{2/3}$ is conservative for our setting, where we can obtain logarithmic regret in the static case, however, it applies to general convex costs and unknown dynamics. In \cite{karapetyan2023online}, dynamic regret guarantees are provided for target tracking, under a first order method which treats the targets as slowly varying or steady-state. Similar to~\cite{li2019online}, the dynamic regret depends on the total variation of the target state which can be suboptimal in the case of target with dynamics. \citet{zhang2022adversarial} proved $\tilde{\mathcal{O}}(\sqrt{|\mathcal{I}|})$ (for any interval $\mathcal{I}$) adaptive regret guarantees for tracking in the case of convex costs. Their setting is different since no dynamic structure is assumed for the targets, the comparator class is controllers which are constant over the interval of interest, while we focus on square losses and dynamic regret.

Although related, our setting is different from~\cite{gradu2023adaptive}
and~\cite{minasyan2021online}
where the system itself is linear time varying, there is no target, and adaptive regret is considered as a performance metric. Instead, here, the system's dynamics are time invariant, 
the target's dynamics are unknown and time varying, and we consider dynamic regret.  Safe online non-stochastic control subject to constraints has also been studied before~\cite{zhou2023safe,nonhoff2023online,li2021constraint}. Dealing with constraints is a more challenging problem in general. In many cases, the analysis is restricted to the static case~\cite{li2021constraint}, requires relaxing the notion of policy regret~\cite{zhou2023safe} or the path length scales with the target state~\cite{nonhoff2023online}, which is not always optimal in our setting as we discuss above. 

Finally, we note that when the \emph{system} dynamics are \emph{unknown}, it is no longer possible to achieve logarithmic regret in the general case, even under stochastic disturbances~\cite{simchowitz2020naive,ziemann2022regret}. It is possible to achieve $\tilde{\mathcal{O}}(\sqrt{T})$ regret~\cite{chen2021black,simchowitz2020making, mania2019certainty} instead.

\paragraph{Prediction and online least squares.} Online linear regression and the least squares method, in particular, have been studied extensively before~\cite{azoury2001relative,cesa2006prediction}. \citet{anava2013online} provided logarithmic regret guarantees for online prediction of auto-regressive (AR) and auto-regressive with moving average (ARMA) systems using the online Newton step (ONS) method. The result holds for exp-concave prediction loss functions, which includes the square prediction loss. Similar guarantees were proved for partially observed stochastic linear systems in the case of Kalman filtering~\cite{ghai2020no,tsiamis2022online}. Under first-order methods, the regret guarantees become $\mathcal{O}(\sqrt{T})$~\cite{anava2013online,kozdoba2019line}, however, first-order methods are also known to adapt to changes~\cite{zinkevich2003online,hazan2009efficient}. On the contrary, the standard versions of second-order methods like RLS or ONS are not adequate for time-varying settings, where adaptation is needed. Another challenge is that in the case of multi-step ahead prediction, these methods should be adapted to deal with delayed learning feedback.

To solve the former issue, one way is to employ multiple learners along with a meta-algorithm that chooses the best expert online, also known as the Follow the Leading History (FLH) algorithm~\cite{hazan2009efficient,baby2022optimal}. Another option is to employ exponential forgetting~\cite{yuan2020trading,ding2021discounted}. In this paper, we follow the second approach since it has been standard practice in the adaptive control literature~\cite{guo1995performance,aastrom1977theory}. Moreover, it has a simple implementation for real-time applications, like drone control.

To solve the latter issue, we employ the standard technique of~\citet{joulani2013online} that decomposes the problem of learning with delays of duration $k$ into $k$ non-overlapping and non-delayed online learning instances.

\paragraph{Control Literature.}
Common feedback control policies, such as the proportional-integral-derivative (PID) controllers find extensive use for tracking problems in practice~\cite{pan2018efficient,cervantes2001pid}. For example, integral control has been used to track constant unknown offsets.
However, they typically require fine-tuning and their tracking performance for unknown time-varying trajectories is not guaranteed in general. 
Tracking of known reference trajectories with uncertain repetitive dynamics is studied in the context of  iterative learning control~\cite{bristow2006survey,owens2005iterative} with online extensions studying dynamic regret~\cite{balta2022regret}. However, the methods do not extend to non-repetitive setting of LQT with unknown time-varying references.

Predictive controllers such as model predictive control (MPC) have also been used to track constant references, see, for example, offset-free MPC~\cite{pannocchia2007combined,maeder2009linear}.
Various tracking MPC methods are proposed for time-varying references, e.g.,~\cite{kohler2020nonlinear}.
However, unless we have an accurate prediction model for the reference trajectory, the guarantees do not extend to more general sequentially revealed unknown trajectories. 
Dual MPC methods with active exploration are proposed for tracking applications and uncertain dynamics
\cite{heirung2017dual,soloperto2019dual,parsi2022explicit}. However, transient performance, regret guarantees, or the case of unknown time-varying references are not studied.

The LQT problem with unknown targets has been studied before from the point of view of control theory and adaptive control~\cite{vamvoudakis2016optimal,modares2014linear,peterson1982bounded}. Typically, the goal is to prove asymptotic convergence or boundedness of tracking error by appealing to Lyapunov stability theory. Most results assume time-invariant target dynamics. Tracking time-varying targets is possible using adaptive control techniques, under the assumption that the parameters of the target evolve like a random walk~\cite{guo1995performance}. However, this assumption excludes adversarial targets. By using regret as a metric, we can obtain non-asymptotic guarantees that also capture adversarial, non-stochastic behaviors.

Finally, an alternative approach is to employ robust control techniques to account for worst-case disturbances/tracking errors. Notable approaches include the celebrated $\mathcal{H}_{\infty}$ control approach or mixed $\mathcal{H}_2-\mathcal{H}_{\infty}$ control~\cite{zhang2021policy}. Recently, robust controllers inspired by the notion of regret were designed~\cite{goel2023regret}.

\section{Linear Quadratic Control Properties}\label{app_sec:LQT}
In this section, we revisit some properties of the LQT controller, like (internal) stability. We also discuss how to relax Assumption~\ref{ass:LQT_stability}. Finally, we discuss that LQT is equivalent to LQR control with disturbances.

\subsection{LQT properties}
Let us first recall the notions of stabilizability and detectability. A pair of system and input matrices $(A,B)$ is \emph{stabilizable} if and only if there exists a linear feedback gain $K_0\in\mathbb{R}^{m\times n}$ such that $(A-BK_0)$ has all eigenvalues strictly inside the unit circle. 

Let us now introduce the following proposition, which shows that the closed loop matrix $A-BK$ under the feedback gain $K$ defined in~\eqref{eq:K*} is stable, that is, all of its eigenvalues are inside the unit circle. It also shows that the feedforward gains $K_t, \forall t\le T$ defined in \eqref{eq:Kd} decay exponentially fast as $t$ increases.  Under the optimal control law~\eqref{eq:optimal action}, we do not have stability in the classical sense; since the target state can be arbitrary the tracking error might not remain close to the origin. However, we have internal stability since all signals remain bounded as long as $r_t$ is bounded. A more general case with time-variant costs of the proposition is proven in Corollary 1 of~\cite{zhang2021regret}.
\begin{prop}[Stability~\cite{zhang2021regret}]
\label{prop:exp-stable}
Let Assumption~\ref{ass:LQT_stability} be in effect. Recall the definition of the feedback and feedforward gains in~\eqref{eq:K*},~\eqref{eq:Kd}. For all $t\in \mathbb{N}$, the closed loop matrix $(A-BK)$ satisfies
\[
\snorm{(A-BK)^t}\le \sqrt{\frac{\lambda_{max}(X)}{\lambda_{min}(X)}}\rho^t,\,\rho = \sqrt{1-\frac{\lambda_{min}(Q)}{\lambda_{max}(X)}}\in (0, 1)
\]
where $\lambda_{min}(\cdot)$ and $\lambda_{\max}(\cdot)$ denote the minimum and maximum eigenvalue respectively, $X$ is defined in \eqref{eq:P*}.
As a result, the coefficient matrices $K_t$ defined in~\eqref{eq:Kd} satisfy
\begin{equation*}
    \lVert K_{t}\rVert \leq c_0\rho^{t} 
\end{equation*}
for 
$c_0 = \frac{\lVert B\rVert\lambda_{max}(X)}{\lambda_{min}(R)}\sqrt{\frac{\lambda_{max}(X)}{\lambda_{min}(X)}}$.
\end{prop}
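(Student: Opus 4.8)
The plan is to use the Riccati equation~\eqref{eq:P*} to show that $X$ acts as a Lyapunov function for the closed-loop matrix $A_K \triangleq A-BK$, and then to transfer the resulting exponential decay of $\snorm{A_K^t}$ to the feedforward gains $K_t$ through their definition~\eqref{eq:Kd}.

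First I would establish the key algebraic identity
\[
A_K^\top X A_K = X - Q - K^\top R K.
\]
To derive it, I would substitute $K=(R+B^\top X B)^{-1}B^\top X A$ from~\eqref{eq:K*} into~\eqref{eq:P*}, use the relations $B^\top X A = (R+B^\top X B)K$ and $A^\top X B = K^\top(R+B^\top X B)$ to rewrite the quadratic-in-$K$ term, then expand $A_K^\top X A_K$ and cancel the cross terms. Under Assumption~\ref{ass:LQT_stability}, stabilizability together with $Q\succ 0$ guarantees a unique positive definite stabilizing solution $X\succ 0$; combined with the identity this yields $X\succeq Q+K^\top R K\succeq Q$, so in particular $\lambda_{max}(X)\ge\lambda_{min}(Q)$ and $\rho=\sqrt{1-\lambda_{min}(Q)/\lambda_{max}(X)}\in(0,1)$ is well defined.

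Second, I would run the standard Lyapunov descent argument on the autonomous system $v_{t+1}=A_K v_t$. Writing $V(v)=v^\top X v$ and invoking the identity gives $V(v_{t+1})-V(v_t)=-v_t^\top(Q+K^\top R K)v_t\le -\lambda_{min}(Q)\snorm{v_t}^2$. Bounding $\snorm{v_t}^2\ge V(v_t)/\lambda_{max}(X)$ turns this into the geometric contraction $V(v_{t+1})\le\rho^2 V(v_t)$, hence $V(v_t)\le\rho^{2t}V(v_0)$. Sandwiching $V$ between $\lambda_{min}(X)\snorm{v_t}^2$ and $\lambda_{max}(X)\snorm{v_0}^2$ and taking the supremum over unit $v_0$ delivers $\snorm{A_K^t}\le\sqrt{\lambda_{max}(X)/\lambda_{min}(X)}\,\rho^t$, which is the first claim.

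Finally, the bound on $K_t$ follows by feeding this operator-norm estimate into~\eqref{eq:Kd}: I would bound $\snorm{(R+B^\top X B)^{-1}}\le 1/\lambda_{min}(R)$ (since $B^\top X B\succeq 0$ implies $R+B^\top X B\succeq R$), $\snorm{X}=\lambda_{max}(X)$, and $\snorm{(A_K^\top)^t}=\snorm{A_K^t}$, then collect the constants into $c_0=\snorm{B}\lambda_{max}(X)\lambda_{min}(R)^{-1}\sqrt{\lambda_{max}(X)/\lambda_{min}(X)}$. I expect the only genuinely delicate step to be the derivation of the Lyapunov identity and the accompanying justification that $X\succ 0$ and $\rho\in(0,1)$; the remaining steps are routine submultiplicativity and eigenvalue bounds. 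As a sanity check, the entire statement is also recoverable by specializing the time-varying Corollary~1 of~\cite{zhang2021regret} to constant costs.
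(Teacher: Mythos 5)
Your proof is correct. The paper itself does not spell out an argument for this proposition: it defers entirely to Corollary~1 of~\cite{zhang2021regret}, which proves the statement in the more general setting of time-varying cost matrices, and the proposition here is just the specialization to constant $Q$, $R$ (you note this fallback yourself in your final sentence). What you supply instead is the standard self-contained Lyapunov argument that underlies that citation: the identity $(A-BK)^\top X (A-BK) = X - Q - K^\top R K$ obtained by substituting $K=(R+B^\top XB)^{-1}B^\top XA$ into the Riccati equation, the resulting contraction $V(v_{t+1})\le \rho^2 V(v_t)$ for $V(v)=v^\top Xv$, the sandwich $\lambda_{min}(X)\snorm{v}^2\le V(v)\le\lambda_{max}(X)\snorm{v}^2$, and then submultiplicativity together with $R+B^\top XB\succeq R$ and $\snorm{X}=\lambda_{max}(X)$ to get exactly the constant $c_0$ in the statement. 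All of these steps check out, including the cancellation producing the Lyapunov identity and the positivity $X\succeq Q\succ 0$ needed for the sandwich. Your route buys the reader a complete, elementary proof inside the paper; the paper's route buys brevity and inherits the more general time-varying statement for free. One small remark: as in the paper's own statement, the claim $\rho\in(0,1)$ is marginally loose --- from $X\succeq Q$ one only gets $\rho\in[0,1)$ in degenerate cases (e.g.\ $A=0$, $Q=cI$ gives $\rho=0$), but the exponential bound holds regardless, so this affects neither your argument nor the proposition.
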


Assumption~\ref{ass:LQT_stability} enables us to quantify the constants $c_0,\rho$. In fact, we can relax Assumption~\ref{ass:LQT_stability} and replace it with the following assumption. 
\begin{assumptionp}{\ref*{ass:LQT_stability}$'$}
    The pair $(A,B)$ is stabilizable, 
    the pair $(Q,A)$ is detectable, 
    $Q$ is positive semi-definite, and $R$ is symmetric and positive definite.
\end{assumptionp}
Note that a pair $(Q,A)$ is detectable if and only if $(A^\top,Q^\top)$ is stabilizable.
Under the above assumption, we still retain stability of $(A-BK)$~\cite{anderson2005optimal}. The only difference is that we do not have an explicit characterization of $\rho$ anymore as in Proposition~\ref{prop:exp-stable}.

\subsection{LQT as LQR with disturbances}
We can recast the LQT problem into an LQR problem with adversarial disturbances, see, for example,~\cite{karapetyan2023online}. By redefining the tracking error as the system state $e_t = x_t - r_t$ and encoding the time-varying target states as the disturbance $w_t = Ar_t - r_{t+1}$, the resulting LQR problem is given as,
\begin{equation}
	\label{eq:LQR obj}
	\begin{aligned}
    \min_{u_{0:T-1}} & \sum_{t=0}^{T-1} \left(\lVert e_t\rVert_Q^2 + \lVert u_t\rVert_R^2 \right) + \lVert e_T\rVert_X^2 \\
    \text{s.t.} & \quad e_{t+1}\! =\! Ae_t+Bu_t + w_t,~ \forall t=0,...,T-1.
\end{aligned}
\end{equation}
Therefore, we can treat the online LQT problem as an online, non-stochastic LQR problem with adversarial disturbances. The converse is also true if we set $r_{t}=w_{t-1}-Ar_{t-1}$ and initialize $r_{0}=0$.

\clearpage
\section{Autoregressive Systems}\label{app_sec:AR}
In this section, we present certain properties of autoregressive systems
\[
r_{t+1}=S_{t+1}z_t,
\]
where $z_{t}=\matr{r^\top_{t}&r^\top_{t-1}&\cdots&r^\top_{t-p}}^\top$, for some past horizon $p>0$. Note that autoregressive systems can also be described by state-space equations, viewing $z_t$ as a non-minimal state representation.  
 Define the extended matrices
\begin{equation}\label{eq:exosystem_state_matrix}
\mathcal{A}_{t+1}\triangleq\matr{& &S_{t+1}& & \\\cmidrule(lr){1-5} \mathbf{I}_n&0&\cdots&0&0\\0&\mathbf{I}_n&\cdots&0&0\\ \vdots &&\ddots& &\vdots\\0&0&\cdots &\mathbf{I}_n&0 },\,\mathcal{B}\triangleq\matr{\mathbf{I}_n\\\cmidrule(lr){1-1}  0\\0\\\vdots\\0 }.
\end{equation}
Then $r_t$ can be though of as the output of the following state-space system
\begin{equation}\label{app_eq:state_space_exosystem}
\begin{aligned}
z_{t+1}&=\mathcal{A}_{t+1}z_t\\
r_{t}&=\mathcal{B}^\top z_t
\end{aligned}
\end{equation}
Our results extend directly to auto-regressive dynamics with additional exogenous variables (ARX models)
\[
r_{t+1}=S_tz_t+v_t.
\]
It is sufficient to extend the regressor vector to contain $1$ as the last element
\[
\tilde{z}_t=\matr{z^\top_t&1}^\top.
\]
We can then extend $S_t$ accordingly
\[
\tilde{S}_t=\matr{S_t&v_t}.
\]
\subsection{Representation and learning complexity}\label{app_sec:trajectory_examples}
Auto-regressive dynamics can cover many commonly encountered trajectories.
Two elementary examples include constant velocity and circular targets.
\begin{exmp}[Constant Velocity Target]
\label{exmp:const v}
Let $z_t,y_t$ denote positions in a 2D horizontal plane with $\dot{z}_t,\dot{y}_t$ the respective velocities and the target state as $r_t=[z_t, y_t, \dot{z}_t, \dot{y}_t]^T$. Let $T_s$ be the sampling time for discretizing the target dynamics.  Then, a target with constant velocity can be represented by
\begin{equation*}
   r_{t+1}=Sr_t,\quad S= 
    \begin{bmatrix}
        1 & 0 & T_s & 0\\
        0 & 1 & 0 & T_s\\
        0 & 0 & 1 & 0\\
        0 & 0 & 0 & 1
    \end{bmatrix}.
\end{equation*}
Note that there might be multiple representations. We could also use the second order representation
\begin{equation*}
   r_{t+1}=S^{[1]}r_t+S^{[2]}r_{t-1},\quad S^{[1]}= 
    \begin{bmatrix}
        2 & 0 & 0 & 0\\
        0 & 2 & 0 & 0\\
        0 & 0 & 1 & 0\\
        0 & 0 & 0 & 1
    \end{bmatrix},\,S^{[2]}=\begin{bmatrix}
        -1 & 0 & 0 & 0\\
        0 & -1 & 0 & 0\\
        0 & 0 & 0 & 0\\
        0 & 0 & 0 & 0
    \end{bmatrix},
\end{equation*}
using the fact that $y_{t}-y_{t-1}=y_{t+1}-y_t$ (and similarly for $z_t$) under constant velocity.
\end{exmp}
\begin{exmp}[Circular Target with Constant Speed]
\label{exmp:circular}
Let $z_t,y_t$ denote positions in a 2D horizontal plane with $\dot{z}_t,\dot{y}_t$ the respective velocities and the target state as $r_t=[z_t, y_t, \dot{z}_t, \dot{y}_t]^T$. Let $T_s$ be the sampling time for discretizing the target dynamics. Then $\forall k\in \mathbb{N}^+, t=1,...,T-k$, the circular target with constant speed can be represented by
\begin{equation*}
   r_{t+1}=Sr_t,\quad S= 
    \begin{bmatrix}
        1 & 0 & T_s & 0\\
        0 & 1 & 0 & T_s\\
        0 & 0 & \cos{\theta} & -\sin{\theta}\\
        0 & 0 & \sin{\theta} & \cos{\theta}
    \end{bmatrix},
\end{equation*}
where we used Euler discretization. 
\end{exmp}
By allowing time-varying dynamics, we can also capture switching patterns, e.g. waypoint tracking, switching orientation, etc.
In the case of ARX models $r_{t+1}=S_tz_t+v_t$, we can capture targets that behave like control systems themselves, e.g., this representation could be used for controlling a drone in order to track another drone.

In many previous works~\cite{karapetyan2023online,li2019online,nonhoff2023online}, learning complexity is captured by the total variation (path length) of the target itself
\[
L_T=\sum_{t=0}^{T-1}\snorm{r_{t+1}-r_t}.
\]
In contrast, here, we capture learning complexity by the total variation of the target dynamics
\[
V_T=\sum_{t=0}^{T-1}\snorm{S_{t+1}-S_t}_F.
\]
In the case of ARX models we just replace $S_t$ with $\tilde{S}_t$. As shown in~\citep[Th. 3]{li2019online}, in the case of unstructured targets, e.g. randomly generated $r_t$, the former notion of complexity is optimal. However, we argue that in the case of dynamic structure, the former notion of complexity might be suboptimal. Consider the circular target or the linear target example. The total variation of the target state is linear with $T$ since $r_t$ is constantly changing: $L_T=\Omega(T)$. However, the total variation of the target dynamics is zero $V_T=0$. Using the PLOT algorithm would give us logarithmic regret in this case, while previous methods would give us linear regret.

We stress that every target can be represented trivially by ARX models. We can just set $S_t=0$, $v_t=r_t$. In this case, the total variation of $\tilde{S}_t=\matr{S_t&v_t}$ is equal to $L_T$, as in prior work.  However, the representation is non-unique in general. In many cases, the underlying dynamic structure will imply that a lower complexity representation exists, e.g., see Examples~\ref{exmp:const v},~\ref{exmp:circular}. \rev{If the targets have no dynamic structure, then, our current setting still captures this case; the regret of PLOT will just be larger, i.e., of the same order as $L_T$. }
Hence, our dynamic regret bounds supersede the ones in prior work. 

\subsection{Multi-step ahead dynamics expressions}\label{app_sec:multi_step_dynamics}
Given the state representation~\eqref{app_eq:state_space_exosystem}, we can now represent the multi-step ahead recursions in a compact way. Let 
\[\Phi_{t+i|t}\triangleq\mathcal{A}_{t+i}\cdots\mathcal{A}_{t+1},\,\Phi_{t|t}=\mathbf{I}_{np}\] 
be the transition matrix of system~\eqref{app_eq:state_space_exosystem}.
Then, we obtain $k-$step ahead recursions of the form
\[
r_{t+k}=S_{t+k|t}z_{t},
\]
where the matrices $S_{t+k|t}$ are given by
\begin{equation}
\begin{aligned}
    S_{t+k|t}&=\mathcal{B}^\top \Phi_{t+i|t}
    \end{aligned}
\end{equation}
By definition, if $S_{0,\dots,T}$ satisfies Assumption~\ref{assumption:stability}, then $\snorm{\mathcal{B}^\top \Phi_{t+i|t}}\le M$.
\subsection{Perturbation Analysis}
Let $S_{0},\dots,S_T$ be any sequence that satisfies Assumption~\ref{assumption:stability}. We will now show that we can upper-bound the total variation norm of the $k$-step ahead matrices $V^{k}_T\triangleq \sum_{t=k}^{T-k}\snorm{S_{t+k|t}-S_{t|t-k}}$ in terms of the total variation norm of the one-step ahead matrices $V_T\triangleq \sum_{t=1}^{T}\lVert S_t - S_{t-1}\rVert_F$. 
\begin{lem} \label{app_lem:perturbation}Recall that $M$ is the upper bound on matrices $S_{t+k|t}$ and $p$ is the past horizon (memory) of the auto-regressive dynamics. Let $\tilde{M}=\max\{M,1\}$. The following inequality is true
\[
V^k_T\le \sqrt{p} k^2 \tilde{M}^2 V_T.
\]
\end{lem}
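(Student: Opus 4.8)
The plan is to reduce the claim to a telescoping estimate on products of the companion transition matrices $\mathcal{A}_t$ from the state-space representation~\eqref{app_eq:state_space_exosystem}. Using $S_{t+k|t}=\mathcal{B}^\top\Phi_{t+k|t}$ with $\Phi_{t+k|t}=\mathcal{A}_{t+k}\cdots\mathcal{A}_{t+1}$, I would first write
\[
S_{t+k|t}-S_{t|t-k}=\mathcal{B}^\top\bigl(\mathcal{A}_{t+k}\cdots\mathcal{A}_{t+1}-\mathcal{A}_{t}\cdots\mathcal{A}_{t-k+1}\bigr),
\]
so that the two $k$-fold products have matched factors differing by exactly a shift of $k$ in the time index. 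Applying the standard product-telescoping identity, pairing the $i$-th factor $\mathcal{A}_{t+i}$ with $\mathcal{A}_{t-k+i}$, yields
\[
S_{t+k|t}-S_{t|t-k}=\sum_{i=1}^{k}\mathcal{B}^\top\Phi_{t+k|t+i}\,(\mathcal{A}_{t+i}-\mathcal{A}_{t-k+i})\,\Phi_{t-k+i-1|t-k},
\]
where the empty partial products at $i=1$ and $i=k$ are interpreted as $\Phi_{t-k|t-k}=\mathbf{I}$ and $\Phi_{t+k|t+k}=\mathbf{I}$.

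The second step is to bound the three factors in each summand. The difference $\mathcal{A}_{t+i}-\mathcal{A}_{t-k+i}$ vanishes outside its top block row, where it equals $S_{t+i}-S_{t-k+i}$; hence $\mathcal{B}^\top\Phi_{t+k|t+i}(\mathcal{A}_{t+i}-\mathcal{A}_{t-k+i})=S^{[1]}_{t+k|t+i}(S_{t+i}-S_{t-k+i})$, where $S^{[1]}_{t+k|t+i}$ denotes the first $n\times n$ block column of $S_{t+k|t+i}$, whose norm is at most $M\le\tilde M$ by Assumption~\ref{assumption:stability}. For the trailing factor I would use the block structure of any partial transition matrix: each of the $p$ block rows of $\Phi_{b|a}$ equals either a multi-step matrix $S_{\cdot|a}$ (norm $\le M$) or a selection matrix (norm $\le 1$), so a row-by-row energy bound gives $\snorm{\Phi_{b|a}}\le\sqrt p\,\tilde M$. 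Combining these, each summand is controlled by $\sqrt p\,\tilde M^2\snorm{S_{t+i}-S_{t-k+i}}$, and therefore
\[
\snorm{S_{t+k|t}-S_{t|t-k}}\le\sqrt p\,\tilde M^2\sum_{i=1}^{k}\snorm{S_{t+i}-S_{t-k+i}}.
\]
The point that must be handled carefully, and which I expect to be the first obstacle, is securing $\sqrt p$ rather than $p$: one must exploit the $\mathcal{B}^\top$ projection on the left together with the top-block sparsity of the difference, which collapses the leading partial product into an $S$-type block of norm $\le M$ instead of incurring a full $\sqrt p\,\tilde M$ factor on both sides.

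Finally I would pass to total variation. Bounding the operator norm by the Frobenius norm and telescoping over a window of length $k$, $\snorm{S_{t+i}-S_{t-k+i}}\le\sum_{\ell=t-k+i+1}^{t+i}\snorm{S_\ell-S_{\ell-1}}_F$, and summing $\snorm{S_{t+k|t}-S_{t|t-k}}$ over $t=k,\dots,T-k$ reduces the problem to a counting argument. For a fixed $\ell$, the increment $\snorm{S_\ell-S_{\ell-1}}_F$ appears for those pairs $(t,i)$ with $1\le i\le k$ and $t+i-k+1\le\ell\le t+i$; each of the $k$ admissible values of $i$ fixes $t$ within a window of $k$ consecutive integers, so the multiplicity is at most $k^2$. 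This gives $\sum_{t=k}^{T-k}\sum_{i=1}^{k}\snorm{S_{t+i}-S_{t-k+i}}\le k^2\sum_{\ell}\snorm{S_\ell-S_{\ell-1}}_F\le k^2 V_T$, and hence $V^k_T\le\sqrt p\,k^2\tilde M^2 V_T$, which is the claim. The second obstacle is precisely this combinatorial bookkeeping that produces the $k^2$ factor; the rest is routine norm manipulation.
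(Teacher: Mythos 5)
Your proof is correct and relies on the same machinery as the paper's: the companion-form representation $S_{t+k|t}=\mathcal{B}^\top\Phi_{t+k|t}$, a telescoping identity for differences of products of the matrices $\mathcal{A}_t$, the block-structure norm bounds (the left partial product collapsed against $\mathcal{B}^\top$ contributing only $\tilde{M}$, the right partial product contributing $\sqrt{p}\,\tilde{M}$), and a double-counting argument producing the $k^2$ factor. The only difference is the order of the two telescoping steps: the paper first inserts intermediate predictors to reduce the shift-$k$ difference to $k$ shift-one differences $S_{j+1|j+1-k}-S_{j|j-k}$ and then telescopes each of these through the product with adjacent-time factor pairings, whereas you telescope the shift-$k$ product difference directly (pairing factors $k$ apart) and postpone the reduction to one-step increments $\snorm{S_\ell-S_{\ell-1}}_F$ to the final counting step --- an immaterial reordering that yields the same constant $\sqrt{p}\,k^2\tilde{M}^2$.
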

\begin{proof}
By adding and subtracting terms $S_{t+j|t+j-k}$, $j=1,\dots,k-1$ and the triangle inequality, we obtain\[\snorm{S_{t+k|t}-S_{t|t-k}}\le \sum_{j=t}^{t+k-1}\snorm{S_{j+1|j+1-k}-S_{j|j-k}}.\] Hence, we get
\begin{equation}\label{app_eq:helper_perturbation}
V^{k}_T\le k \sum_{j=k-1}^{T-1}\snorm{S_{j+1|j+1-k}-S_{j|j-k}}.
\end{equation}
Let us now analyze
\[
\Delta_{k,t}\triangleq S_{t+k+1|t+1}-S_{t+k|t}=\mathcal{B}^\top (\Phi_{t+k+1|t+1}-\Phi_{t+k|t}).
\]
Adding and subtracting $\Phi_{t+k|t+j-1}\Phi_{t+j|t+1}$, for $j=k,k-1,\dots,2$ we obtain
\begin{align*}
\Delta_{k,t}&=\mathcal{B}^\top (\mathcal{A}_{t+k+1}-\mathcal{A}_{t+k})\Phi_{t+k|t+1}+\mathcal{B}^\top\Phi_{t+k|t+k-1}(\mathcal{A}_{t+k}-\mathcal{A}_{t+k-1})\Phi_{t+k-1|t+1}+\dots\\
&+\mathcal{B}^\top\Phi_{t+k|t+1}(\mathcal{A}_{t+2}-\mathcal{A}_{t+1}).
\end{align*}
By the triangle inequality
\begin{align*}
\snorm{\Delta_{k,t}}&\le \snorm{\mathcal{B}^\top} \snorm{\mathcal{A}_{t+k+1}-\mathcal{A}_{t+k}}\snorm{\Phi_{t+k|t+1}}+\snorm{\mathcal{B}^\top\Phi_{t+k|t+k-1}}\snorm{\mathcal{A}_{t+k}-\mathcal{A}_{t+k-1}}\snorm{\Phi_{t+k-1|t+1}}+\dots\\
&+\snorm{\mathcal{B}^\top\Phi_{t+k|t+1}}\snorm{\mathcal{A}_{t+2}-\mathcal{A}_{t+1}}.
\end{align*}
By definition, all terms $\mathcal{B}^\top \Phi_{t+k|t+j-1}$, $j=k,k-1,\dots,2$ are bounded by $\max\{M,1\}$.
Meanwhile, the error terms
\[
\mathcal{A}_{t+j}-\mathcal{A}_{t+j-1}=\matr{S_{t+j}-S_{t+j-1}\\0\\\vdots\\0}
\]
can be bounded by $\snorm{S_{t+j}-S_{t+j-1}}$, for $j=k,\dots,2$. Finally, we need to bound the norm of the transition matrices. Observe that by definition
\[
\Phi_{t+j|t}=\matr{S_{t+j|t}\\S_{t+j-1|t}\\ \vdots\\ S_{t+j-p+1|t}},\,\text{ if }j\ge p,\, \Phi_{t+j|t}=\matr{S_{t+j|t}\\ S_{t+j-1|t}\\  \vdots \\  S_{t+1|t}\\ \cmidrule(lr){1-1}\begin{array}{cc}\mathbf{I}_{n(p-j)}&\mathbf{0}_{n(p-j)\times nj}\end{array}},\,\text{ if }j< p,
\]
where $\mathbf{0}_{q_1\times q_2}$ denotes the zero matrix of dimensions $q_1\times q_2$.
As a result $\snorm{\Phi_{t+j|t}}\le \sqrt{p}\tilde{M}$
Putting everything together, we obtain
\[
\snorm{\Delta_{k,t}}\le \sqrt{p}\tilde{M}^2\sum_{j=2}^{k}\snorm{S_{t+j}-S_{t+j-1}}.
\]
The results follow from the above inequality and~\eqref{app_eq:helper_perturbation}.
\end{proof}

\section{Proofs for Prediction}\label{app_sec:prediction_proof}
In this section, we prove Theorem~\ref{thm:ar_prediction_regret}. We provide regret upper bounds for prediction in terms of the total variation norm of the dynamics $V_T$. Note that the prediction guarantees hold for any sequence $r_{0:T}$ that satisfies Assumptions~\ref{assumption:boundedness},~\ref{assumption:stability}. First, we show the result for one-step ahead prediction $k=1$. Then, we generalize to $k>0$.
\subsection{Regret guarantees for one-step ahead prediction}
Consider the one-step ahead prediction problem. 
For simplicity, define $\hat{S}_{t|t-1}=\hat{S}_{t}$ and $P_{t|t-1}=P_{t}$. Then Algorithm~\ref{alg:RLS} is equivalent to Algorithm~\ref{alg:Matrix_RLS}.
\begin{algorithm}[ht!]
\caption{Recursive Least Squares, for $k=1$}\label{alg:Matrix_RLS}
\begin{algorithmic}
\REQUIRE{forgetting factor $\gamma \in (0, 1)$, initial regularization $\varepsilon$}
\STATE Initialize: $P_{-1}=\varepsilon I_{np}$, $\hat{S}_{0} \in \mathcal{S}$;
\FOR{$t = 0, \dots, T$}
    \STATE Predict $\hat{r}_{t}=\hat{S}_{t} z_{t-1}$ and incur loss $\snorm{r_{t}-\hat{S}_{t} z_{t-1}}^2$
    \STATE Update $P_{t} = \gamma P_{t-1} + z_{t-1}z_{t-1}^\top$;
    \STATE Update $\hat{S}_{t+1} = \Pi_{\mathcal{S}}^{P_t}(\hat{S}_t +(r_{t}-\hat{S}_tz_{t-1})z^\top_{t-1}P^{-1}_t)$,\\
    where $ \Pi^{P_t}_\mathcal{S}(Y)\triangleq \arg\min_{S\in\mathcal{S}} \snorm{S-Y}_{F,P_t}$;
\ENDFOR
\end{algorithmic}
\end{algorithm}
Let $r_t$, for $t\le T$ be any arbitrary target sequence satisfying Assumption~\ref{assumption:boundedness}. 
The target sequence may not necessarily satisfy~\eqref{eq:exosystem}. 
Let $S_{0},\dots,S_{T}$ be any ``comparator" sequence that satisfies Assumption~\ref{assumption:stability}. 
 Let $\hat{S}_{t}$ be the sequence generated by the RLS algorithm. Then the dynamic regret of the one-step ahead predictor versus the sequence $S_0,\dots,S_T$ is defined as
 \begin{equation}\label{app_eq:prediction_regret_k_1}
 \mathcal{R}^{(1)}_{\mathrm{pred}}(S_{0:T})\triangleq\frac{1}{2}\sum_{t=0}^T\snorm{\hat{S}_tz_{t-1}-r_t}^2-\frac{1}{2}\sum_{t=0}^T\snorm{S_tz_{t-1}-r_t}^2. 
 \end{equation}
 Note that if the sequence $r_t$ satisfies the autoregressive dynamics~\eqref{eq:exosystem} and we choose the comparator sequence to coincide with the true dynamics then the regret reduces to 
  \begin{equation*}
\mathcal{R}^{(1)}_{\mathrm{pred}}(S_{0:T})\triangleq\frac{1}{2}\sum_{t=0}^T\snorm{\hat{S}_tz_{t-1}-r_t}^2. 
 \end{equation*}
 Note that we define the projection operator $\Pi^{P_t}_\mathcal{S}(Y)$ as
 \begin{align}
    \Pi^{P_t}_\mathcal{S}(Y)\triangleq \arg\min_{S\in\mathcal{S}} \snorm{S-Y}_{F,P_t},
 \end{align}
 where $P_t$ should be symmetric positive definite. Recall that the weighted Frobenius norm is given by
 \[
\snorm{S}^2_{F,P_t}=\tr(SP_tS^\top).
 \]
To bound the regret of RLS, we adapt the proof of~\cite{yuan2020trading} while keeping track of all quantities of interest, e.g. the system dimension, logarithmic terms etc., and while working with matrices instead of vectors.

\begin{thm}\label{app_thm:RLS_k_1}
   Let $r_t,$ for $t\le T$ be any target trajectory that satisfies Assumption~\ref{assumption:boundedness}. Let $S_{0},\dots,S_{T}$ be any sequence that satisfies Assumption~\ref{assumption:stability}. Define the path length as  $V_T=\sum_{t=0}^{T-1}\snorm{S_{t+1}-S_t}_F$. The regret of the RLS algorithm (Algorithm~\ref{alg:Matrix_RLS}) for the one-step ahead prediction problem is upper bounded by
    \begin{align}\label{app_eq:RLS_base_regret_bound}
        \mathcal{R}^{(1)}_{\mathrm{pred}}(S_{0:T})&
        \le \frac{\beta_1}{1-\gamma}V_T+\beta_2 T\log \frac{1}{\gamma}+\beta_3\log \frac{1}{1-\gamma}+\beta_4
    \end{align}
    with the constants 
\begin{equation}\label{app_eq:betas}
\begin{aligned}
    \beta_1=2\sqrt{n}M\paren{\frac{\varepsilon+pD^2_r}{1-\gamma}},\,\beta_2=np (1+\sqrt{p}M)^2D^2_r,\,\\
    \beta_3=\beta_2,\,\beta_4=2\gamma\varepsilon n M^2+\beta_2\log\frac{\varepsilon+p D^2_r}{\varepsilon}.
    \end{aligned}
\end{equation}
\end{thm}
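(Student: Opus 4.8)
The plan is to read Algorithm~\ref{alg:Matrix_RLS} as a preconditioned online gradient (online Newton) method and run a telescoping potential argument in the $P_t$-weighted Frobenius geometry, exploiting that the squared loss is \emph{exactly} quadratic. Set $f_t(S)=\snorm{r_t-Sz_{t-1}}^2$, so $\nabla f_t(\hat S_t)=2(\hat S_tz_{t-1}-r_t)z_{t-1}^\top$ and the unprojected update is precisely $\hat S^*_{t+1}=\hat S_t-\tfrac12\nabla f_t(\hat S_t)P_t^{-1}$. The first ingredient is the exact second-order identity: for every $S$,
\[ f_t(\hat S_t)-f_t(S)=\langle\nabla f_t(\hat S_t),\hat S_t-S\rangle-\snorm{(\hat S_t-S)z_{t-1}}^2, \]
with $\langle A,B\rangle=\tr(A^\top B)$; this replaces strong convexity but holds with equality. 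I would instantiate it at $S=S_t$.

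The second ingredient is the law of cosines in the metric $\langle A,B\rangle_{P_t}=\tr(AP_tB^\top)$. Expanding $\snorm{\hat S^*_{t+1}-S_t}^2_{F,P_t}$ and substituting the update gives
\[ \langle\nabla f_t(\hat S_t),\hat S_t-S_t\rangle=\snorm{\hat S_t-S_t}^2_{F,P_t}+\tfrac14\snorm{\nabla f_t(\hat S_t)}^2_{F,P_t^{-1}}-\snorm{\hat S^*_{t+1}-S_t}^2_{F,P_t}. \]
Because $S_t\in\mathcal S$ and the projection step is nonexpansive in the $P_t$-norm, $\snorm{\hat S_{t+1}-S_t}_{F,P_t}\le\snorm{\hat S^*_{t+1}-S_t}_{F,P_t}$. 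Feeding this back, and using the recursion $P_t=\gamma P_{t-1}+z_{t-1}z_{t-1}^\top$ — which makes $\snorm{\hat S_t-S_t}^2_{F,P_t}-\snorm{(\hat S_t-S_t)z_{t-1}}^2=\gamma\snorm{\hat S_t-S_t}^2_{F,P_{t-1}}$ — collapses the per-step bound to
\[ f_t(\hat S_t)-f_t(S_t)\le\gamma\Phi_t-\snorm{\hat S_{t+1}-S_t}^2_{F,P_t}+\tfrac14\snorm{\nabla f_t(\hat S_t)}^2_{F,P_t^{-1}},\quad \Phi_t\triangleq\snorm{\hat S_t-S_t}^2_{F,P_{t-1}}. \]

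To telescope over $t=0,\dots,T$, I would write $\hat S_{t+1}-S_t=(\hat S_{t+1}-S_{t+1})+(S_{t+1}-S_t)$, turning $-\snorm{\hat S_{t+1}-S_t}^2_{F,P_t}$ into $-\Phi_{t+1}$ plus a comparator-drift cross term. The cross term is bounded by weighted Cauchy–Schwarz together with $\snorm{\hat S_{t+1}-S_{t+1}}_F\le 2\sqrt nM$ and $\snorm{P_t}\le(\varepsilon+pD_r^2)/(1-\gamma)$, and its sum yields the $V_T$ term with coefficient $\beta_1/(1-\gamma)$; the leftover $\sum_t(\gamma\Phi_t-\Phi_{t+1})$ is at most $\gamma\Phi_0\le 2\gamma\varepsilon nM^2$ since $(\gamma-1)\sum_{t\ge1}\Phi_t\le0$, giving the first piece of $\beta_4$. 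For the gradient term, the rank-one structure gives $\snorm{\nabla f_t(\hat S_t)}^2_{F,P_t^{-1}}=4\snorm{\hat S_tz_{t-1}-r_t}^2(z_{t-1}^\top P_t^{-1}z_{t-1})$ with $\snorm{\hat S_tz_{t-1}-r_t}\le(1+\sqrt pM)D_r$ under Assumptions~\ref{assumption:boundedness},~\ref{assumption:stability}, and the scalars are summed by the forgetting-factor elliptical-potential lemma: the matrix determinant lemma and Sherman–Morrison give $z_{t-1}^\top P_t^{-1}z_{t-1}=1-\det(\gamma P_{t-1})/\det P_t\le\log(\det P_t/\det P_{t-1})+np\log\tfrac1\gamma$, which telescopes to $(T+1)np\log\tfrac1\gamma+np\log\tfrac1{1-\gamma}+np\log\tfrac{\varepsilon+pD_r^2}{\varepsilon}$ after $\det P_T\le\snorm{P_T}^{np}$ and $\det P_{-1}=\varepsilon^{np}$. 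These supply the $\beta_2T\log\tfrac1\gamma$ and $\beta_3\log\tfrac1{1-\gamma}$ terms and the remaining part of $\beta_4$; collecting everything with the factor $\tfrac12$ from the regret definition gives~\eqref{app_eq:RLS_base_regret_bound}.

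I expect the main obstacle to be the telescoping step, where the forgetting factor, the projection, and the moving comparator interact simultaneously. The factor $\gamma$ both enables tracking and breaks a clean telescope, which forces the potential to be weighted by the \emph{previous} Gram matrix $P_{t-1}$ and the projection to be taken in the $P_t$-norm — exactly why Algorithm~\ref{alg:Matrix_RLS} uses the weighted Frobenius projection rather than a Euclidean one. Making the drift scale with $V_T$ instead of $T$ hinges on correctly pairing $-\Phi_{t+1}$ against the cross term (a squared-increment bound would instead produce $\sum_t\snorm{S_{t+1}-S_t}^2$), and the precise power of $(1-\gamma)$ in $\beta_1$ is sensitive to how conservatively $\snorm{\hat S_{t+1}-S_{t+1}}_{F,P_t}$ is controlled. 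The elliptical-potential summation and the final constant bookkeeping are then routine.
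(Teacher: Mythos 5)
Your proof is correct and follows essentially the same route as the paper's own argument (itself adapted from Yuan--Lamperski): an exact law-of-cosines expansion of the quadratic loss, non-expansiveness of the $P_t$-weighted projection with the comparator $S_t\in\mathcal{S}$, a Cauchy--Schwarz bound on the comparator-drift cross term using $\snorm{P_t}\le \varepsilon+pD_r^2/(1-\gamma)$ and $\snorm{\hat S_{t+1}-S_{t+1}}_F\le 2\sqrt{n}M$, and the forgetting-factor elliptical-potential lemma, producing exactly the four terms in~\eqref{app_eq:RLS_base_regret_bound}. The only difference is organizational: you absorb the negative curvature term through the exact identity $\snorm{A}_{F,P_t}^2-\snorm{Az_{t-1}}^2=\gamma\snorm{A}_{F,P_{t-1}}^2$ into a discounted potential $\gamma\Phi_t-\Phi_{t+1}$ weighted by the \emph{previous} Gram matrix, whereas the paper keeps the potential weighted by $P_t$ and cancels that same curvature term against the correction terms of its separate telescoping lemma (Lemmas~\ref{app_lem:gradient_upper_bound} and~\ref{app_lem:telescope}); the two bookkeepings yield matching (indeed, in your case slightly tighter) constants once the paper's own slack such as $T+1\le 2T$ and $\varepsilon\le\varepsilon/(1-\gamma)$ is applied.
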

\begin{proof}
Using the law of cosines identity $\snorm{b-a}^2-\snorm{b-c}^2=-\snorm{a-c}^2-2(a-c)^\top(b-a)$, we obtain
\begin{align*}
\mathcal{R}_{\mathrm{pred}}(S_{0:T})&= \sum_{t=0}^T-\frac{1}{2}z_{t-1}^\top(\hat{S}_t-S_t)^\top(\hat{S}_t-S_t)z_{t-1}
    -(r_t-\hat{S}_tz_{t-1})^\top(\hat{S}_t-S_t)z_{t-1}\\
&=\sum_{t=0}^T -\frac{1}{2}\tr((\hat{S}_t-S_t)z_{t-1}z_{t-1}^\top(\hat{S}_t-S_t)^\top)
    -\tr((\hat{S}_t-S_t)z_{t-1}(r_t-\hat{S}_tz_{t-1})^\top),
\end{align*}
where we used the identity $a^\top b=\tr(a^\top b)=\tr(b a^\top)$.
Invoking Lemma~\ref{app_lem:gradient_upper_bound} and Lemma~\ref{app_lem:telescope}, we have
\begin{align*}
\mathcal{R}_{\mathrm{pred}}(S_{0:T})&\le 2\sqrt{p}M(\varepsilon+\frac{pD^2_r}{1-\gamma})V_T+\frac{1}{2}\max_{t\le T}\snorm{r_{t}-\hat{S}_t z_{t-1}}^2\sum_{t=0}^{T}z_{t-1}^\top P^{-1}_tz_{t-1}+2\gamma\varepsilon n M^2.
\end{align*}
The result follows from Lemma~\ref{app_lem:elliptical_potential} and the fact that $\snorm{r_t}\le D_r$, $\snorm{\hat{S}_{t}z_{t-1}}\le M\snorm{z_{t-1}}\le M\sqrt{p}D_r$ since $z_t$ a concatenation of $p$ vectors. As a result, we can bound the error $\snorm{r_{t}-\hat{S}_t z_{t-1}}^2$ by $(1+\sqrt{p}M)^2D^2_r$ for all $t\le T$. To simplify the final bound we use $T+1\le 2T$.
\end{proof}
The following technical lemmas are auxiliary results towards proving Theorem~\ref{app_thm:RLS_k_1}. Lemmas~\ref{app_lem:gradient_upper_bound},~\ref{app_lem:telescope} control the growth of the first order (gradient) term in the regret. Lemma~\ref{app_lem:information_matrix_bound} upper bounds matrix $P_t$ while Lemma~\ref{app_lem:elliptical_potential} contains a standard elliptical potential bound, tailored to the forgetting factor case.
\begin{lem}[Gradient Inner Product]\label{app_lem:gradient_upper_bound}
Consider the conditions of Theorem~\ref{app_thm:RLS_k_1}. Let $\hat{S}_t$ be the RLS estimate at time $t$ and $S_t,S_{t+1}$ be any arbitrary matrices that satisfy the constraints, i.e., $S_t,\,S_{t+1}\in\mathcal{S}$. We have
  \begin{align*}
&-2\tr(z_{t-1}(r_{t}-\hat{S}_t z_{t-1})^\top (\hat{S}_t-S_t))\\
&\le \snorm{\hat{S}_{t}-S_{t}}^2_{F,P_t}-\snorm{\hat{S}_{t+1}-S_{t+1}}^2_{F,P_t}+\snorm{r_{t}-\hat{S}_t z_{t-1}}^2 z^\top_{t-1} P^{-1}_t z_{t-1}+4\sqrt{p}M(\varepsilon+\frac{p D^2_r}{1-\gamma})\snorm{S_{t+1}-S_{t}}_{F}
 \end{align*}
\end{lem}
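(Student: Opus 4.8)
The plan is to read the left-hand side as (twice) the $P_t$-weighted inner product between the gradient of the instantaneous loss $f_t(S)=\snorm{r_t-Sz_{t-1}}^2$ at $\hat S_t$ and the comparator error $\hat S_t-S_t$, and then run the standard mirror-descent / online-Newton ``three-point'' argument, adapted to three features of Algorithm~\ref{alg:Matrix_RLS}: the $P_t$-weighted Frobenius geometry, the projection step, and the fact that the comparator $S_t$ is moving. Write $e_t\triangleq r_t-\hat S_t z_{t-1}$ for the prediction error and recall that the pre-projection iterate is $\hat S^*_{t+1}=\hat S_t+e_t z^\top_{t-1}P^{-1}_t$. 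Using cyclicity of the trace, the left-hand side is exactly the scalar $-2\,e^\top_t(\hat S_t-S_t)z_{t-1}$.

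First I would establish an exact identity by expanding $\snorm{\hat S^*_{t+1}-S_t}^2_{F,P_t}$ along the decomposition $\hat S^*_{t+1}-S_t=(\hat S_t-S_t)+e_tz^\top_{t-1}P^{-1}_t$. The cross term is $2\tr\!\paren{(\hat S_t-S_t)P_t P^{-1}_t z_{t-1}e^\top_t}=2\,e^\top_t(\hat S_t-S_t)z_{t-1}$, i.e. the negative of the left-hand side, while the quadratic term collapses, using $P_tP^{-1}_t=\mathbf I$, to $\snorm{e_t z^\top_{t-1}P^{-1}_t}^2_{F,P_t}=\snorm{e_t}^2\,z^\top_{t-1}P^{-1}_t z_{t-1}$. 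Rearranging gives the equality
\[
-2\,e^\top_t(\hat S_t-S_t)z_{t-1}=\snorm{\hat S_t-S_t}^2_{F,P_t}-\snorm{\hat S^*_{t+1}-S_t}^2_{F,P_t}+\snorm{e_t}^2\,z^\top_{t-1}P^{-1}_t z_{t-1},
\]
which already yields the first and third terms of the claim, but with $\hat S^*_{t+1}$ instead of $\hat S_{t+1}$ and $S_t$ instead of $S_{t+1}$.

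Next I would remove the pre-projection iterate and shift the comparator. Since $P_t\succ0$ induces a genuine inner product $\langle X,Y\rangle=\tr(XP_tY^\top)$ and $\mathcal S$ is convex with $S_t\in\mathcal S$ (Assumption~\ref{assumption:stability}), the generalized Pythagorean inequality for the $P_t$-weighted projection $\hat S_{t+1}=\Pi^{P_t}_{\mathcal S}(\hat S^*_{t+1})$ gives $\snorm{\hat S^*_{t+1}-S_t}^2_{F,P_t}\ge\snorm{\hat S_{t+1}-S_t}^2_{F,P_t}$, turning the equality into a $\le$ inequality. Then, writing $D_t\triangleq S_{t+1}-S_t$, I add and subtract to obtain
\[
-\snorm{\hat S_{t+1}-S_t}^2_{F,P_t}=-\snorm{\hat S_{t+1}-S_{t+1}}^2_{F,P_t}-2\tr\!\paren{(\hat S_{t+1}-S_t)P_tD^\top_t}+\snorm{D_t}^2_{F,P_t},
\]
producing the telescoping term $\snorm{\hat S_{t+1}-S_{t+1}}^2_{F,P_t}$ and isolating a correction.

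Finally I would bound the correction by the path-length term. Combining the two pieces into a single trace, $-2\tr((\hat S_{t+1}-S_t)P_tD^\top_t)+\snorm{D_t}^2_{F,P_t}=\tr\!\paren{(D_t-2(\hat S_{t+1}-S_t))P_tD^\top_t}$, and applying the trace Cauchy--Schwarz bound $\abs{\tr(XP_tD^\top_t)}\le\snorm{P_t}\snorm{X}_F\snorm{D_t}_F$, reduces everything to $\snorm{D_t-2(\hat S_{t+1}-S_t)}_F\le 4\sqrt n\,M$ (each of $\hat S_{t+1},S_t,S_{t+1}\in\mathcal S$ has $\snorm{\cdot}_F\le\sqrt n\,M$) and the spectral bound $\snorm{P_t}\le\varepsilon+\tfrac{pD^2_r}{1-\gamma}$ from Lemma~\ref{app_lem:information_matrix_bound} (obtained by unrolling $P_t=\gamma P_{t-1}+z_{t-1}z^\top_{t-1}$ with $\snorm{z_{t-1}}^2\le pD^2_r$). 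Collecting constants reproduces the stated coefficient multiplying $\snorm{S_{t+1}-S_t}_F$. The main obstacle is not the three-point algebra, which is routine once trace cyclicity is tracked, but the interaction of the moving comparator with the projection: one must check the Pythagorean step is valid in the $P_t$-norm, and, crucially, convert the quadratic term $\snorm{D_t}^2_{F,P_t}$ into one that is \emph{linear} in $\snorm{D_t}_F$ by exploiting boundedness of $\mathcal S$, since it is precisely this that makes the final dependence scale with the path length $V_T$ rather than $\sum_t\snorm{S_{t+1}-S_t}^2_F$.
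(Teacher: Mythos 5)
Your proposal is correct and follows essentially the same route as the paper's proof: expand the pre-projection RLS update in the $P_t$-weighted Frobenius geometry, apply the projection inequality (valid since $\mathcal{S}$ is convex and $S_t\in\mathcal{S}$), shift the comparator from $S_t$ to $S_{t+1}$, and control the resulting correction via trace Cauchy--Schwarz together with $\snorm{P_t}\le \varepsilon + pD_r^2/(1-\gamma)$ and the boundedness of $\mathcal{S}$. The only cosmetic difference is in the last step, where you absorb the quadratic term $\snorm{S_{t+1}-S_t}^2_{F,P_t}$ into a single trace before bounding it, whereas the paper drops it (it enters with a favorable sign) and bounds only the cross term $2\tr\bigl((\hat S_{t+1}-S_{t+1})P_t(S_{t+1}-S_t)^\top\bigr)$; both routes produce the constant $4\sqrt{n}M(\varepsilon+pD_r^2/(1-\gamma))$ that appears in the paper's own proof (the $\sqrt{p}$ in the lemma statement versus the $\sqrt{n}$ derived in the proof is an inconsistency internal to the paper, not a gap in your argument).
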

\begin{proof}
By the non-expansiveness of the projection operator, we have
 \begin{align*}
\snorm{\hat{S}_{t+1}-S_{t}}^2_{F,P_t}&\le \snorm{\hat{S}_{t}+(r_{t}-\hat{S}_t z_{t-1})z^\top_{t-1} P^{-1}_t-S_{t}}^2_{F,P_t}\\
&=\snorm{\hat{S}_{t}-S_{t}}^2_{F,P_t}+\snorm{(r_{t}-\hat{S}_t z_{t-1})z^\top_{t-1} P^{-1}_t}^2_{F,P_t}+2\tr(z_{t-1}(r_{t}-\hat{S}_t z_{t-1})^\top (\hat{S}_t-S_t))\\
&=\snorm{\hat{S}_{t}-S_{t}}^2_{F,P_t}+\tr\paren{(r_{t}-\hat{S}_t z_{t-1})z^\top_{t-1} P^{-1}_tz_{t-1}(r_{t}-\hat{S}_t z_{t-1})^\top}\\
&\quad+2\tr(z_{t-1}(r_{t}-\hat{S}_t z_{t-1})^\top (\hat{S}_t-S_t))\\
&=\snorm{\hat{S}_{t}-S_{t}}^2_{F,P_t}+\snorm{r_{t}-\hat{S}_t z_{t-1}}^2 z^\top_{t-1} P^{-1}_t z_{t-1}+2\tr(z_{t-1}(r_{t}-\hat{S}_t z_{t-1})^\top (\hat{S}_t-S_t))
 \end{align*}
Meanwhile, adding and subtracting $S_{t+1}$ in the norm in the left-hand side of the above inequality we obtain
  \begin{align*}
\snorm{\hat{S}_{t+1}-S_{t}}^2_{F,P_t}&=\snorm{\hat{S}_{t+1}-S_{t+1}}^2_{F,P_t}+\snorm{S_{t+1}-S_{t}}^2_{F,P_t}+2\tr((\hat{S}_{t+1}-S_{t+1})P_t(S_{t+1}-S_{t})^\top)\\
&\ge \snorm{\hat{S}_{t+1}-S_{t+1}}^2_{F,P_t}+\snorm{S_{t+1}-S_{t}}^2_{F,P_t}-2\snorm{\hat{S}_{t+1}-S_{t+1}}_{F,P_t}\snorm{S_{t+1}-S_{t}}_{F,P_t}\\
&\ge \snorm{\hat{S}_{t+1}-S_{t+1}}^2_{F,P_t}-2\snorm{\hat{S}_{t+1}-S_{t+1}}_{F,P_t}\snorm{S_{t+1}-S_{t}}_{F,P_t}\\
&\ge \snorm{\hat{S}_{t+1}-S_{t+1}}^2_{F,P_t}-4\sqrt{n}M(\varepsilon +\frac{p D^2_r}{1-\gamma})\snorm{S_{t+1}-S_{t}}_{F},
 \end{align*}
where we used $\snorm{\hat{S}_{t+1}-S_{t+1}}_F\le\sqrt{\mathrm{rank}(\hat{S}_{t+1}-S_{t+1})}\snorm{\hat{S}_{t+1}-S_{t+1}}$, $\mathrm{rank}(\hat{S}_{t+1}-S_{t+1})\le n$, $\snorm{\hat{S}_{t+1}-S_{t+1}}\le 2M $
and Lemma~\ref{app_lem:information_matrix_bound}. Combining the above two inequalities gives us the result.
\end{proof}

\begin{lem}[Telescoping Series]\label{app_lem:telescope} Consider the conditions of Theorem~\ref{app_thm:RLS_k_1}.
\[
\sum_{t=0}^{T} \snorm{\hat{S}_{t}-S_{t}}^2_{F,P_t}-\snorm{\hat{S}_{t+1}-S_{t+1}}^2_{F,P_t}\le \sum_{t=0}^{T}\tr((\hat{S}_{t}-S_{t})z_{t-1}z_{t-1}^\top (\hat{S}_{t}-S_{t})^\top)+\gamma\varepsilon\,\tr((\hat{S}_{0}-S_{0}) (\hat{S}_{0}-S_{0})^\top).
\]
\begin{proof}
Notice that
\begin{align*}
\snorm{\hat{S}_{t+1}-S_{t+1}}^2_{F,P_{t+1}}-\snorm{\hat{S}_{t+1}-S_{t+1}}^2_{F,P_t}&=\tr((\hat{S}_{t+1}-S_{t+1})(P_{t+1}-P_t) (\hat{S}_{t+1}-S_{t+1})^\top)\\
&\le \tr((\hat{S}_{t+1}-S_{t+1})z_{t}z_{t}^\top (\hat{S}_{t+1}-S_{t+1})^\top),
\end{align*}
since $P_{t+1}=\gamma P_{t}+z_{t}z_{t}^\top$, $P_t\succ 0$, and $\gamma\le 1$.
For $t=0$, we have
\[
\snorm{\hat{S}_{0}-S_{0}}^2_{F,P_{0}}=\tr((\hat{S}_{0}-S_{0})(z_{-1}z_{-1}^\top+\gamma \varepsilon I) (\hat{S}_{0}-S_{0})^\top)
\]
The result follows by dropping the last negative term $\snorm{\hat{S}_{T+1}-S_{T+1}}^2_{F,P_T}$.
\end{proof}

\begin{lem}[Design matrix bound]\label{app_lem:information_matrix_bound}
Consider the conditions of Theorem~\ref{app_thm:RLS_k_1} with $P_t=\gamma P_{t-1}+z_{t-1}z_{t-1}^\top$ and $P_{-1}=\varepsilon I$. We have \[\snorm{P_t}\le \varepsilon +\frac{pD^2_r}{1-\gamma}.\]
\end{lem}
\begin{proof}
Since $\snorm{z_t}\le \sqrt{p}\max_{t-p\le r\le t}\snorm{r_k}\le \sqrt{p}D_r$, we have recursively that
\[
\snorm{P_t}\le \gamma \snorm{P_{t-1}}+pD^2_r=\frac{1-\gamma^t}{1-\gamma}pD^2_r+\gamma^{t+1}\varepsilon\le \frac{1}{1-\gamma}pD^2_r+\varepsilon
\]
\end{proof}

\begin{lem}[Forgetting potential lemma]\label{app_lem:elliptical_potential}
Consider the conditions of Theorem~\ref{app_thm:RLS_k_1} with $P_t=\gamma P_{t-1}+z_{t-1}z_{t-1}^\top$ and $P_{-1}=\varepsilon I$. The following upper bound is true
\[\sum_{t=0}^{T}z_{t-1}^\top P^{-1}_tz_{t-1}\le np\log(\frac{\varepsilon+pD^2_r}{\varepsilon(1-\gamma)})+np(T+1)\log \frac{1}{\gamma}. \]
\end{lem}
\begin{proof}
Note that $\gamma P_{t-1}=P_t-z_{t-1}z^\top_{t-1}=P^{1/2}_t(\mathbf{I}_{np}-P^{-1/2}_t z_{t-1} z_{t-1}^\top P^{-1/2}_t)P^{1/2}_t$. Using the identity
$\det(I+DC)=\det(I+CD)$, we obtain
\[
\det(\gamma P_{t-1})=\det(P_t)(1-z_{t-1}^\top P^{-1}_tz_{t-1}),
\]
which, in turn, implies
\[
z_{t-1}^\top P^{-1}_tz_{t-1}=1-\frac{\det(\gamma P_{t-1})}{\det(P_t)}\le -\log\frac{\det(\gamma P_{t-1})}{\det(P_t)}.
\]
The inequality follows from the fact that $\gamma P_{t-1}\preceq P_t$ and the elementary inequality $x-1\ge \log x$, for $0<x\le 1$. By the properties of the determinant, we also have $\det(\gamma P_{t-1})=\gamma^{np}\det(P_{t-1})$. Hence we have
\[
z_{t-1}^\top P^{-1}_tz_{t-1}\le \log\det(P_{t})-\log\det(P_{t-1})+np\log 1/\gamma.
\]
Summing all inequalities and by telescoping
\[\sum_{t=0}^{T}z_t^\top P^{-1}_tz_t\le \log\det P_{T}-\log\det(\varepsilon I_{np})+np(T+1)\log 1/\gamma. \]
The final bound follows from the upper bound on $P_T$ given in Lemma~\ref{app_lem:information_matrix_bound}.
\end{proof}

\end{lem}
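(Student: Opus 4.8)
The plan is to prove this as a telescoping estimate for the time-varying weighted Frobenius potential $\snorm{\hat{S}_t - S_t}^2_{F,P_t} = \tr\!\left((\hat{S}_t-S_t)P_t(\hat{S}_t-S_t)^\top\right)$, where the subtlety is that the weight $P_t$ itself drifts with $t$. Throughout, write $E_t \triangleq \hat{S}_t - S_t$, so that the left-hand side is $\sum_{t=0}^T \left( \snorm{E_t}^2_{F,P_t} - \snorm{E_{t+1}}^2_{F,P_t} \right)$. The single obstruction to a clean telescoping is that the second summand carries the weight $P_t$ rather than the matched weight $P_{t+1}$; the heart of the argument is to account for this mismatch and charge it to the data term $z_t z_t^\top$.

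First I would replace the mismatched weight through the exact identity $\snorm{E_{t+1}}^2_{F,P_t} = \snorm{E_{t+1}}^2_{F,P_{t+1}} - \tr\!\left(E_{t+1}(P_{t+1}-P_t)E_{t+1}^\top\right)$, which is immediate from linearity of the trace in the weight. Substituting this into each summand splits the sum into a genuinely telescoping part $\sum_{t=0}^T \left( \snorm{E_t}^2_{F,P_t} - \snorm{E_{t+1}}^2_{F,P_{t+1}} \right) = \snorm{E_0}^2_{F,P_0} - \snorm{E_{T+1}}^2_{F,P_{T+1}}$ and a residual $\sum_{t=0}^T \tr\!\left(E_{t+1}(P_{t+1}-P_t)E_{t+1}^\top\right)$ that records the cost of each weight update.

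Then I would control the residual using the recursion $P_{t+1} = \gamma P_t + z_t z_t^\top$, which gives $P_{t+1} - P_t = (\gamma-1)P_t + z_t z_t^\top \preceq z_t z_t^\top$, the ordering holding because $\gamma \le 1$ and $P_t \succ 0$. Conjugating by $E_{t+1}$ and taking traces preserves this ordering, so each residual term is at most $\tr(E_{t+1} z_t z_t^\top E_{t+1}^\top)$, which after the shift $s = t+1$ is precisely a summand $\tr(E_s z_{s-1} z_{s-1}^\top E_s^\top)$ of the claimed right-hand side. For the boundary term I would expand $P_0 = \gamma P_{-1} + z_{-1} z_{-1}^\top = \gamma\varepsilon I + z_{-1}z_{-1}^\top$, giving $\snorm{E_0}^2_{F,P_0} = \tr(E_0 z_{-1} z_{-1}^\top E_0^\top) + \gamma\varepsilon\,\tr(E_0 E_0^\top)$, which supplies the $s=0$ summand together with the advertised $\gamma\varepsilon\,\tr((\hat{S}_0-S_0)(\hat{S}_0-S_0)^\top)$ term.

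The main obstacle, and the only place the index bookkeeping can go wrong, is the terminal step. Naively discarding the telescoped term $-\snorm{E_{T+1}}^2_{F,P_{T+1}}$ in isolation leaves the residual running through $t=T$, i.e.\ an extra $s=T+1$ summand that does \emph{not} appear in the claimed bound. The fix is to not drop that term on its own but to pair it with the $t=T$ residual, since $\tr\!\left(E_{T+1}(P_{T+1}-P_T)E_{T+1}^\top\right) - \snorm{E_{T+1}}^2_{F,P_{T+1}} = -\tr(E_{T+1} P_T E_{T+1}^\top) = -\snorm{E_{T+1}}^2_{F,P_T} \le 0$ by positive definiteness of $P_T$. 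Discarding this single nonpositive quantity restricts the residual to $t=0,\dots,T-1$, i.e.\ $s=1,\dots,T$, which together with the $s=0$ boundary summand yields exactly $\sum_{t=0}^T \tr(E_t z_{t-1} z_{t-1}^\top E_t^\top)$ and completes the proof.
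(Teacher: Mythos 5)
Your proposal is correct and follows essentially the same route as the paper's proof: the same exact weight-mismatch identity, the same bound $P_{t+1}-P_t=(\gamma-1)P_t+z_tz_t^\top\preceq z_tz_t^\top$, the same expansion $P_0=\gamma\varepsilon I+z_{-1}z_{-1}^\top$ for the boundary term, and the same dropped nonpositive quantity $\snorm{\hat{S}_{T+1}-S_{T+1}}^2_{F,P_T}$ at the terminal step. Your explicit recombination of the $t=T$ residual with the telescoped terminal term is just a more carefully bookkept version of the paper's terse instruction to ``drop the last negative term,'' and correctly resolves the off-by-one issue that the paper's one-line remark glosses over.
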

\subsection{Regret guarantees for multi-step ahead prediction}\label{app_sec:multi_step_pred_proof}
Consider now the case of $k-$steps ahead prediction. Recall that for any fixed $k=1,\dots,W$, the $k-$step ahead predictor maintains $k$ learners that are updated at non-overlapping intervals. An example for $k=1,2,3$ can be found in Figure~\ref{fig:schedule}.
\begin{figure}[b]
\centering
        \includegraphics[trim={1cm 1cm 1cm 2cm},clip,width=0.98\linewidth]{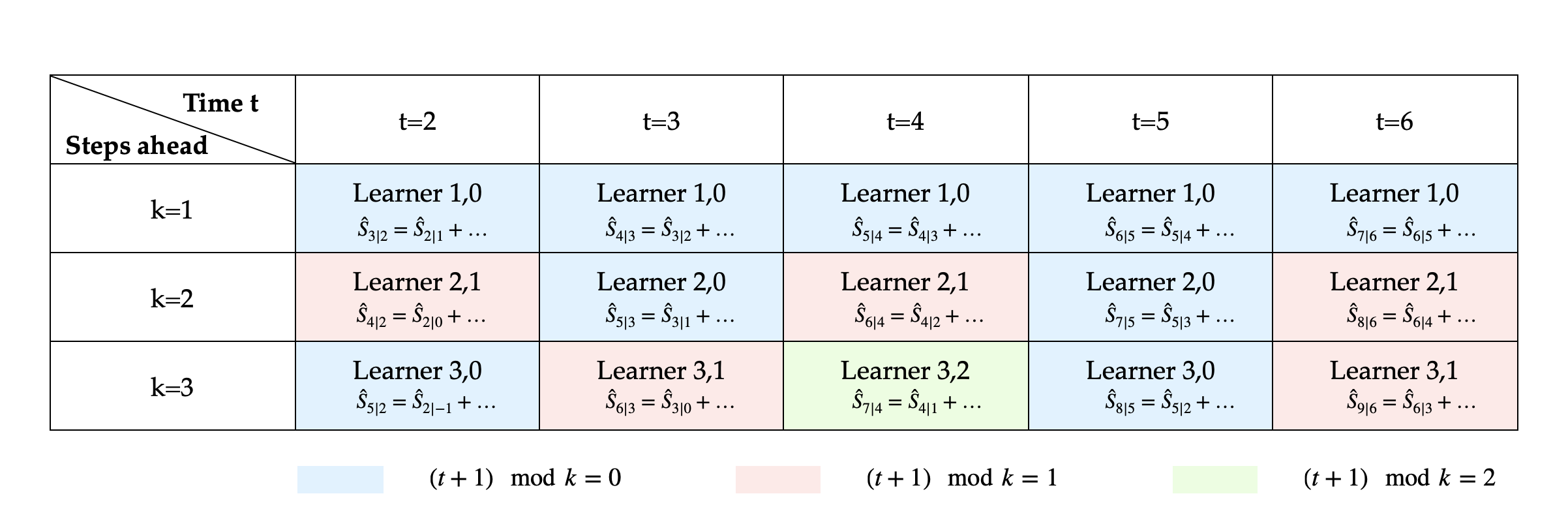}
        \caption{The updates of the $k=1,2,3$ steps ahead predictors for $t=2,\dots,6$. Projections are omitted for simplicity.
        Each predictor consists of $k$ independent learners that are updated at non-overlapping time steps. For example, for $k=2$, there are two learners and the update for $\hat{S}_{t+2|t}$ is decoupled from the one of $\hat{S}_{t+1|t-1}$. 
        }
    \label{fig:schedule}
\end{figure}

Let $S_{k-1|-1},\,S_{k|0},\dots,S_{T|T-k}$ be any comparator sequence that satisfies Assumption~\ref{assumption:stability}. In this case, the regret of $k-$step ahead prediction is defined as
\[
 \mathcal{R}^{(k)}_{\mathrm{pred}}\triangleq\sum_{t=k-1}^T\snorm{\hat{S}_{t|t-k}z_{t-k}-r_t}^2-\sum_{t=k-1}^T\snorm{S_{t|t-k}z_{t-k}-r_t}^2. 
\]
Once again if we choose the sequence to coincide with the true dynamics, we get that
\[
 \mathcal{R}^{(k)}_{\mathrm{pred}}\triangleq\sum_{t=k-1}^T\snorm{\hat{S}_{t|t-k}z_{t-k}-r_t}^2. 
\]
Recall that we have $k$ learners updated at non-overlapping time steps. Let $i$, for $i=0,\dots,k-1$, denote the index for the learner. We will decompose the problem into $k$ independent one-step ahead prediction problems and invoke Theorem~\ref{app_thm:RLS_k_1}.

Every learner $i$ is invoked to predict $r_{t+k|t}$ when $\left((t+1)\!\mod k\right)=i$. Learner $0$ is invoked to  predict $r_{k-1|-1}$, $r_{2k-1|k-1}$, $\dots$, etc. Similarly, learner $i$ is invoked to predict $r_{k+i-1|i-1}$, $r_{2k+i-1|k+i-1}$, $\dots$, etc. We will redefine the time axis to reduce the problem to one-step ahead prediction. To streamline the presentation define 
\begin{align*}
\hat{S}^{i}_{\tau+1}&\triangleq \hat{S}_{(\tau+1) k+i-1|\tau k+i-1}\\
S^{i}_{\tau+1}&\triangleq S_{(\tau+1) k+i-1|\tau k+i-1}\\
z^i_{\tau-1}&=z_{\tau k+i-1},\,r^i_{\tau}=r_{(\tau+1) k+i-1}
\end{align*}
and denote
\begin{align*}
N_{k,i}&=\lfloor \frac{T+1}{k}\rfloor,\text{ if }i\le (T+1)\mod k\\
&=\lfloor \frac{T+1}{k}\rfloor-1,\text{ otherwise.}
\end{align*}
Note that learner $i$ is invoked starting at $\tau=0$ up to $N_{i,k}$ times. 

Denote the regret of learner $i$ by
 \begin{equation}\label{app_eq:prediction_regret_multi_k}
 \mathcal{R}^{(k,i)}_{\mathrm{pred}}(S^i_{0:N_{k,i}})\triangleq\frac{1}{2}\sum_{\tau=0}^{N_{k,i}}\snorm{\hat{S}^i_{\tau+1}z^i_{\tau}-r^i_{\tau+1}}^2-\frac{1}{2}\sum_{t=0}^{N_{k,i}}\snorm{S^i_{\tau+1}z^i_{\tau}-r^i_{\tau+1}}^2. 
 \end{equation}
 We can now prove Theorem~\ref{thm:ar_prediction_regret}.

 \subsection{Proof of Theorem~\ref{thm:ar_prediction_regret}}
 Based on the above notation, the regret can be decomposed into $k$ terms
 \[
 \mathcal{R}^{(k)}_{\mathrm{pred}}=\sum_{i=0}^{k-1}\mathcal{R}^{(k,i)}_{\mathrm{pred}}(S^i_{0:N_{k,i}}).
 \]
 As a consequence of redefining the time axis, the prediction regret of every individual learner $i$ can be bounded using Theorem~\ref{app_thm:RLS_k_1}. Let $V^{k,i}_T=\sum_{\tau=0}^{N_{k,i}-1}\snorm{S^i_{\tau+1}-S^i_{\tau}}$. Then we obtain
 \begin{equation}\label{app_eq:one_learner_bound_k_steps_ahead}
\mathcal{R}^{(k,i)}_{\mathrm{pred}}(S^i_{0:N_{k,i}})\le \frac{\beta_1}{1-\gamma}V^{k,i}_{T}+\beta_2(\frac{T+1}{k})\log\frac{1}{\gamma}+\beta_3\log\frac{1}{1-\gamma}+\beta_4.
 \end{equation}
 Notice that $V^k_T=\sum_{i=0}^{k-1}V^{k,i}_T$. Hence, summing up we obtain
 \[
\mathcal{R}^{(k)}_{\mathrm{pred}}\le \frac{\beta_1}{1-\gamma}V^{k}_{T}+\beta_2(T+1)\log\frac{1}{\gamma}+k\beta_3\log\frac{1}{1-\gamma}+k\beta_4.\tag*{\qed}
 \]

\section{Regret of the PLOT algorithm}\label{app_sec:control_proof}
Let $\Sigma=B^\top XB+R$. Consider again the optimal non-causal policy
\[
u_t^*(x_t) = -K(x_t-r_t) - \underbrace{\sum_{i=t}^{T-1}K_{i-t}(Ar_{i}-r_{i+1})}_{q_t(r_{t:T})},
\]
where the optimal feedforward terms are given by
\[
q_t(r_{t:T})=\sum_{i=t}^{T-1}K_{i-t}(Ar_{i}-r_{i+1}).
\]
Consider also the causal suboptimal policy of the PLOT algorithm 
\[
    u_t^{\pi}(x_t) \!=\! -K (x_t-r_t)- \underbrace{\!\sum_{i=t}^{\min\{t+W-1,T\}}\!K_{i-t}(Ar_{i|t}-r_{i+1|t})}_{\hat{q}_t(r_{t:t+W|t})},
\]
where we define the truncated feedforward terms that use the reference predictions
\[
\hat{q}_t(r_{t:t+W|t})=\sum_{i=t}^{\min\{t+W-1,T\}}\!K_{i-t}(Ar_{i|t}-r_{i+1|t}).
\] %
Finally, define the residual feedforward terms $
\delta_{t}(r_{t+W:T})\triangleq q_t(r_{t:T})-\hat{q}_t(r_{t:t+W})$. With these definitions in hand, we can now analyze the regret of PLOT.

By invoking the performance difference lemma \rev{(Lemma~\ref{lemma:mpc regret})}, we obtain that the regret is equal to
\begin{align*}
   \mathcal{R}(\pi) &=\sum_{t=0}^{T-1}\lVert \hat{q}_t(r_{t:t+W|t}) - q_t(r_{t:T})\rVert^2_{\Sigma}\nonumber\\
   &=\sum_{t=0}^{T-1}\lVert  \hat{q}_t(r_{t:t+W|t}-r_{t:t+W})\rev{-}\delta_t(r_{t+W:T})\rVert^2_{\Sigma}.
   \end{align*}
Invoking Cauchy-Schwartz for the two summands, we can now decompose the regret  into two terms
\begin{equation}\label{app_eq:regret_decomposition}
    \mathcal{R}(\pi)\le \underbrace{2 \sum_{t=0}^{T-1}\snorm{\hat{q}_t(r_{t:t+W|t}-r_{t:t+W})}^2_{\Sigma}}_{\text{prediction error}}+\underbrace{2\sum_{t=0}^{T} \snorm{\delta_t(r_{t+W:T})}^2_{\Sigma}}_{\text{truncation error}},
\end{equation}
where the first term captures the effect of the prediction error, while the second term captures the effect of truncation.
To bound the latter we invoke the following lemma.
\begin{lem}[Truncation term]\label{app_lem:control_truncation}
The truncation term of the regret satisfies
\[
2\sum_{t=0}^{T} \snorm{\delta_t(r_{t+W:T})}^2_{\Sigma}=2\sum_{t=0}^{T-1-W}\snorm{\sum_{i=t+W}^{T-1}K_{i-t}(Ar_{i}-r_{i+1})}^2_{\Sigma}\le \alpha_1\frac{\rho^{2W}}{(1-\rho)^2}  T,
\]
where
\begin{equation}\label{eq:alpha_1}
    \alpha_1=2c^2_0(\snorm{A}+1)^2D^2_r\snorm{\Sigma}. 
\end{equation}
\end{lem}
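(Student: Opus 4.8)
The plan is to first turn the expression into an explicit tail sum, and then control that tail using the exponential decay of the LQT gains recorded in~\eqref{eq:LQT_gains_exponenitally_decaying}. For the equality, recall that $\delta_t(r_{t+W:T})=q_t(r_{t:T})-\hat{q}_t(r_{t:t+W})$, where $q_t$ sums $K_{i-t}(Ar_i-r_{i+1})$ over $i=t,\dots,T-1$ and $\hat{q}_t$ sums the same terms over $i=t,\dots,\min\{t+W-1,T\}$, both evaluated at the \emph{true} targets. When $t\le T-1-W$ the truncated upper limit is $t+W-1$, so the difference is precisely the tail $\sum_{i=t+W}^{T-1}K_{i-t}(Ar_i-r_{i+1})$; when $t\ge T-W$ the truncated sum already contains every term up to $T-1$, so $\delta_t=0$. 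This bookkeeping yields the first equality and restricts the outer sum to $t=0,\dots,T-1-W$.

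For the inequality, I would bound a single summand. Since $\Sigma=B^\top XB+R\succ 0$, passing from the weighted to the Euclidean norm gives $\snorm{v}_\Sigma^2\le\snorm{\Sigma}\snorm{v}^2$. Applying the triangle inequality to the tail and substituting the two uniform bounds available in the excerpt---the gain decay $\snorm{K_{i-t}}\le c_0\rho^{i-t}$ from~\eqref{eq:LQT_gains_exponenitally_decaying}, and $\snorm{Ar_i-r_{i+1}}\le(\snorm{A}+1)D_r$, which follows from Assumption~\ref{assumption:boundedness}---I obtain
\[
\snorm{\delta_t}_\Sigma^2\le\snorm{\Sigma}(\snorm{A}+1)^2D_r^2c_0^2\Big(\sum_{i=t+W}^{T-1}\rho^{i-t}\Big)^2.
\]

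The last step is to bound the geometric tail uniformly in $t$: reindexing by $j=i-t$ gives $\sum_{i=t+W}^{T-1}\rho^{i-t}=\sum_{j=W}^{T-1-t}\rho^j\le\rho^W/(1-\rho)$, a bound independent of $t$. Summing over the at most $T$ surviving time steps and folding the constants into $\alpha_1=2c_0^2(\snorm{A}+1)^2D_r^2\snorm{\Sigma}$ delivers the claimed estimate. I do not anticipate a genuine obstacle here---the result is a direct consequence of the geometric decay of the feedback gains. The only points demanding care are the index bookkeeping that determines which terms of $\delta_t$ survive truncation, and bounding the finite geometric sum by its infinite-series value so that the resulting constant depends on neither $t$ nor $T$.
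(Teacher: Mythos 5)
Your proposal is correct and follows essentially the same route as the paper's proof: the same index bookkeeping for the equality (with $\delta_t=0$ once $t+W>T-1$), the same chain of bounds $\snorm{v}^2_{\Sigma}\le\snorm{\Sigma}\snorm{v}^2$, $\snorm{K_{i-t}}\le c_0\rho^{i-t}$, $\snorm{Ar_i-r_{i+1}}\le(\snorm{A}+1)D_r$, and the same uniform-in-$t$ geometric-tail estimate $\rho^W/(1-\rho)$ before summing over at most $T$ terms. The only cosmetic difference is that the paper factors out $\rho^{2W}$ and writes the finite geometric sum as $(1-\rho^{T-W-t})/(1-\rho)$ before dropping the numerator, which is equivalent to your bounding of the tail by the infinite series.
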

\begin{proof}The first equality follows by the definition of $\delta_t(r_{t+W:T})$ and the fact that $\delta_t(r_{t+W:T})=0$, for $t+W>T-1$.
By Proposition~\ref{prop:exp-stable}, we have
$\snorm{K_t}\le c_0 \rho^t$. Hence 
\begin{align*}
&2\sum_{t=0}^{T-1-W}\snorm{\sum_{i=t+W}^{T-1}K_{i-t}(Ar_{i}-r_{i+1})}^2_{\Sigma}\le 2c^2_0 (\snorm{A}+1)^2\snorm{\Sigma} D^2_r\sum_{t=0}^{T-1-W}(\sum_{i=t+W}^{T-1}\rho^{i-t})^2\\
&=\alpha_1 \rho^{2W} \sum_{t=0}^{T-1-W}(\sum_{i=0}^{T-1-W-t}\rho^{i})^2\le \alpha_1 \rho^{2W} \sum_{t=0}^{T-1-W}\frac{(1-\rho^{T-W-t})^2}{(1-\rho)^2}\\
&\le \alpha_1\frac{\rho^{2W}}{(1-\rho)^2}T
\end{align*}
\end{proof}
What remains to show is that the prediction error term is upper bounded in terms of the prediction regret of the RLS algorithm.
\begin{lem}[Prediction term]\label{app_lem:control_prediction_term}
    The prediction term of the regret is upper bounded by
     \begin{align*}
     2 \sum_{t=0}^{T-1}\snorm{\hat{q}_t(r_{t:t+W|t}-r_{t:t+W})}^2_{\Sigma}\le c_1 \sum_{i=1}^{W}\rho^{i-1}\mathcal{R}^{(i)}_{pred},
     \end{align*}
     where
     \[
c_1=2c^2_0\snorm{\Sigma}\tilde{W},\,\tilde{W}=\min\{(1-\rho)^{-1},W \}.
     \]
\end{lem}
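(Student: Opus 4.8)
The plan is to exploit the linearity of the feedforward map $\hat q_t$ in its target argument. Writing the prediction error as $e_{i|t}\triangleq r_{i|t}-r_i$ and noting that $r_{t|t}=r_t$, so that $e_{t|t}=0$, linearity gives
\[
\hat q_t(r_{t:t+W|t}-r_{t:t+W})=\sum_{k=0}^{W-1}K_k\bigl(Ae_{t+k|t}-e_{t+k+1|t}\bigr).
\]
First I would pass from the $\Sigma$-weighted norm to the Euclidean one via $\snorm{v}^2_\Sigma\le\snorm{\Sigma}\snorm{v}^2$, then invoke the exponential decay of the LQT gains from Proposition~\ref{prop:exp-stable}, $\snorm{K_k}\le c_0\rho^k$, together with the triangle inequality, to obtain $\snorm{\hat q_t(r_{t:t+W|t}-r_{t:t+W})}\le c_0\sum_{k=0}^{W-1}\rho^k\snorm{Ae_{t+k|t}-e_{t+k+1|t}}$.

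The key quantitative step is to square this and apply Cauchy--Schwarz with the geometric weights split as $\rho^k=(\rho^k)^{1/2}(\rho^k)^{1/2}$, giving
\[
\snorm{\hat q_t(\cdot)}^2\le c_0^2\Bigl(\sum_{k=0}^{W-1}\rho^k\Bigr)\sum_{k=0}^{W-1}\rho^k\snorm{Ae_{t+k|t}-e_{t+k+1|t}}^2\le c_0^2\tilde W\sum_{k=0}^{W-1}\rho^k\snorm{Ae_{t+k|t}-e_{t+k+1|t}}^2,
\]
where the decisive inequality is $\sum_{k=0}^{W-1}\rho^k=\tfrac{1-\rho^W}{1-\rho}\le\min\{(1-\rho)^{-1},W\}=\tilde W$; this is precisely how the effective horizon $\tilde W$ enters the constant $c_1$. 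I would then sum over $t=0,\dots,T-1$, multiply by $2$, and swap the order of summation so that the inner sum over $t$ groups all errors of a fixed prediction gap together.

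The remaining work is to identify, for each fixed gap $k$, the sum $\sum_t\snorm{e_{t+k|t}}^2$ with the $k$-step prediction regret $\mathcal R^{(k)}_{\mathrm{pred}}$, and to reindex the resulting weighted sum to produce the claimed weights $\rho^{i-1}$ on $\mathcal R^{(i)}_{\mathrm{pred}}$: the shift term $-e_{t+k+1|t}$ (carrying weight $\rho^{k}$, i.e.\ $\rho^{i-1}$ after setting $i=k+1$) maps onto $\mathcal R^{(i)}_{\mathrm{pred}}$ for $i=1,\dots,W$, and the $i=0$ contribution vanishes since $e_{t|t}=0$. I expect the main obstacle to be exactly this bookkeeping: aligning the two shifted index families so the geometric weights collapse to $\rho^{i-1}$, handling the boundary truncation $\min\{t+W-1,T\}$ in the definition of $\hat q_t$, and controlling the disturbance-type error $Ae_{t+k|t}-e_{t+k+1|t}$ without an uncontrolled $\snorm{A}$ factor. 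A clean route for the latter is summation by parts on $\sum_kK_k(Ae_{t+k|t}-e_{t+k+1|t})$ combined with the Riccati identity $F^\top XA=X-Q$ (with $F=A-BK$), which turns the coefficient differences $K_kA-K_{k-1}$ into the benign form $-(R+B^\top XB)^{-1}B^\top(F^\top)^{k-1}Q$, keeping every term under the same geometric decay and feeding directly into the per-horizon regrets.
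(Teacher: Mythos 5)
Your proposal is correct and, once the final paragraph's fix is taken as the actual proof, it coincides with the paper's argument: the paper likewise uses $r_{t|t}=r_t$, regroups $\sum_i K_i(Ar_{t+i|t}-r_{t+i+1|t})$ by summation by parts, invokes the Riccati identity to get $K_iA-K_{i-1}=-\Sigma^{-1}B^\top (A-BK)^{\top,i-1}Q$ with norm at most $c_0\rho^{i-1}$, and only then applies Cauchy--Schwarz with the split weights $\rho^{(i-1)/2}\rho^{(i-1)/2}$ to extract $\tilde W=\min\{(1-\rho)^{-1},W\}$ and identify the per-gap sums with $\mathcal{R}^{(i)}_{\mathrm{pred}}$. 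The one correction is ordering: the Riccati regrouping must be performed \emph{before} Cauchy--Schwarz, whereas your displayed chain applies Cauchy--Schwarz to the raw form $\sum_k K_k(Ae_{t+k|t}-e_{t+k+1|t})$, which leaves an uncontrolled $\snorm{A}$ factor and weights $\rho^k$ misaligned against $\rho^{k-1}$, and therefore cannot yield the claimed constant $c_1=2c_0^2\snorm{\Sigma}\tilde W$ --- exactly the obstacle you flagged and resolved correctly at the end.
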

\begin{proof}
Let $t\le T-W+1$ for simplicity. The case $t>T-W+1$ is similar. Then,
\[
\hat{q}_t(r_{t:t+W|t})=\sum_{i=0}^{W-1}\!K_{i}(Ar_{t+i|t}-r_{t+i+1|t}).
\]
By regrouping the terms, and since $r_t=r_{t|t}$ we obtain
\[
\hat{q}_t(r_{t:t+W|t})=K_0Ar_{t}+\sum_{i=1}^{W-1} (K_{i}A-K_{i-1})r_{t+i|t}-K_{W-1}r_{t+W|t}.
\]
As a result
\[
\hat{q}_t(r_{t:t+W|t}-r_{t:t+W})=\sum_{i=1}^{W-1}L_{i}(r_{t+i|t}-r_{t+i})-K_{W-1}(r_{t+W|t}-r_{t+W})
\]
where $L_{i}\triangleq K_{i}A-K_{i-1}$. 
Using the properties of the LQT controller and~\eqref{eq:P*}
\begin{align*}
L_{i}=K_{i}A-K_{i-1}&=\Sigma^{-1}B^\top (A-BK)^{\top,i}XA-\Sigma^{-1}B^\top (A-BK)^{\top,i-1}X\\
&=\Sigma^{-1}B^\top (A-BK)^{\top,i-1} ((A-BK)^\top XA-X)\\
&=\Sigma^{-1}B^\top (A-BK)^{\top,i-1} (A^\top X A-K^\top B^\top XA-X)\\
&=\Sigma^{-1}B^\top (A-BK)^{\top,i-1} (-Q+K^\top B^\top X A-K^\top B^\top XA)\\
&=-\Sigma^{-1}B^\top (A-BK)^{\top,i-1}Q.
\end{align*}
Hence, by Proposition~\ref{prop:exp-stable} and the fact that $Q\preceq X$, we also get
\[
\snorm{L_{i}}\le c_0\rho^{i-1}.
\]
The difference between the truncated feedforward terms now becomes
\begin{align*}
\snorm{\hat{q}_t(r_{t:t+W|t}-r_{t:t+W})}^2_{\Sigma}&\le c^2_0\snorm{\Sigma}\paren{\sum_{i=1}^{W}\rho^{i-1}\snorm{r_{t+i}-r_{t+i|t}}}^2=c^2_0\snorm{\Sigma}\paren{\sum_{i=1}^{W}\rho^{(i-1)/2}\rho^{(i-1)/2}\snorm{r_{t+i}-r_{t+i|t}}}^2\\
&\stackrel{i)}{\le} c^2_0\snorm{\Sigma} \sum_{i=1}^{W}\rho^{i-1}\sum_{i=1}^{W}\rho^{i-1}\snorm{r_{t+i}-r_{t+i|t}}^2\le c^2_0\snorm{\Sigma}\tilde{W}\sum_{i=1}^{W}\rho^{i-1}\snorm{r_{t+i}-r_{t+i|t}}^2,
\end{align*}
where $i)$ follows by Cauchy-Schwartz. The result for $t>T-W+1$ is similar. To obtain the final bound we just need to sum over $t$.
\end{proof}
Before we proceed to the proof of Theorem~\ref{thm:main}, let us recall the following standard result.
\begin{lem}[Geometric Series]\label{app_lem:geometric_series}
Let $0\le \rho<1$. Then the following hold
\begin{align*}
S_1(\rho)\triangleq\sum_{i=1}^{N}i\rho^{i-1}&=\frac{-(N+1)\rho^{N}(1-\rho)+(1-\rho^{N+1})}{(1-\rho)^2}\le \max\set{\frac{1}{(1-\rho)^2},N^2}\\
S_2(\rho)\triangleq \sum_{i=1}^{N}i^2\rho^{i-1}&=\frac{1+\rho-(N+1)^2\rho^N+(2N^2+2N-1)\rho^{N+1}-N^2\rho^{N+2}}{(1-\rho)^3}\le \max\set{\frac{2}{(1-\rho)^3},N^3}
\end{align*}
\end{lem}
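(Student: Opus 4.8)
The plan is to handle the two closed-form identities and the two upper bounds separately, treating $S_1$ and $S_2$ in parallel. For the identities, I would start from the elementary finite geometric sum $\sum_{i=0}^{N}\rho^i = (1-\rho^{N+1})/(1-\rho)$, which is valid since $\rho\neq 1$, and differentiate it in $\rho$. Because $\frac{d}{d\rho}\rho^i = i\rho^{i-1}$, term-by-term differentiation of the left side yields exactly $S_1(\rho)=\sum_{i=1}^N i\rho^{i-1}$, while differentiating the right-hand quotient with the quotient rule produces $\frac{-(N+1)\rho^N(1-\rho)+(1-\rho^{N+1})}{(1-\rho)^2}$, matching the claimed formula. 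For $S_2$ I would observe that $S_2(\rho)=\frac{d}{d\rho}\sum_{i=1}^N i\rho^i = \frac{d}{d\rho}\bigl(\rho\,S_1(\rho)\bigr)$, and differentiate the already-established closed form for $\rho\,S_1(\rho)$ a second time; collecting everything over the common denominator $(1-\rho)^3$ and simplifying should give the stated numerator $1+\rho-(N+1)^2\rho^N+(2N^2+2N-1)\rho^{N+1}-N^2\rho^{N+2}$. An alternative I would keep in reserve is a direct induction on $N$, which avoids differentiation at the cost of the same algebraic bookkeeping.

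For the upper bounds, the key observation is that every summand is nonnegative because $\rho\ge 0$, so each finite sum is dominated both by its infinite counterpart and by its value at $\rho=1$. Extending to $N=\infty$ gives $S_1(\rho)\le\sum_{i=1}^\infty i\rho^{i-1}=(1-\rho)^{-2}$ and $S_2(\rho)\le\sum_{i=1}^\infty i^2\rho^{i-1}=\frac{1+\rho}{(1-\rho)^3}\le\frac{2}{(1-\rho)^3}$, where I use $1+\rho\le 2$. Separately, bounding each $\rho^{i-1}\le 1$ yields the crude estimates $S_1(\rho)\le\sum_{i=1}^N i = N(N+1)/2\le N^2$ and $S_2(\rho)\le\sum_{i=1}^N i^2 = N(N+1)(2N+1)/6\le N^3$ for $N\ge 1$. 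Since both estimates in each pair hold simultaneously, one in fact gets $S_1(\rho)\le\min\{(1-\rho)^{-2},N^2\}$ and $S_2(\rho)\le\min\{2(1-\rho)^{-3},N^3\}$; as $\min\le\max$, the stated inequalities follow immediately.

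There is no serious obstacle here, since the result is standard; the only place demanding care is the algebraic simplification of the $S_2$ numerator after the second differentiation, where it is easy to mis-collect the powers $\rho^N,\rho^{N+1},\rho^{N+2}$. I would sanity-check that step by evaluating both sides at $N=1$ (where $S_2(\rho)=1$) and confirming the closed form reduces correctly, and by verifying the $\rho\to 0$ limit. I would also record the boundary case $\rho=0$, where only the $i=1$ term survives so that $S_1=S_2=1$, to confirm both the formulas and the bounds remain valid there.
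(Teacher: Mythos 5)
Your proposal is correct, and it splits naturally into a part that matches the paper and a part that does not. The identities are obtained exactly as in the paper: you differentiate the finite geometric sum $\sum_{i=0}^{N}\rho^i=(1-\rho^{N+1})/(1-\rho)$ to get $S_1$, and your $S_2(\rho)=\frac{d}{d\rho}\bigl(\rho S_1(\rho)\bigr)$ is, by the product rule, precisely the paper's identity $S_2(\rho)=\rho\frac{d}{d\rho}S_1(\rho)+S_1(\rho)$; likewise your crude estimates $S_1\le \sum_{i=1}^N i\le N^2$ and $S_2\le\sum_{i=1}^N i^2\le N^3$ are the paper's "immediate since $0\le\rho<1$" bounds. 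Where you genuinely diverge is in the $\rho$-dependent bounds. The paper works directly with the closed forms and estimates their numerators: for $S_1$ it shows $-(N+1)\rho^{N}(1-\rho)+(1-\rho^{N+1})=1-N\rho^N(1-\rho)-\rho^N\le 1$, and for $S_2$ it shows the numerator is at most $1+\rho\le 2$. You instead exploit nonnegativity of the summands to dominate each finite sum by its infinite counterpart, $S_1(\rho)\le\sum_{i\ge 1}i\rho^{i-1}=(1-\rho)^{-2}$ and $S_2(\rho)\le\sum_{i\ge 1}i^2\rho^{i-1}=(1+\rho)(1-\rho)^{-3}\le 2(1-\rho)^{-3}$. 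Your route buys robustness: the bounds are decoupled from the closed-form algebra entirely (they would stand even if the $S_2$ numerator were mis-collected), and the only inputs are the standard infinite series values. The paper's route stays within finite sums and makes the lemma self-contained once the identities are in hand, at the cost of the numerator manipulations you were rightly wary of. Both arguments in fact give the stronger conclusion with $\min$ in place of $\max$, as you observe.
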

\begin{proof}
The bounds $\sum_{i=1}^{N}i^k\rho^{i-1}\le N^{k+1}$ are immediate since $0\le \rho<1$. The closed-form expressions of the sums are standard, but we repeat the proof here for completeness. For $S_1(\rho)$, the expression follows from
\begin{align*}
S_1(\rho)=\frac{d}{d\rho}(\sum_{i=0}^{N}\rho^{i})=\frac{d}{d\rho}\paren{\frac{1-\rho^{N+1}}{1-\rho}}.
\end{align*}
To show the upper bound notice that
\[
-(N+1)\rho^{N}(1-\rho)+(1-\rho^{N+1})=1-N\rho^N(1-\rho)-\rho^N\le 1.
\]
For $S_2$ we use the identity
\begin{align*}
S_2(\rho)=\rho \sum_{i=1}^{N}i(i-1)\rho^{i-2}+S_1(\rho)=\rho\frac{d}{d\rho}S_1(\rho)+S_1(\rho).
\end{align*}
To show the upper bound notice that
\begin{align*}
&1+\rho-(N+1)^2\rho^N+(2N^2+2N-1)\rho^{N+1}-N^2\rho^{N+2}\\
&=1+\rho-N^2(1-\rho)^2-2N\rho^N(1-\rho)-\rho^N-\rho^{N+1}\le 1+\rho\le 2.
\end{align*}
\end{proof}
\subsection{Proof of Theorem~\ref{thm:main}}
By~\eqref{app_eq:regret_decomposition}, Lemma~\ref{app_lem:control_truncation}, and Lemma~\ref{app_lem:control_prediction_term}, we obtain that
\[
\mathcal{R}(\pi)\le \alpha_1 \frac{\rho^{2W}}{(1-\rho)^2}T+c_1\sum_{i=1}^W\rho^{i-1}\mathcal{R}^{(i)}_{\mathrm{pred}}.
\]
Further, by Theorem~\ref{thm:ar_prediction_regret}
\begin{align*}
\mathcal{R}(\pi)&\le \alpha_1 \frac{\rho^{2W}}{(1-\rho)^2}T+c_1\beta_1\frac{1}{1-\gamma}\sum_{i=1}^W\rho^{i-1}V^i_T+c_1\beta_2(T+1)\log\frac{1}{\gamma} \sum_{i=1}^W\rho^{i-1}\\
&+c_1\paren{\beta_3\log\frac{1}{1-\gamma}+\beta_4}\sum_{i=1}^{W}i\rho^{i-1}.
\end{align*}
By the perturbation bound in Lemma~\ref{app_lem:perturbation}, we can upper bound $V^i_T\le \sqrt{p}M^2 i^2 V_T$. Hence, by invoking Lemma~\ref{app_lem:geometric_series} we finally obtain
\begin{align*}
\mathcal{R}(\pi)&\le \alpha_1 \frac{\rho^{2W}}{(1-\rho)^2}T+2c_1\beta_1\sqrt{p}M^2 \frac{1}{1-\gamma}\tilde{W}^3 V_T+c_1\beta_2\tilde{W}(T+1)\log\frac{1}{\gamma}\\
&+c_1\paren{\beta_3\tilde{W}^2\log\frac{1}{1-\gamma}+\tilde{W}^2\beta_4}\\
&=\alpha_1 \frac{\rho^{2W}}{(1-\rho)^2}T+\alpha_2\tilde{W}^4V_T(1-\gamma)^{-1}-\alpha_3\tilde{W}^2(T+1)\log\gamma\\
&-\alpha_4\tilde{W}^3\log (1-\gamma)+\alpha_{5}\tilde{W}^3.
\end{align*}
The coefficients are given by 
\begin{equation}\label{eq:alpha_coeffs}
\begin{aligned}
\alpha_1&=2c^2_0(\snorm{A}+1)^2D^2_r\snorm{\Sigma}\\
\alpha_2&=4c^2_0\snorm{\Sigma} \beta_1\sqrt{p}M^2\\
\alpha_3&=2c^2_0\snorm{\Sigma}\beta_2\\
\alpha_4&=2c^2_0\snorm{\Sigma}\beta_3\\
\alpha_5&=2c^2_0\snorm{\Sigma}\beta_4,
\end{aligned}
\end{equation}
where $\beta_1,\beta_2,\beta_3,\beta_4$ are given in~\eqref{app_eq:betas} and $\snorm{\Sigma}=\snorm{B^\top X B+R}$.\hfill $\qed$
\subsection{Proof of Corollary~\ref{corollary}}
The fifth term is constant since $\tilde{W}$ is bounded. Taking $W=-\frac{log T}{2\log \rho}$ implies that $\rho^{2W}T=1$, hence the truncation term is constant. Note that under the given choice for $\gamma$, we have $(1-\gamma)^{-1}=\mathcal{O}(T^2)$, 
since $1-\gamma\ge \log T/T^2$. Thus, the fourth term is at most logarithmic with $T$: $\log(1-\gamma)^{-1}=\mathcal{O}(\log T)$.

The second term satisfies
\[
\frac{V_T}{1-\gamma}=\sqrt{4M}\frac{V_T \sqrt{T}}{\max\{\sqrt{V_T},\log T/\sqrt{T}\}}=\sqrt{4M}\min\{\sqrt{V_T T},\log T\}
\]

Finally, for the third term, we invoke the elementary inequality
\[
\log \gamma \ge 1-\frac{1}{\gamma},
\]
which is equivalent to
\[
-(T+1)\log \gamma\le (T+1)\frac{1-\gamma}{\gamma}.
\]
Note that the maximum possible value of $V_T$ is $2MT$ since matrices $S_t$ are bounded. Hence the forgetting factor $\gamma=1-\sqrt{\frac{\max\{V_T,\log^2 T/T\}}{4MT}}\ge 1-\frac{1}{\sqrt{2}}\sqrt{\frac{V_T}{2MT}}\ge 1-\frac{1}{\sqrt{2}}$ can be lower bounded.
As a result, we obtain
\[
-(T+1)\log \gamma\le (T+1)\frac{1-\gamma}{\gamma}\le \frac{\sqrt{2}}{\sqrt{2}-1}(T+1)\sqrt{\frac{\max\{V_T,\log^2 T/T\}}{4MT}}=\mathcal{O}(\max\{\sqrt{TV_T},\log T\}).\tag*{\qed}
\]

\clearpage
\section{Simulations}
\label{app:simulations}
\subsection{The Quadrotor Model}
\label{app:quadrotor_model}
To demonstrate the performance of PLOT on a practical use-case we consider the linearized dynamics of the Crazyflie quadrotor \cite{bitcraze}, derived in \cite{beuchat2019n}. The mini quadrotor can be modeled with a continuous-time nonlinear model $g:\mathbb{R}^{9\times 1} \times \mathbb{R}^{4\times 1} \rightarrow \mathbb{R}^{9\times 1}$
\begin{equation*}
    \dot{x} = g(x,u),
\end{equation*}
where the state ${x}$ and the input ${u}$ are defined as 
\begin{equation*}
    {x} :=    \begin{bmatrix}
        {p} \\
        {\dot{p}}\\
        {\psi}
    \end{bmatrix}, \qquad 
    {u}:=     \begin{bmatrix}
        f \\
        \mathbf{\omega}
    \end{bmatrix},
\end{equation*}
with ${p}:=[p_x;p_y;p_z]^\top$ defined as the position vector in the inertial frame, ${\psi}:=[\gamma, \beta, \alpha]^\top$ as the attitude vector in the same frame and $\gamma$, $\beta$ and $\alpha$ denote the attitude angles,  roll, pitch and yaw, respectively. The action ${u}:=[f; {\omega}]^\top$ comprises of the total thrust $f$, as well as the angular rate ${\omega}:=[\omega_x, \omega_y, \omega_z]^\top$ in the body frame. We provide a brief derivation of the linearized model for the quadrotor of interest and refer the readers to \cite{beuchat2019n} for a detailed derivation. 

Given the low speeds of the quadrotor, we neglect the aerodynamic drag forces resulting in the following equations of motion for translation
\begin{equation}
\label{eq:EoM1}
    \Ddot{{p}} = 
    \begin{bmatrix}
        \ddot{p}_x \\
        \ddot{p}_y \\
        \ddot{p}_z
    \end{bmatrix} = \frac{1}{m} \left(R_{IB}
    \begin{bmatrix}
        0 \\
        0 \\
        f
    \end{bmatrix}
    + 
    \begin{bmatrix}
        0 \\
        0 \\
        -mg
    \end{bmatrix}\right),
\end{equation}
where $m$ is the mass of the quadrotor, $g$ is the gravitational acceleration constant, and $R_{IB}$ is the rotation matrix from the body frame to the inertial frame. In particular, given the attitude angles, $R_{IB}$ is given by
\begin{equation*}
    R_{IB} =
    \begin{bmatrix}
        c_{\alpha}c_{\beta} & (-s_{\alpha}c_{\gamma}+c_{\alpha}s_{\beta}s_{\gamma}) & (s_{\alpha}s_{\gamma}+c_{\alpha}s_{\beta}c_{\gamma}) \\
        s_{\alpha}c_{\beta} & (c_{\alpha}c_{\gamma}+s_{\alpha}s_{\beta}s_{\gamma}) & (-c_{\alpha}s_{\gamma
        }+s_{\alpha}s_{\beta}c_{\gamma}) \\
        -s_{\beta} & c_{\beta}s_{\gamma} & c_{\beta}c_{\gamma}
    \end{bmatrix},
\end{equation*}
where $s_{\theta}: = \sin(\theta)$ and $c_{\theta}:=\cos(\theta)$ for a given angle $\theta$. The equations of motion for rotation can similarly be derived as
\begin{equation}
\label{eq:EoM2}
    \dot{{\psi}} = 
    \begin{bmatrix}
        1 & \sin{\gamma}\tan\beta & \cos\gamma\tan\beta \\
        0 & \cos\gamma & -\sin\gamma \\
        0 & \sin\gamma\sec\beta & \cos\gamma\sec\beta
    \end{bmatrix}
    {\omega}.
\end{equation}

Using the equations \eqref{eq:EoM1} and \eqref{eq:EoM2}, the linearized dynamic for the quadrotor can then be attained as
\begin{equation*}
    \dot{\delta x} = \underbrace{\frac{\partial g}{\partial x^\top}\Bigr|_{\substack{x=x_h\\u=u_h}}}_{A_c}\; \delta x \; +\; \underbrace{\frac{\partial g}{\partial u^\top}\Bigr|_{\substack{x=x_h\\u=u_h}}}_{B_c}\; \delta u, 
\end{equation*}
where $x_h = \boldsymbol{0}\in\mathbb{R}^{9\times 1}$ and $u = [mg;0;0;0]^\top$ are the hovering position, steady state, and input, and $\delta x: = x -x_h,\; \delta u := u - u_h$. Calculating the respective Jacobians, and discretizing the resulting dynamics with a sampling time $T_s$, the following dynamics are derived
\begin{align}
\label{eq:quadrotor_model}
    x_{t+1}\; &=      
    \underbrace{\begin{bmatrix}
        1 & 0 & 0 &\;  T_s & 0 & 0 &\;  0 & 0 & 0\\
        0 & 1 & 0 &\;  0 & T_s & 0 &\;  0 & 0 & 0\\
        0 & 0 & 1 &\;  0 & 0 & T_s &\;  0 & 0 & 0\\
        \\
        0 & 0 & 0 &\;  1 & 0 & 0 &\;  0 & gT_s & 0\\
        0 & 0 & 0 &\;  0 & 1 & 0 &\;  -gT_s & 0 & 0\\
        0 & 0 & 0 &\;  0 & 0 & 1 &\;  0 & 0 & 0\\
        \\
        0 & 0 & 0 &\;  0 & 0 & 0 &\;  1 & 0 & 0\\
        0 & 0 & 0 &\;  0 & 0 & 0 &\;  0 & 1 & 0\\
        0 & 0 & 0 &\;  0 & 0 & 0 &\;  0 & 0 & 1
    \end{bmatrix}}_{A}x_t\;
    + \;\underbrace{    \begin{bmatrix}
        0 & 0 & 0 & 0 \\
        0 & 0 & 0 & 0 \\
        0 & 0 & 0 & 0 \\
        \\
        0 & 0 & 0 & 0 \\
        0 & 0 & 0 & 0 \\
        \frac{T_s}{m} & 0 & 0 & 0 \\
        \\
        0 & T_s & 0 & 0 \\
        0 & 0 & T_s & 0 \\
        0 & 0 & 0 & T_s 
    \end{bmatrix}}_{B}u_t,
\end{align}
where $x_t = \delta x_t$, and with a slight abuse of notation we take $u_t : = \delta u_t$ and consider $u_h$ as a feedforward input term required for hovering.

Throughout this section, we will consider the linearised drone dynamics \eqref{eq:quadrotor_model} for PLOT and other online control algorithms. We fix the sampling time to $T_s=0.1$ seconds and the quadratic cost matrices to $Q:=diag(80, 80, 80, 10, 10, 10,0.01,0.01,0.1)$, and $R:=diag(0.7, 2.5, 2.5, 2.5)$ to match the ones tuned for the experiments on the hardware. 

\rev{For the PLOT policy, we take $p=1$ for all experiments, and the projection step $8$ in the RLS Algorithm \ref{alg:RLS} is executed with a quadratic program solver, solving the following problem
\begin{align*}
    \hat{S}_{t+k|t} = \arg\min_{S} &\snorm{S-Y}_{F,P_{t|t-k}}\\
    \text{s.t.}\;& \|\text{vec}(S)\|_{\infty} \leq M,
\end{align*}
where $\text{vec}(S)$ is a vector formed by vertically stacking the columns of $S$. Note that, the $\|\cdot\|_{\infty}$ constraint is a conservative approximation of the original problem, resulting in a safe, and conservative estimate of the original projection $\Pi_{\mathcal{S}}^{P_{t|t-k}}$.}

In this section, we first analyze PLOT,  showing the effect of the prediction horizon, $W$ on the tracking performance, and how this is reflected in the dynamic regret bound. We also show how the regret bound can be used to adjust the forgetting factor, $\gamma$, to achieve better tracking given the order of the reference target path length. Next, we compare the proposed method with state-of-the-art online control algorithms for the non-stochastic adversarial control setting with both static and dynamic regret bounds. We show how for certain dynamic references the proposed indirect method of PLOT provides better tracking \rev{and lower dynamic regret, and how for others it can perform worse than direct methods.}

\subsection{PLOT: Regret for hyperparameter tuning}
\label{app:plot_hyperparameters}
The dynamic regret guarantees derived in Theorem \ref{thm:main} and Corollary \ref{corollary} can be used to tune the hyperparameters of the proposed method, such as prediction horizon length, $W$, or the forgetting factor $\gamma$. Though the exact bounds are often over-conservative in practice, the bound order still provides an intuition of the effect of the parameters on tracking performance.

\subsubsection{The prediction horizon $W$}
\label{sec:circle_reference}
\begin{figure}[!ht]
\centering
    \begin{subfigure}[t]{0.23\textwidth}
        \centering
        \includegraphics[width=\linewidth]{figs/circle_2sec.eps}
        \caption{$T=2$ seconds}
    \end{subfigure}%
    ~ 
    \begin{subfigure}[t]{0.23\textwidth}
        \centering
        \includegraphics[width=\linewidth]{figs/circle_3sec.eps}
        \caption{$T=3$ seconds}
    \end{subfigure}
    ~ 
    \begin{subfigure}[t]{0.23\textwidth}
        \centering
        \includegraphics[width=\linewidth]{figs/circle_5sec.eps}
        \caption{$T=5$ seconds}
    \end{subfigure}
    ~ 
    \begin{subfigure}[t]{0.23\textwidth}
        \centering
        \includegraphics[width=\linewidth]{figs/circle_7sec.eps}
        \caption{$T=7$ seconds}
    \end{subfigure}
    \caption{Trajectory plots of a circular target with a $V_T=0$ path length and the PLOT Algorithm for varying prediction horizon lengths, simulated for $T=2,3,5$ and $T=7$ seconds.}
    \label{fig:circle_rls_rhc_video}
\begin{minipage}[t]{0.48\textwidth}
  \centering
  \includegraphics[width=0.9\linewidth]{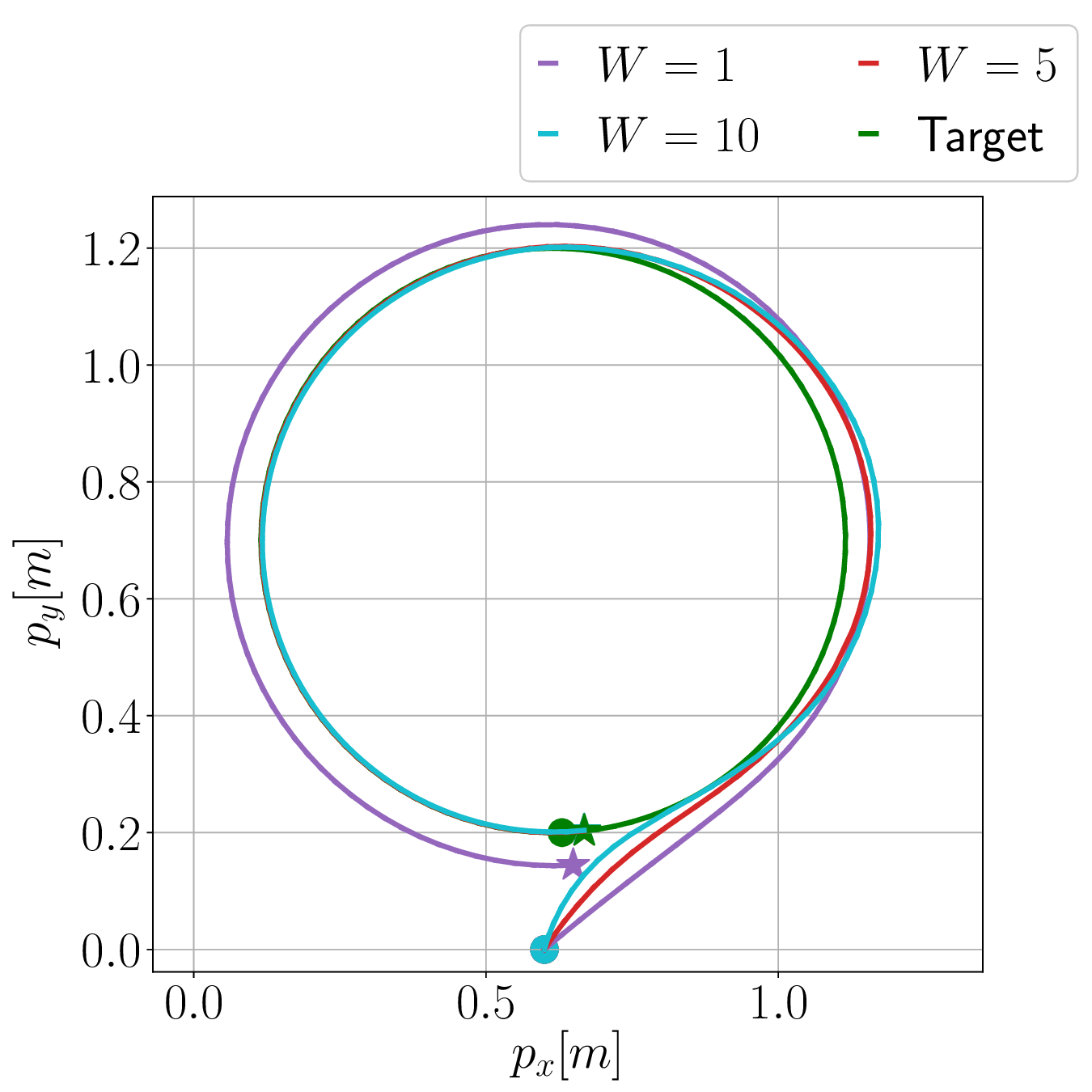}
  \caption{Trajectory plot of a circular target with a $V_T=0$ path length and the PLOT Algorithm for varying prediction horizon lengths, simulated for $T=10.7$ seconds.}
  \label{fig:circle_rls_rhc}
\end{minipage}%
 \hfill
\begin{minipage}[t]{.48\textwidth}
  \centering
  \includegraphics[width=0.9\linewidth]{figs/regret_circle_W.eps}
  \caption{Log-normalized regret of the PLOT Algorithm with a range of prediction horizon lengths, simulated over a horizon of $T = 200$ seconds.}
  \label{fig:regret_rls_rhc}
\end{minipage}
\end{figure}

In this setting, we aim to demonstrate the effect of the prediction horizon length, $W$, on the tracking performance of PLOT. We consider a static target with a $V_T=0$ path length. A simple target respecting this condition is the circular target, as introduced in Example \ref{exmp:circular} with $\theta = 0.06$ radians. With this target revealed and measured online, as described in Section \ref{sec:problem statement}, we run the PLOT algorithm repeatedly for varying numbers of horizon lengths. For all, we set the same initial state of $p= [0.6, 0.0, 0.4]^\top$, $M=10$, and forgetting factor $\gamma =0.8$. We define the augmented matrix $\Tilde{S}_t \triangleq \matr{S_{t} &v_t}$ for $t = 1, \hdots, T-1$ as in Section \ref{sec:problem statement}, and for all $k = 1, \hdots, W$, the $k$ learners are initialized as follows: $\hat{S}_{j+k|j} = \mathbf{I}_6, v_{j+k|j} = \mathbf{0}_6$  and $P_{j|j-k} = 10^{-4}\times\mathbf{I}_7$, for all $j = -1, \hdots, k-2$. Here, $\mathbf{1}_d$ denotes the $d$-dimensional vector of all ones.

The circular target, as well as the trajectory followed by the quadrotor model \eqref{eq:quadrotor_model} under the PLOT algorithm is shown in Figure \ref{fig:circle_rls_rhc_video} at $T=2, 3, 5$ and $T=7$ seconds. Figure \ref{fig:circle_rls_rhc} shows one complete round of a circle with  $T=10.7$ seconds. The corresponding regret plots (including more prediction horizons) are shown in Figure \ref{fig:regret_rls_rhc} for a longer simulation of $T=200$ seconds.

Several observations can be made from these experiments. Firstly, a prediction horizon length not long enough, e.g. $W=1$ or $W=3$ results in a poor performance, reflected in both Figures in terms of tracking and regret. While higher horizon lengths achieve efficient tracking by learning the reference dynamics on the go, achieving sublinear regret. This behavior matches the intuition Corollary \ref{corollary} provides, as a long enough $W$ is required to achieve logarithmic regret in $T$. Secondly, an increase in regret of PLOT for larger $W$-s can be observed in Figure \ref{fig:regret_rls_rhc}, as ``predicted" by the second term of the bound in Theorem \ref{thm:main}. This increase in regret, however, is saturated at a certain value as can be observed from the figure; in this example, the increase stops after around $W=15$. This behavior also is captured by the bound in Theorem \ref{thm:main}, as, crucially, the second term is weighted by increasing powers of $\rho$, which reduces regret exponentially counteracting the linear increase with $W$. This shows that the PLOT does not suffer, at least in terms of regret, due to higher prediction horizons.

\subsubsection{The forgetting factor $\gamma$}
\label{sec:spiral_reference}
As suggested by Corollary \ref{corollary}, the desired regret rate can be achieved by tuning the forgetting factor, $\gamma$, given the path length $V_T$, and the control horizon, $T$. In practice, the exact constants appearing in the expression of optimal $\gamma$ are not known. However, one can still tune the gamma based on the order of the path length and the prediction horizon. In particular, we fix $\gamma_a = 1 - c_{\gamma}T^{-a}$, where $c_\gamma \in \mathbb{R}_+$ is constant independent of $V_T$ and $T$, while  $a \in \mathbb{R}_+$ is such that $\gamma_a \in (0,1]$ and is tuned based on $V_T$ and $T$. 

\begin{figure}[t]
    \centering
    \begin{subfigure}[t]{0.23\textwidth}
        \centering
        \includegraphics[width=\linewidth]{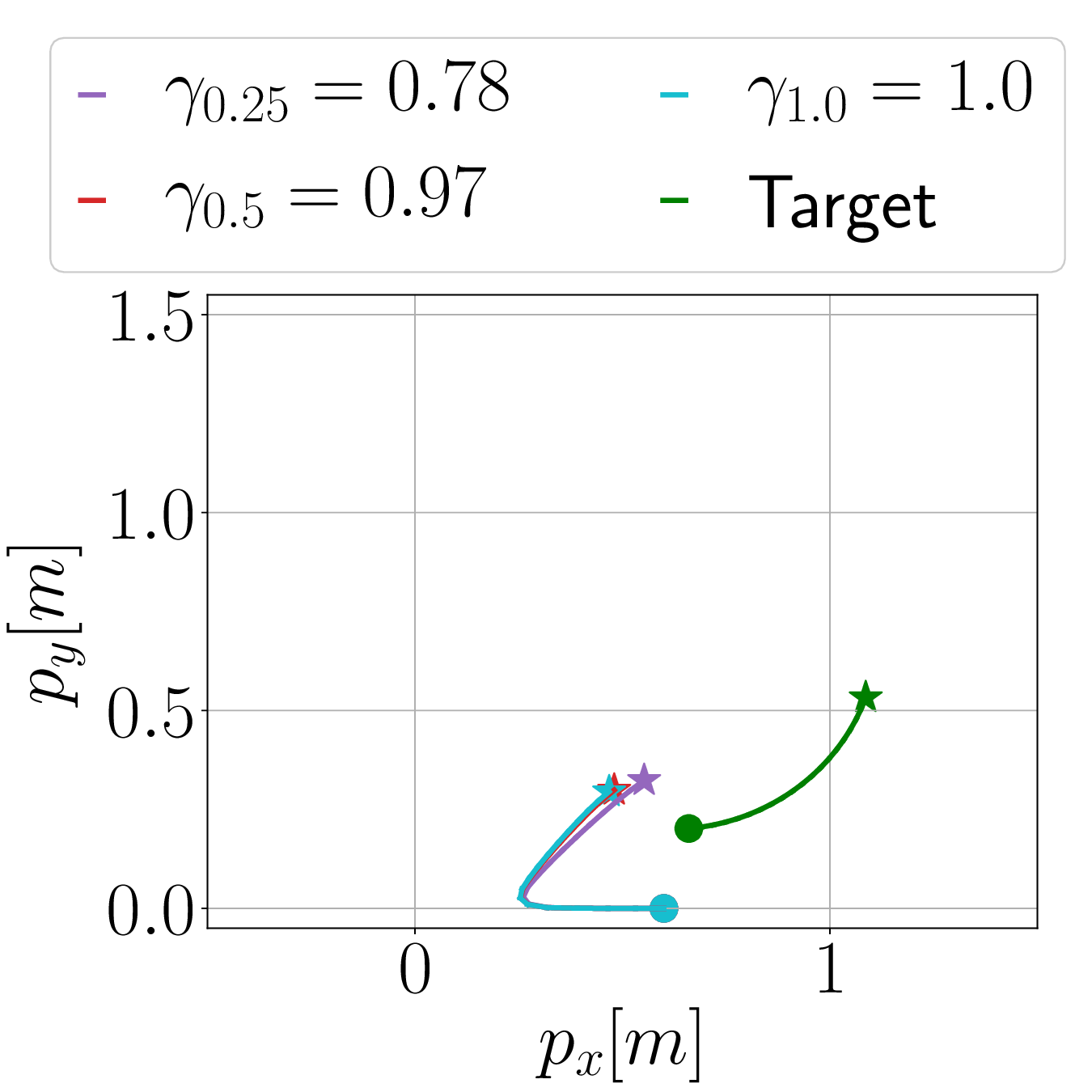}
        \caption{$T=2$ seconds}
    \end{subfigure}%
    ~ 
    \begin{subfigure}[t]{0.23\textwidth}
        \centering
        \includegraphics[width=\linewidth]{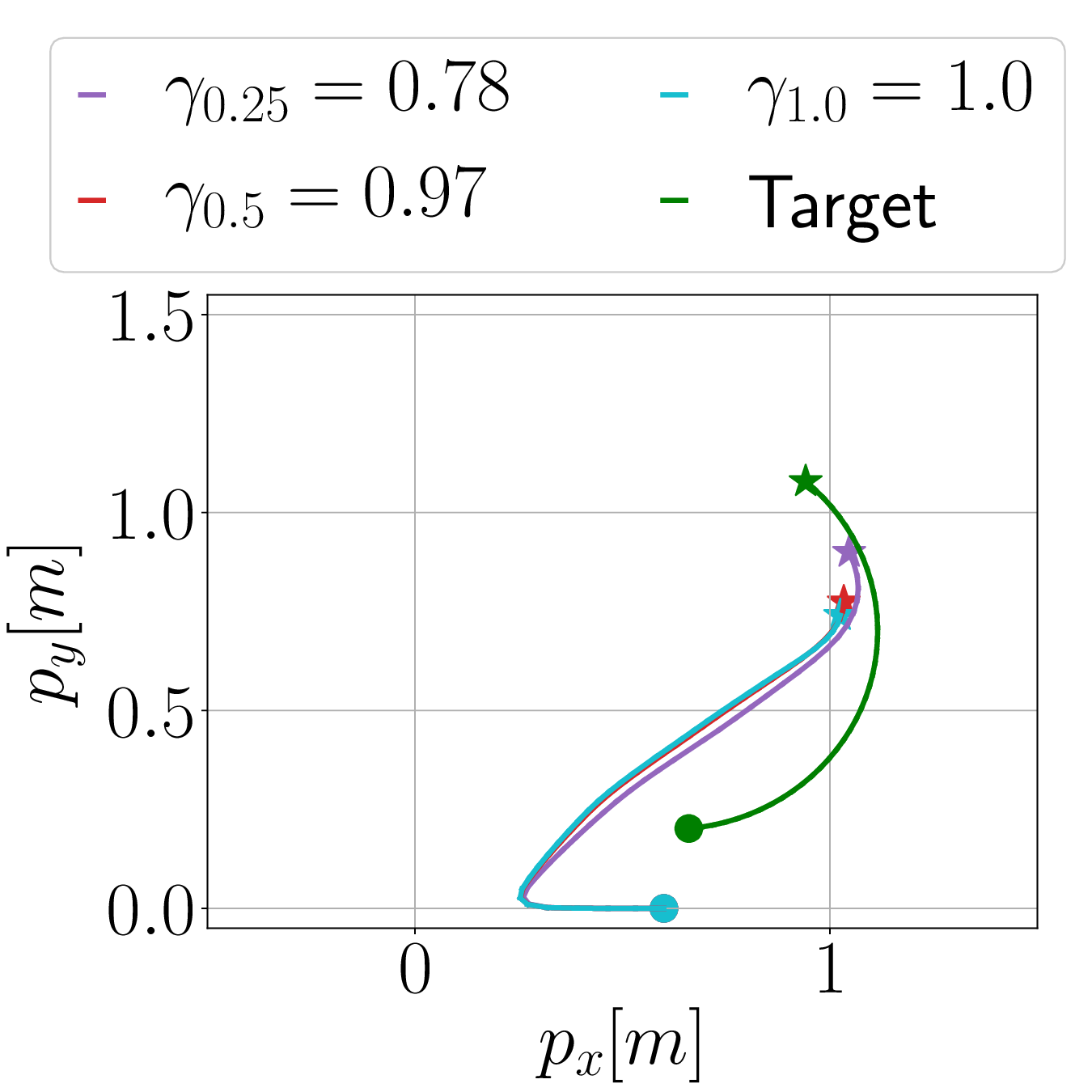}
        \caption{$T=4$ seconds}
    \end{subfigure}
    ~ 
    \begin{subfigure}[t]{0.23\textwidth}
        \centering
        \includegraphics[width=\linewidth]{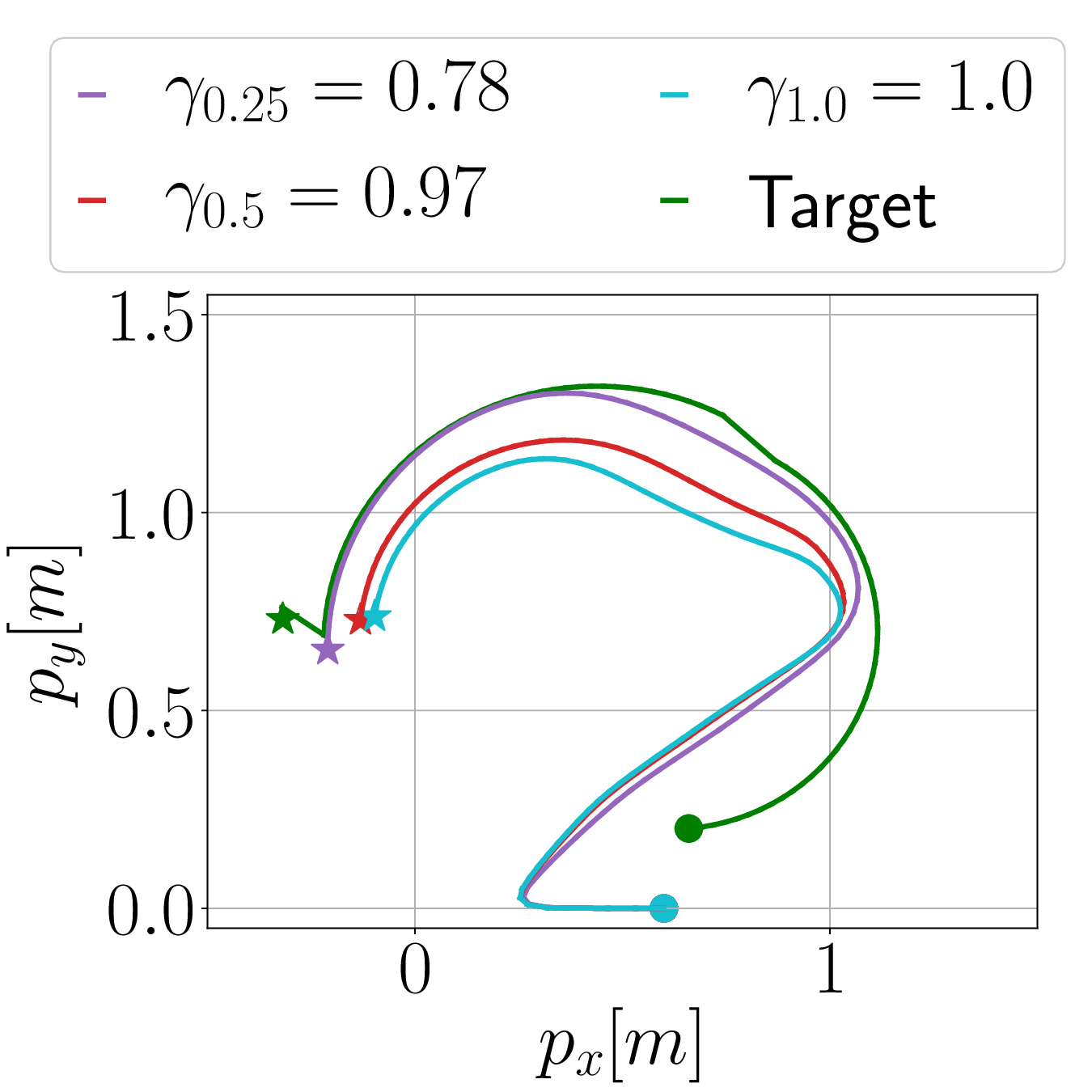}
        \caption{$T=9$ seconds}
    \end{subfigure}
    ~ 
    \begin{subfigure}[t]{0.23\textwidth}
        \centering
        \includegraphics[width=\linewidth]{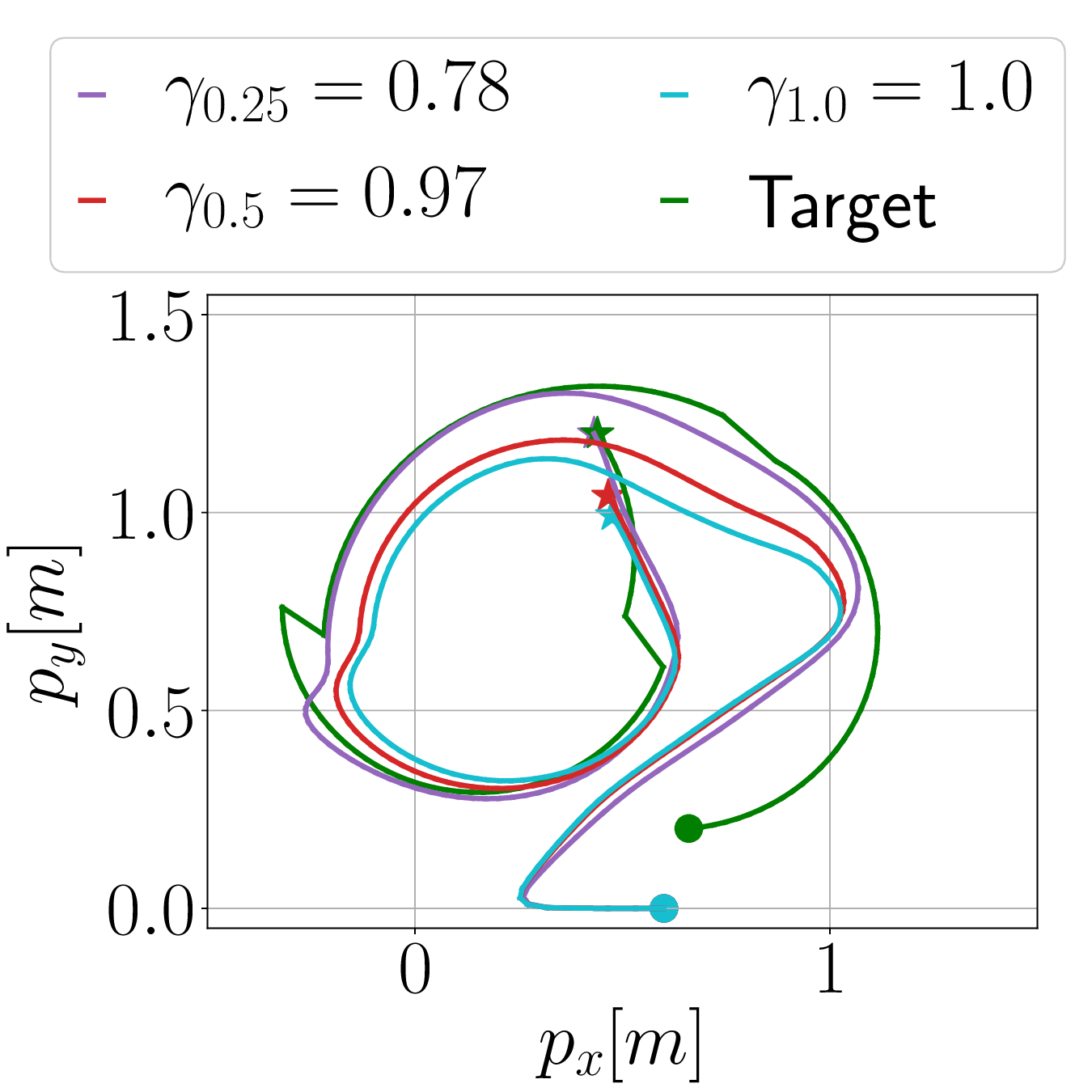}
        \caption{$T=15$ seconds}
    \end{subfigure}
    \caption{Trajectory plot of a spiral with a $V_T = \mathcal{O}(\sqrt{T})$ tracked with PLOT with $W=5$ and a range of values for $\gamma$, simulated for $T=2,4,9$ and $T=15$ seconds.}
    \label{fig:gamma_video}
\end{figure}

To show this on an example, we first fix a dynamic reference trajectory with a path length $V_T = \mathcal{O}(\sqrt{T})$ and perform $5$ different runs of PLOT for a range of values for $\gamma = \gamma_a$, with $a = 0.10, 0.20, 0.25, 0.40, 0.50$ and $1.0$. We fix $M=1$ and $W=5$. Note that for this path length the optimal value for $a$, suggested by Corollary \ref{corollary} is $a = 0.25$. The dynamics for the reference are generated by starting with the circle static dynamics in the previous example and increasing the radius of the circle every $\sqrt{T}$ time steps by a factor sampled uniformly from $0.7$ and $1.5$. In addition, a shift of $[-0.1, 0.1]^\top$ is applied to $[p_x,p_y]^\top$ at the same time step.
As in the previous example, we define the augmented matrix $\Tilde{S}_t \triangleq \matr{S_{t} &v_t}$ for $t = 1, \hdots, T-1$ as in Section \ref{sec:problem statement}. To highlight the effect of the forgetting factor, for all $k = 1, \hdots, W$, the $k$ learners are initialized as follows: $\hat{S}_{j+k|j} = \mathbf{0}_{6\times 6}, v_{j+k|j} = \mathbf{0}_6$ and $P_{j|j-k} = \mathbf{I}_7$, for all $j = -1, \hdots, k-2$.

\begin{figure}[!ht]
\centering
\begin{minipage}[t]{.48\textwidth}
  \centering
  \includegraphics[width=0.9\linewidth]{figs/switching_target_trajectory_40sec.eps}
  \caption{Trajectory plot of a spiral with a $V_T = \mathcal{O}(\sqrt{T})$ tracked with PLOT with a $W=5$ and a range of values for $\gamma$.}
  \label{fig:gamma_trajectory}
\end{minipage}
 \hfill
\begin{minipage}[t]{0.48\textwidth}
\centering
  \includegraphics[width=0.9\linewidth]{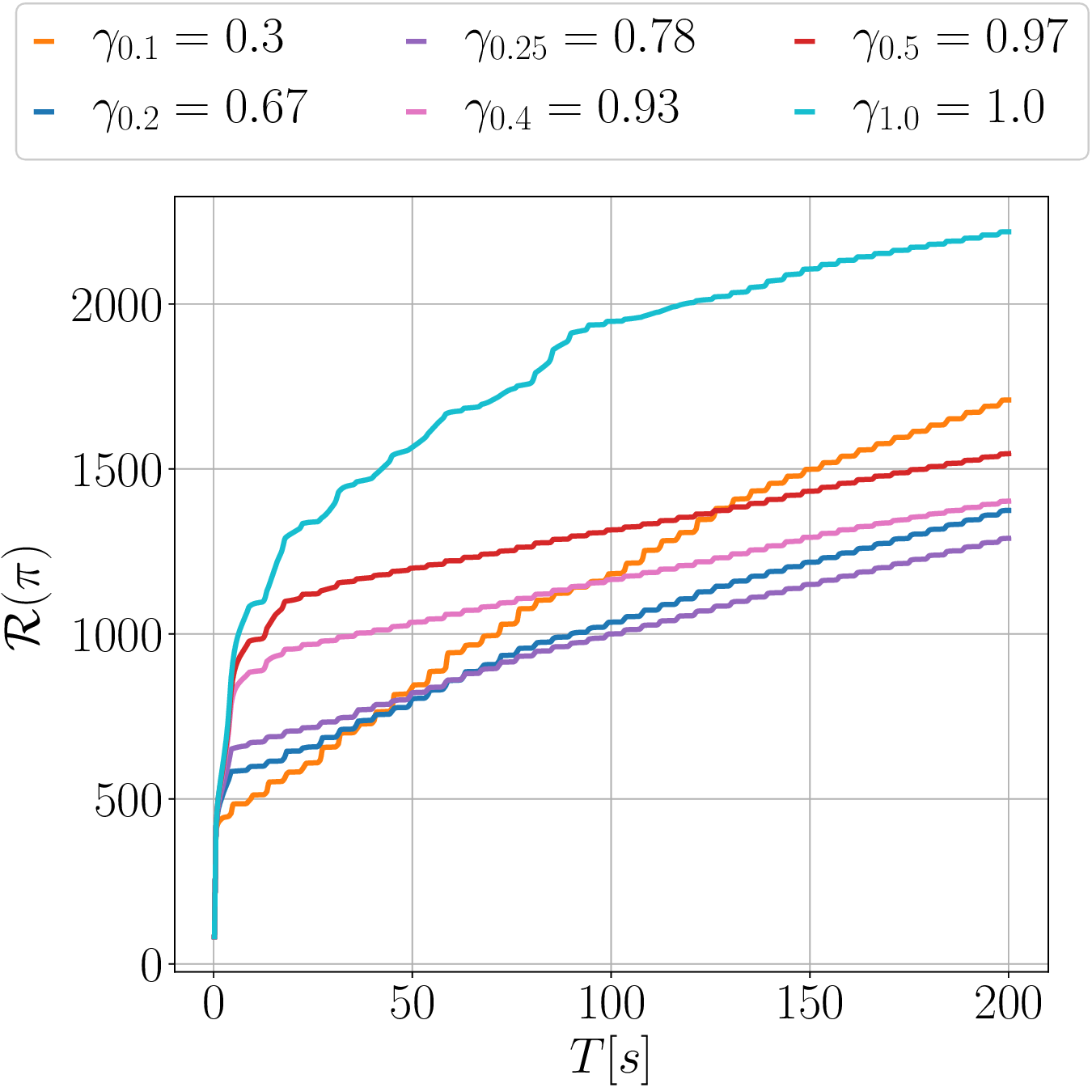}
  \caption{Regret of PLOT for the spiral target with $V_T = \sqrt{T}$ for a range of $\gamma$-s. $\gamma=0.78$ is chosen as per Corollary \ref{corollary}.}
  \label{fig:gamma}
\end{minipage}%
\end{figure}

The dynamic target trajectory together with the system trajectory under the PLOT controller with different forgetting factors is plotted in Figure \ref{fig:gamma_video} for $T=2,4,9$ and $T=15$ seconds. A longer trajectory for $T=40$ seconds is provided in Figure \ref{fig:gamma_trajectory} with the corresponding regret plot in Figure \ref{fig:gamma} for $T=200$ seconds and with more values of $\gamma$. It can be observed from the figures that $\gamma_{0.25}$ tuned optimally to scale with $1-T^{-0.25}$ achieves the lowest regret at the end of the horizon compared to other $a$-s. Intuitively, forgetting helps to adapt quickly to the fast-changing dynamics of the reference and lose the influence of the random initialization of the predictions exponentially fast. This can be observed on the plot, by noticing that the regret for $\gamma=1.0$, corresponding to PLOT without any forgetting is the highest, as expected from the theoretical results. The random initialization of learners, as well as the fast-changing dynamics of the reference, make the no-forgetting tracker maintain a larger tracking error. At the other extreme, forgetting too much, as in this case with $\gamma = 0.3$ hinders the learner from estimating the dynamics,  enough to cause higher regret.  

To show that the scaling constant is independent of horizon $T$, we fix, $c_\gamma$ to a tuned value of $1.5$ and perform experiments with varying horizon lengths from $T=150$ to $T=300$ seconds. For each $T$, PLOT is run $4$ times, each time with a different $\gamma_a$. The regret at the end of each horizon is calculated, and the results are visualized in Figure \ref{fig:tuned_gamma}. As expected from Corollary \ref{corollary}, PLOT with forgetting factor $\gamma_{0.25}$ performs the best for such a reference, across different horizon lengths.

\begin{figure}
    \centering
    \includegraphics[width=0.432\linewidth]{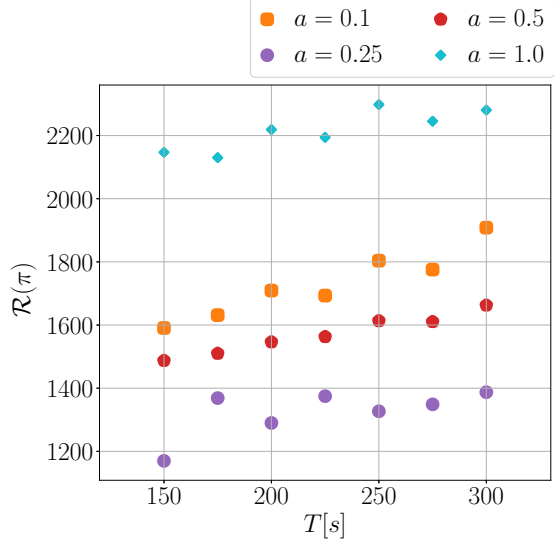}
    \caption{Regret of PLOT with varying $\gamma_a = 1-c_\gamma T^{-a}$.}
    \label{fig:tuned_gamma}
\end{figure}

\subsection{Comparison with Benchmarks}
\label{app:benchmarks}

In this section, we consider a highly dynamic reference target and compare PLOT to state-of-the-art online learning algorithms that can also be applied to the online tracking problem in the linear quadratic setting. In particular, we consider the following dynamics for the target with $p=1$

\begin{equation}
    \label{eq:dynamic_target_example}
    r_{t+1} =     S_{t+1|t} = 
    \begin{bmatrix}
        1 & 0 & T_s & 0\\
        0 & 1 & 0 & T_s\\
        0 & 0 & s_t\cos{\theta_t} & -s_t\sin{\theta_t}\\
        0 & 0 & s_t\sin{\theta_t} & s_t\cos{\theta_t}
    \end{bmatrix}r_t,
\end{equation}
where $s_0 = 1, \theta_0 = 0.06$, $T_s = 0.1 s$ and $s_k = -s_{k-1}$, $\theta_k = -0.99\times\theta_{k-1}$ for every $k = \sqrt{T}$ given some $T$. We simulate $T=200$ seconds or $2000$ time steps. 

We compare PLOT to 5 algorithms, namely, the Follow the Leader (FTL) algorithm \cite{pmlr-v32-abbasi-yadkori14}, Disturbance Action Policy controller \cite{agarwal2019online}, Riccatitron \cite{foster2020logarithmic}, SS-OGD \cite{karapetyan2023online}, and Naive LQR that only performs a static feedback on the currently observed tracking error. Apart from the latter, we tune each controller's hyperparameters to achieve the best performance for the given reference, by performing a grid search over the hyperparameters. All controllers start from the same initial state and follow the same target revealed at the same point in time. We provide the details for each below in Appendix \ref{app:hyperparameters}.

\begin{figure}
    \begin{subfigure}{.33\textwidth}
      \centering
      \includegraphics[width=\linewidth]{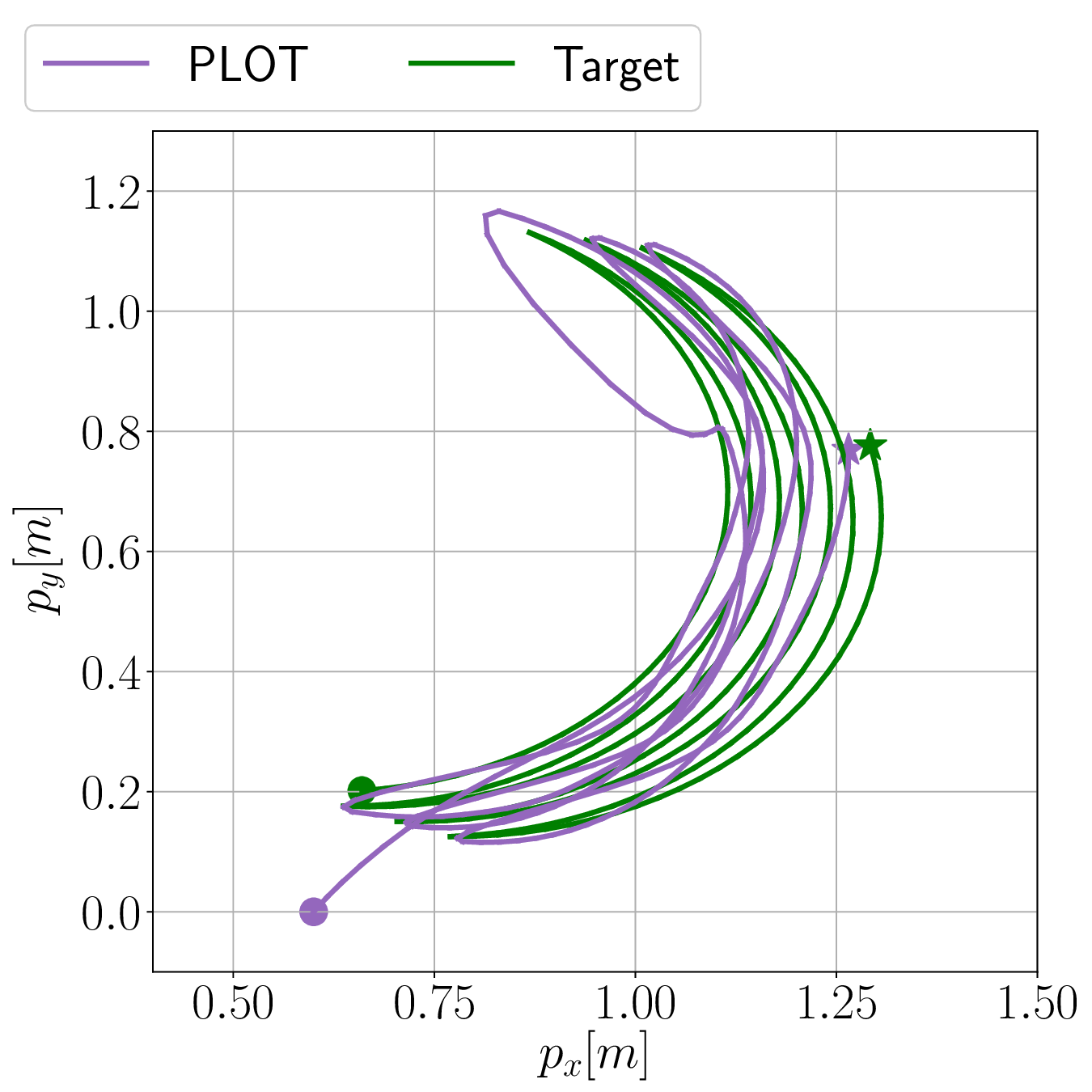}
      \caption{PLOT (ours)}
      \label{fig:bench_plot}
    \end{subfigure}%
    \begin{subfigure}{.33\textwidth}
      \centering
      \includegraphics[width=\linewidth]{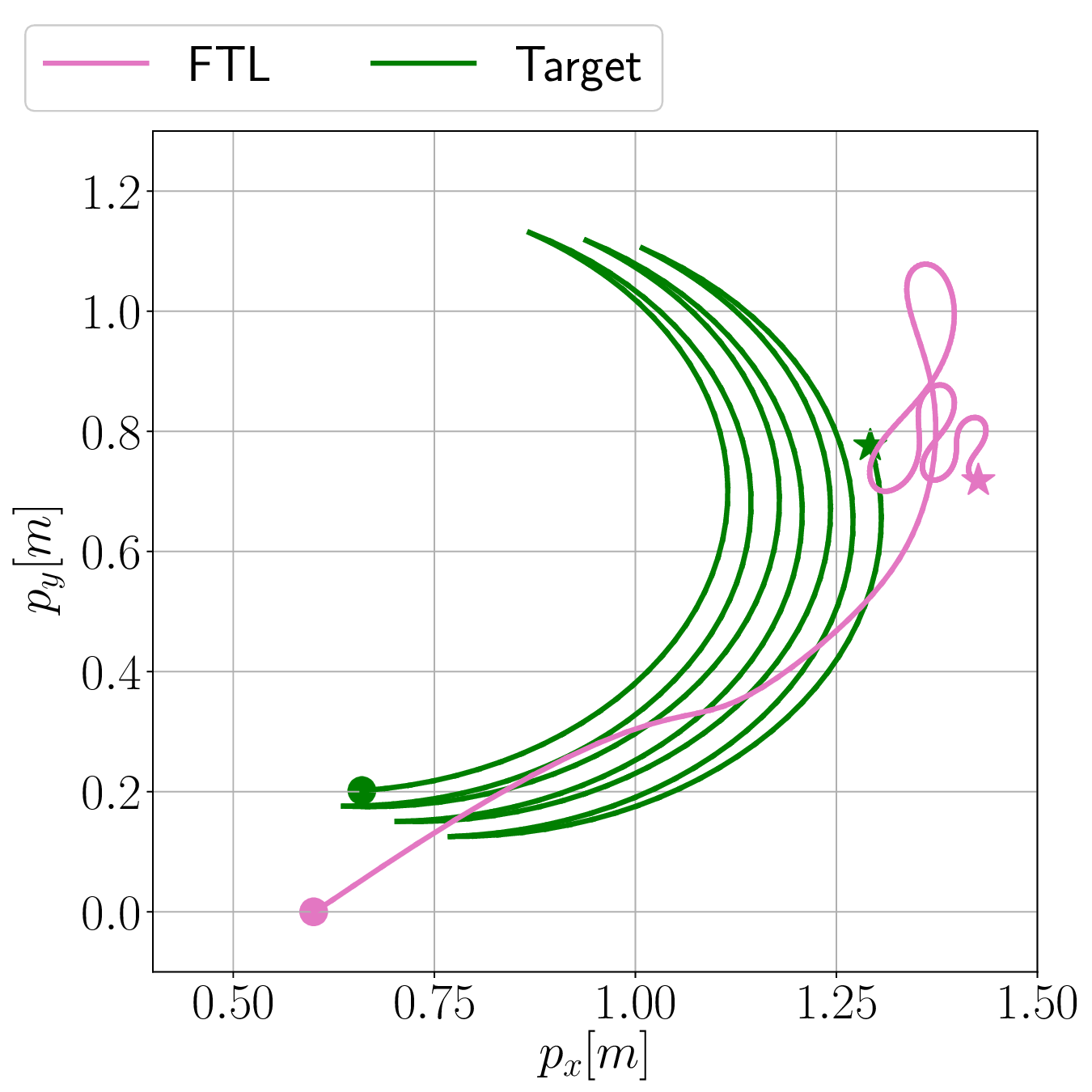}
      \caption{FTL \cite{pmlr-v32-abbasi-yadkori14}}
      \label{fig:bench_ftl}
    \end{subfigure}%
    \begin{subfigure}{.33\textwidth}
      \centering
      \includegraphics[width=\linewidth]{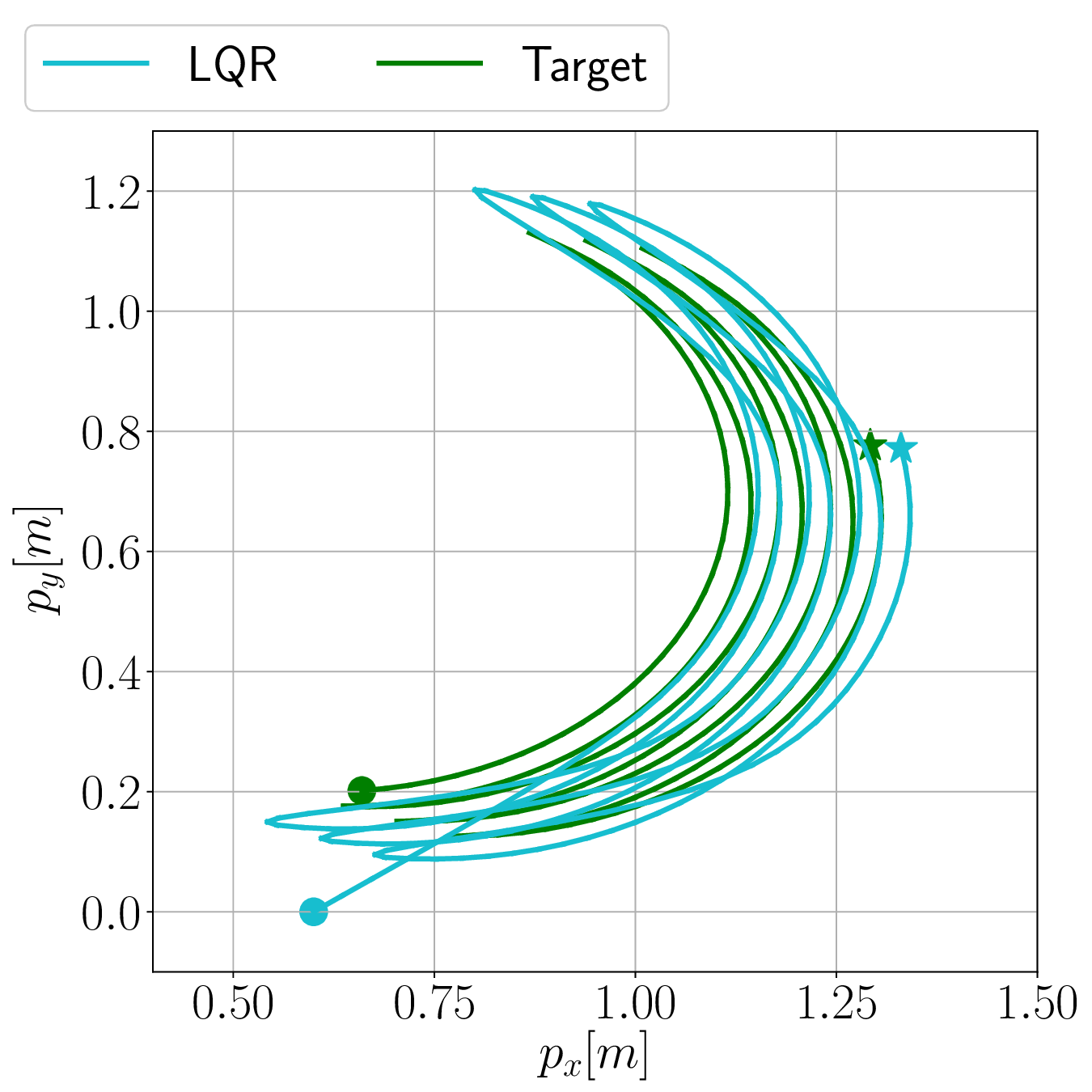}
      \caption{Naive LQR}
      \label{fig:bench_LQR}
    \end{subfigure}\\
    \begin{subfigure}{.33\textwidth}
      \centering
      \includegraphics[width=\linewidth]{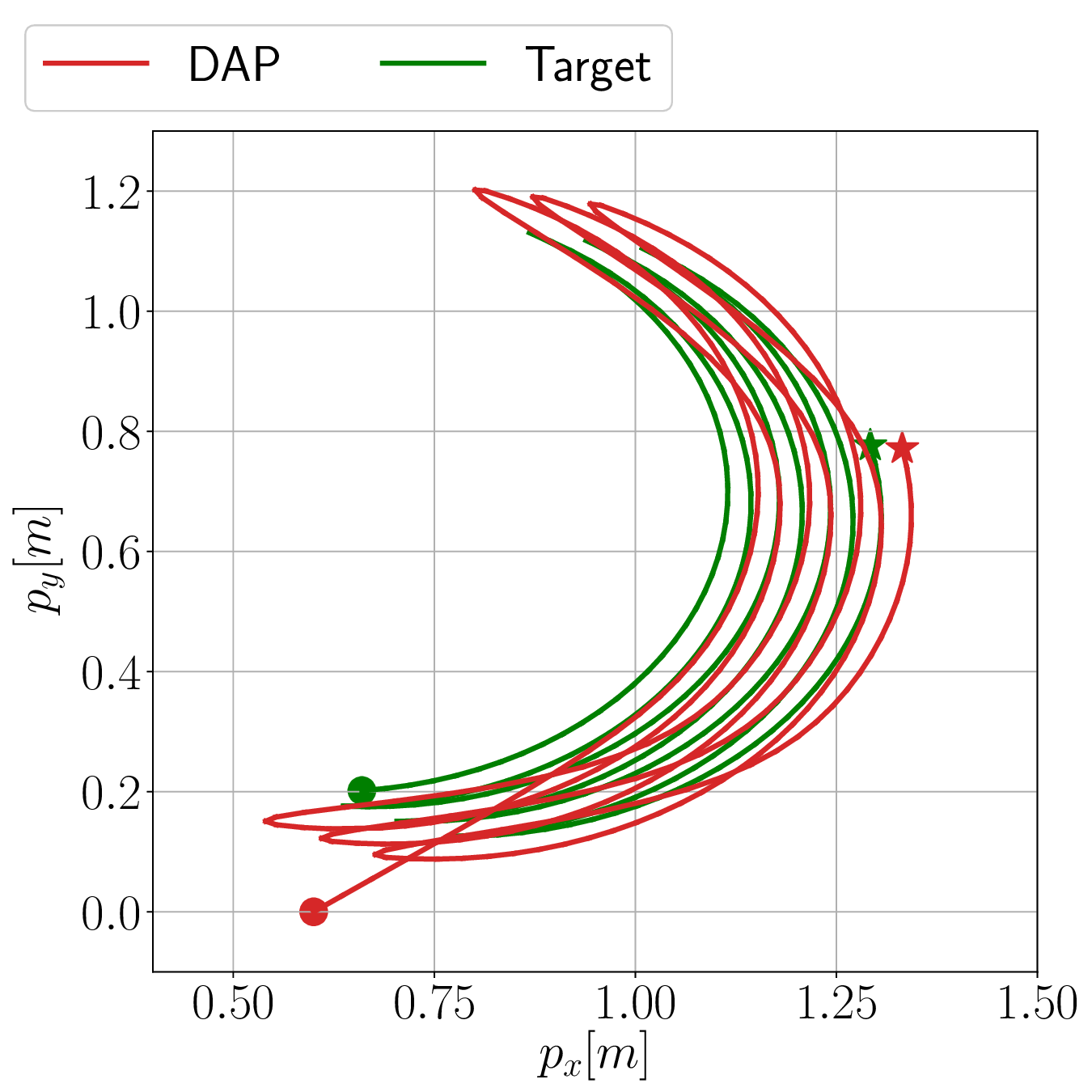}
      \caption{DAP \cite{agarwal2019online}}
      \label{fig:bench_dap}
    \end{subfigure}%
    \begin{subfigure}{.33\textwidth}
      \centering
      \includegraphics[width=\linewidth]{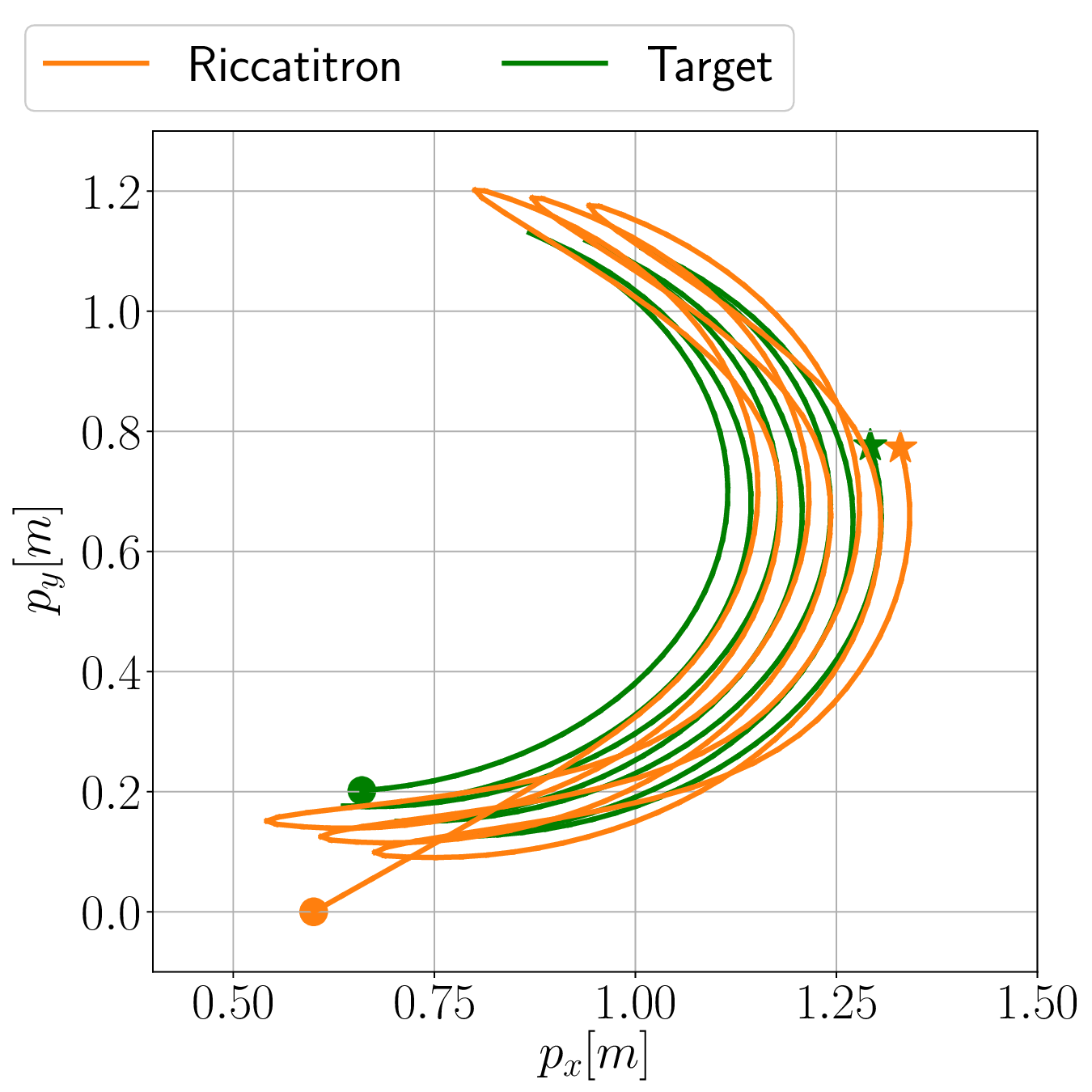}
      \caption{Riccatitron \cite{foster2020logarithmic}}
      \label{fig:bench_riccatitron}
    \end{subfigure}%
    \begin{subfigure}{.33\textwidth}
      \centering
      \includegraphics[width=\linewidth]{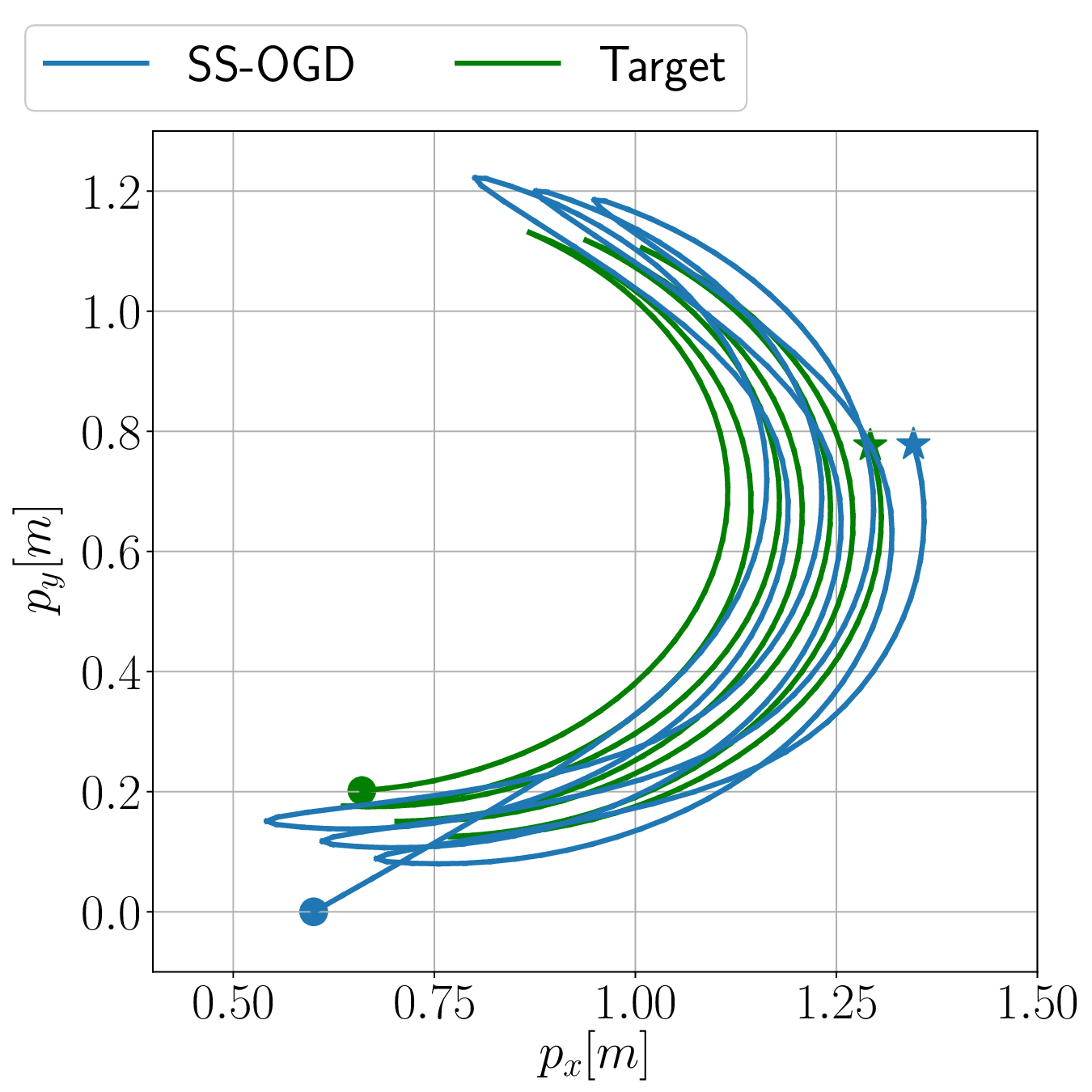}
      \caption{SS-OGD \cite{karapetyan2023online}}
      \label{fig:bench_SSOGD}
    \end{subfigure}
    \caption{Trajectory plots of a reference target with a $V_T = \mathcal{O}(\sqrt{T})$ following \eqref{eq:dynamic_target_example} tracked by PLOT and other online control algorithms for a $T=30$ seconds long horizon.}
    \label{fig:benchmark_trajectories}
 \end{figure}

The trajectories of the target and the system under the considered algorithms are plotted in Figure \ref{fig:benchmark_trajectories} for the first $30$ seconds of the simulation. The corresponding dynamic regret plot for the full $T=200$ seconds horizon is shown in Figure \ref{fig:benchmark_regrets}. Given the dynamic nature of the reference with a sublinear path length, PLOT outperforms all of the other considered benchmarks. There are two main possible reasons for this. Firstly, our approach is aimed at dynamic targets, i.e. unlike the others, it incorporates a forgetting factor that equips it with adaptive capabilities and enjoys dynamic regret guarantees. This allows it to deal better with such time-varying references to which it is hard or impossible to fit a time-invariant model better compared to the other static methods. Secondly, for such targets, the indirect approach of PLOT, i.e., learning the target dynamics and then incorporating these in the control action, is more advantageous than a direct approach deployed by all the other benchmarks. While the affine term of the optimal control action changes drastically throughout the horizon due to the sign and direction change of the target, the dynamics of the exosystem do not change as drastically. Thus, PLOT, learning the dynamics of the exosystem with forgetting is able to provide a better tracking performance, while DAP, Riccatitron, and SS-OGD learn a much smaller affine term for the controller. This results in their performance being indistinguishable from that of the naive LQR. The FTL algorithm aiming for an average best performance instead learns a control action that keeps it roughly in the center of the evolving spiral target ending up with a dynamic regret of an order higher compared to the others.

\begin{figure}
    \centering
    \includegraphics[scale=0.3]{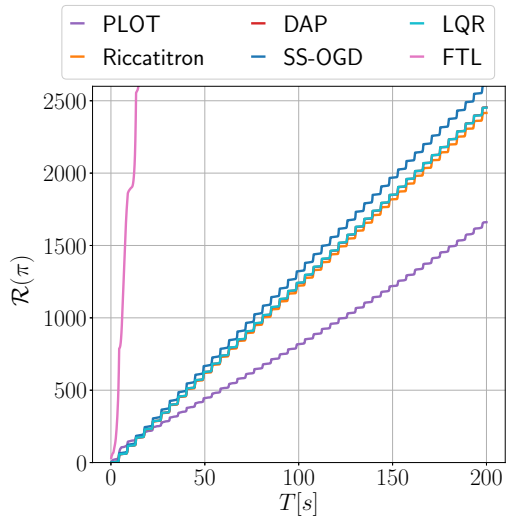}
    \caption{Dynamic Regret of Online Control algorithms applied to the online tracking problem of the unknown target following \eqref{eq:dynamic_target_example}.}
    \label{fig:benchmark_regrets}
\end{figure}

Table \ref{tab:benchmark} compares the average computational time for each iteration of the algorithm in milliseconds, the size of the memory that needs to be reserved and updated for the online execution of the algorithm, and the accumulated regret at the end of the given target trajectory. The average computational time is obtained by averaging the time it takes to compute the control action at each timestep over the entire horizon of $2000$ timesteps. The memory for each controller shows the size of the past state and control variables that need to be stored in the memory. For example for the naive LQR controller, this is $0$, while for SS-OGD it is $4$, as at each timestep it requires the past control input.

\rev{Naive LQR, performing only a single matrix multiplication at each timestep is unsurprisingly the fastest, requiring no extra memory. The computational time of FTL and SS-OGD is of the same order, however, for this benchmark, both incur a higher cost and therefore a higher regret. When PLOT is implemented by approximating step $8$ of Algorithm \ref{alg:RLS} with an unweighted Frobenious norm projection, it can be implemented by a thresholding operation on singular values, requiring no quadratic programs to be solved. In this case, PLOT, Riccatitron, and DAP have similar computation times but are all an order higher compared to the Naive LQR. While compared to it, Riccatitron and DAP attain almost the same regret, PLOT outperforms both, requires less memory than  Riccatitron, and is around a millisecond faster than DAP. When the full weighted projection is carried out in Algorithm \ref{alg:RLS}, step $8$, by approximately solving it with a quadratic program consuming $9.3$ milliseconds, PLOT attains the lowest regret.}

\begin{table}[h]
\caption{Comparison of Online Control Algorithms for Tracking the Unknown Target with Dynamics \eqref{eq:dynamic_target_example}}
\label{tab:benchmark}
\vskip 0.15in
\begin{center}
\begin{small}
\begin{sc}
\begin{tabular}{lcccr}
\toprule
Algorithm & Comp. Time [ms] & Memory & Regret \\
\midrule
PLOT (Ours)    & $10.7^*$ & $54$ & $ \mathbf{1661}$ \\
PLOT (Ours, w/ unweighted projection)    & $2.10$ & $54$ & $ {1685}$ \\
Riccatitron \cite{foster2020logarithmic} & $2.30$ & $90$ & ${2414}$\\
DAP \cite{agarwal2019online}    &$3.50$ & $45$ & $2454$ \\
FTL \cite{pmlr-v32-abbasi-yadkori14}   & ${0.40}$& $\mathbf{0}$ & $40004$         \\
SS-OGD \cite{karapetyan2023online} & $0.33$ & $4$ & $2608$\\
Naive LQR      & $\mathbf{0.25}$ & $\mathbf{0}$ & $2453$        \\
\bottomrule
\end{tabular}
\end{sc}
\end{small}\\
* - 9.3 ms spent on the weighted projection with a QP solver.
\end{center}
\vskip -0.1in
\end{table}

\subsubsection{Benchmark Implementation Details}
\label{app:hyperparameters}

\textbf{PLOT}:
The prediction horizon is set to $W=6$, the forgetting factor to $\gamma = 0.9$, and $M=10$. The the augmented matrix is defined as $\Tilde{S}_t \triangleq \matr{S_{t} &v_t}$ for $t = 1, \hdots, T-1$ as in Section \ref{sec:problem statement}, and for all $k = 1, \hdots, W$, the $k$ learners are initialized as follows: $\hat{S}_{j+k|j} = \mathbf{I}_6, v_{j+k|j} = \mathbf{0}_6$ and $P_{j|j-k} = 10^{-4}\times\mathbf{I}_7$, for all $j = -1, \hdots, k-2$. 

\textbf{Riccatitron}: Following the notation of \cite{simchowitz2020making}, the horizon of the learners is set to $h=5$ and the disturbance action policy length to $m=5$ decided upon by a search over the best parameter. The latter choice increases the memory requirement of Riccatitron in Table \ref{tab:benchmark} to achieve a good tracking performance. We disable the projection, but note that even with projection enabled the results are almost identical. The Online Newton Step (ONS) update is chosen, with The learning rate set to $\eta_{ons}=0.2$ and $\varepsilon_{ons} = 1$.

\textbf{DAP}: As per the notation of \cite{agarwal2019online}, and to match the memory allocation for PLOT, we fix the horizon length for the disturbance action policy to $H=5$. The projection is performed as detailed in \cite{agarwal2019online}, and the learning rate is set to $\frac{0.1}{\sqrt{T}}$.

\textbf{FTL}: The algorithm is implemented as detailed in \cite{pmlr-v32-abbasi-yadkori14} and contains no hyperparameters to be tuned.

\textbf{SS-OGD}: The affine control term is initialized with a vector of zeros, and the learning rate is set to $\alpha = 0.001*\mathbf{I}_4$, following the notation of \cite{karapetyan2023online}.

\textbf{Naive LQR}: The naive LQR controller performs state feedback on the observed error
\begin{equation*}
    u_t^{LQR}(x_t) \!=\! -K (x_t-r_t), \qquad \forall t = 0,\hdots T-1,
\end{equation*}
with $K$ defined in \eqref{eq:K*}.

\rev{
\subsubsection{Random Targets}\label{app_sec:random}

The dependence of the dynamic regret of PLOT on $V_T$ implies that for targets having a dynamic structure, including the one in the preceding example, PLOT's tracking performance is better compared to targets that lack dynamics. While the latter case is still covered by the structure imposed on the targets, as discussed in Section \ref{app_sec:AR}, the regret can scale linearly with $T$ in the worst case when $r_t$ at each time step is independent from the others. In this case, the indirect approach of PLOT may result in a poorer performance compared to direct approaches like DAP, Riccatitron, or SS-OGD when $\gamma$, $W$, and $M$ are not optimally tuned for the specific realization.

\begin{figure}
    \centering
    \includegraphics[scale=0.3]{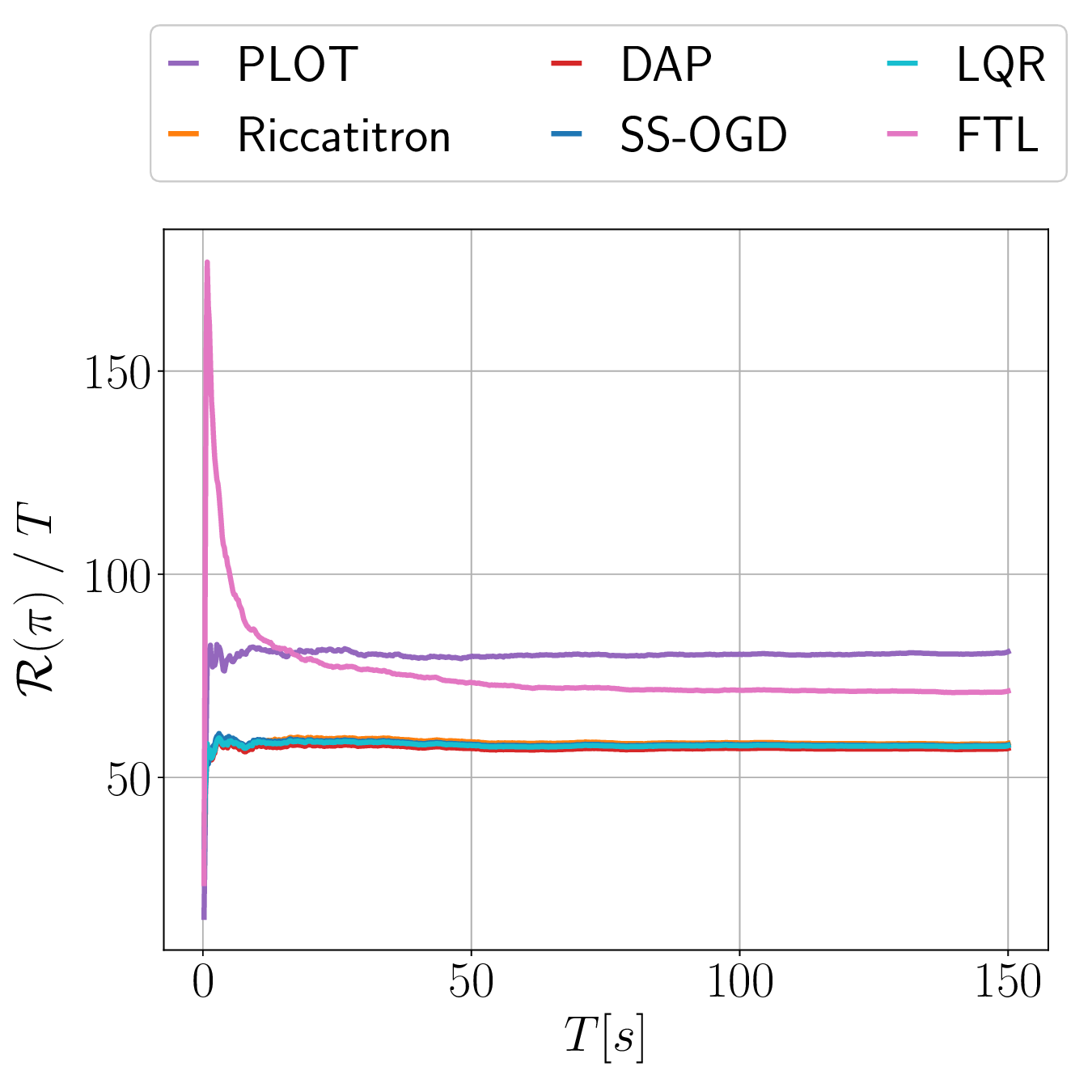}
    \caption{Average Dynamic Regret of Online Control algorithms applied to the online tracking problem of the unknown target following \eqref{eq:uniform_target}.}
    \label{fig:random_targets}
\end{figure}

To showcase this marginal example, we consider random targets sampled from the uniform distribution
\begin{equation}\label{eq:uniform_target}
    r_t \sim \mathcal{U}\left(
    \begin{bmatrix}
        -1\\
        -1
    \end{bmatrix}, 
    \begin{bmatrix}
        1\\
        1
    \end{bmatrix}
    \right).
\end{equation}

We compare PLOT with the same benchmark controllers whose tunable parameters are the same as in Section \ref{app:hyperparameters}. For PLOT we take $W=2$, $\gamma=0.02$, and $M=0.7$, leaving the other parameters unchanged from Section \ref{app:hyperparameters}.

We perform $10$ independent realizations of the random targets over a horizon of length $T=150$ seconds and compute the averaged regret over these runs for each of the six considered algorithms, shown in Figure \ref{fig:random_targets}. The random reference updates combined with the low forgetting factor $\gamma$ result in $\hat{S}_t$-s with large singular values, and, consequently, large inputs, while the optimal action in this case is staying close to the origin. While the performance can be improved significantly by lowering $M$ and increasing $\gamma$, bringing the averaged regret to around $25$, this example shows that for a completely random agent with no dynamics a direct approach may perform better than an indirect one like PLOT.

\subsection{A Naive-RLS Controller}\label{app_sec:naive_RLS}

The multi-predictor setup of PLOT is crucial for its stability and regret performance. To showcase this we consider also a simpler controller that we refer to as Naive-RLS in Algorithm \ref{alg:naive_rls}. In contrast to PLOT, it obtains $W$-step ahead dynamics predictions by recursively applying only the one step ahead estimate $\hat{S}_{t+1|t}$, resulting in $r_{t+1} = \hat{S}_{t+1|t}r_t,~r_{t+2|t}= \hat{S}^2_{t+1|t}r_t,~\hdots,~r_{t+W|t} =\hat{S}^W_{t+1|t}r_t$, for $p=1$. As detailed in Section \ref{sec:algorithm}, this can result in unstable predictions in particular when $S_t$ is time-varying. For $p>1$, we define $z_{t|t}= z_t$, $r_{t|t}=r_t$, and, for $k=1,\hdots,W-1$
\begin{equation*}
    z_{t+k|t}=\begin{cases}
        \begin{bmatrix}r^\top_{t+k|t}&\cdots&r_{t|t}&r_{t-1}\cdots&r^\top_{t-p+k+1}\end{bmatrix}^\top \; \text{if}\; p>k,\\
        \begin{bmatrix}r^\top_{t+k|t}&\cdots&r^\top_{t-p+k+1|t}\end{bmatrix}^\top \; \text{if}\; p\leq k.
        
    \end{cases}
\end{equation*}
\begin{algorithm}[hbt!]
\caption{Naive-RLS}\label{alg:naive_rls}
\begin{algorithmic}[1]
\REQUIRE{Horizon $W$, forgetting factor $\gamma \in (0, 1)$}
\STATE Compute $X, K, \{K_0, ..., K_{W-1}\}$ as in \eqref{eq:P*}, \eqref{eq:K*}, \eqref{eq:Kd}.
\STATE Initialize $1$ RLS learner according to the Algorithm \ref{alg:RLS}, for $k=1$.\label{code:MPCinit}
\FOR{t = 0, ..., $T-1$}
  \STATE Observe system state $x_t$, target state $r_t$. \label{code:MPCmeasurement}
  \FOR{$k = 1, ..., W$}
    \IF{$t+k \leq T$}
        \STATE Update the $1$-step-ahead predictor and obtain $\hat{S}_{t+1|t}$ according to the Algorithm~\ref{alg:RLS}.\\
        \STATE Predict $r_{t+k|t} =\hat{S}_{t+1|t} z_{t+k-1|t} $
    \ELSE
        \STATE Set $r_{t+k-1|t} = 0$.\label{code:MPCcorner}
    \ENDIF
  \ENDFOR
  \STATE Compute $u_t^{\pi}$ as in~\eqref{eq:mpc action}
 \ENDFOR
\end{algorithmic}
\end{algorithm}

To demonstrate the failure of Naive-RLS for agents with time-varying dynamics, we consider the static (circular) and dynamic references of Sections \ref{sec:circle_reference} and \ref{sec:spiral_reference}, respectively. Figure \ref{fig:naive_rls_circle} depicts the former case, showing 
the comparable performance of Naive-RLS to PLOT, since the static $S$ remains stable even when exponentiated. In the latter case with a dynamic $S_t$, shown in Figure \ref{fig:naive_rls_spiral}, Naive-RLS becomes unstable due to exploding matrices $\hat{S}_{t+k|t}$. The implementation details of both controllers are provided below.

\textbf{PLOT}:
The prediction horizon is set to $W=10$, the forgetting factor to $\gamma = 0.8$, and $M=1$. The the augmented matrix is defined as $\Tilde{S}_t \triangleq \matr{S_{t} &v_t}$ for $t = 1, \hdots, T-1$ as in Section \ref{sec:problem statement}, and for all $k = 1, \hdots, W$, the $k$ learners are initialized as follows: $\hat{S}_{j+k|j} = \mathbf{I}_6, v_{j+k|j} = \mathbf{0}_6$ and $P_{j|j-k} = 10^{-4}\times\mathbf{I}_7$, for all $j = -1, \hdots, k-2$. 

\textbf{Naive-RLS}:
The prediction horizon is set to $W=10$, the forgetting factor to $\gamma = 0.8$, and $M=1$. The the augmented matrix is defined as $\Tilde{S}_t \triangleq \matr{S_{t} &v_t}$ for $t = 1, \hdots, T-1$ as in Section \ref{sec:problem statement}, and for all $k = 1, \hdots, W$, the $k$ learners are initialized as follows: $\hat{S}_{j+k|j} = \mathbf{I}_6, v_{j+k|j} = \mathbf{0}_6$ and $P_{j|j-k} = 10^{-4}\times\mathbf{I}_7$, for all $j = -1, \hdots, k-2$. 

\begin{figure}[!ht]
\centering
\begin{minipage}[t]{.48\textwidth}
  \centering
  \includegraphics[width=0.9\linewidth]{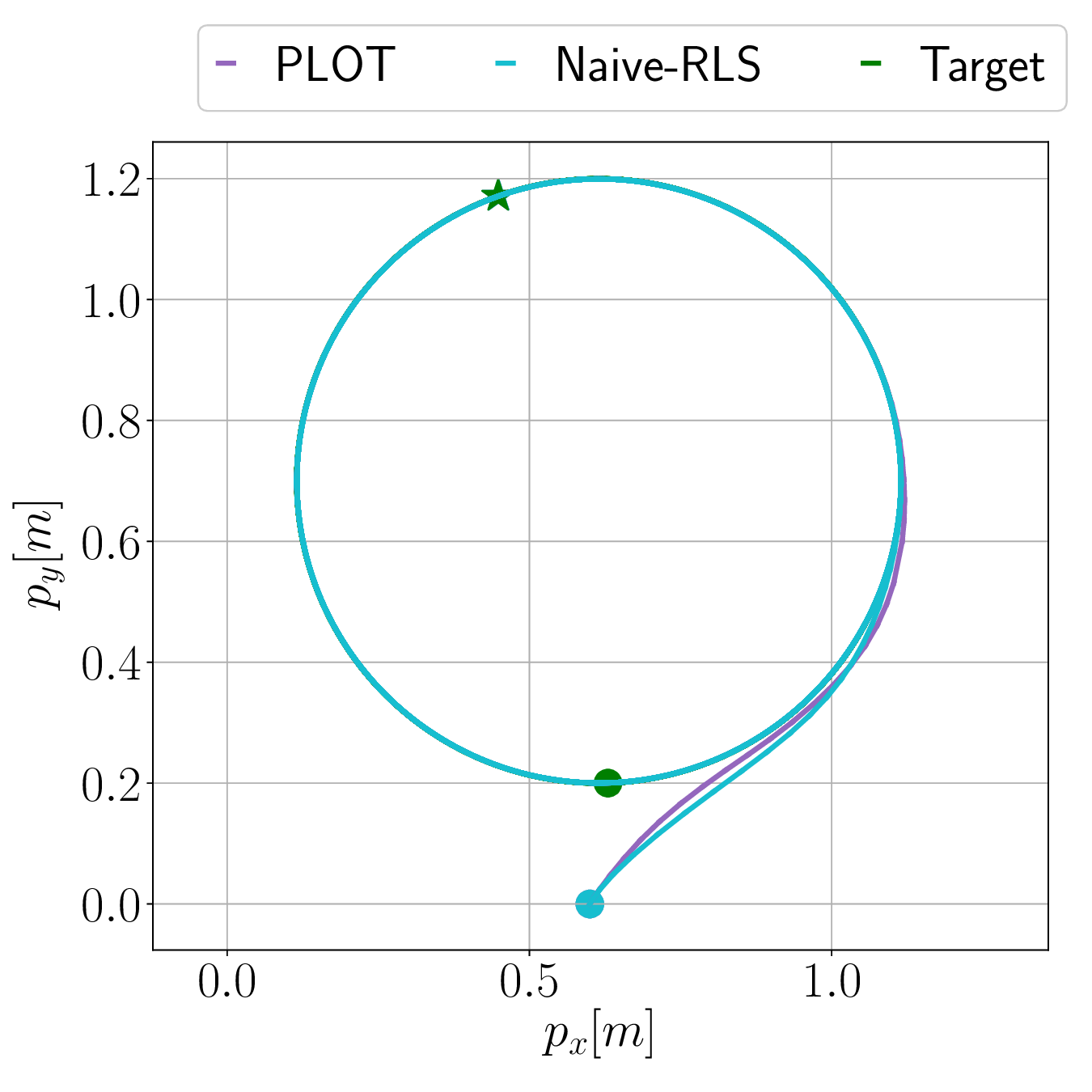}
  \caption{Naive-RLS shows comparable performance to PLOT on the static, circular target from Example \ref{exmp:circular}.}
  \label{fig:naive_rls_circle}
\end{minipage}
 \hfill
\begin{minipage}[t]{0.48\textwidth}
\centering
  \includegraphics[width=0.9\linewidth]{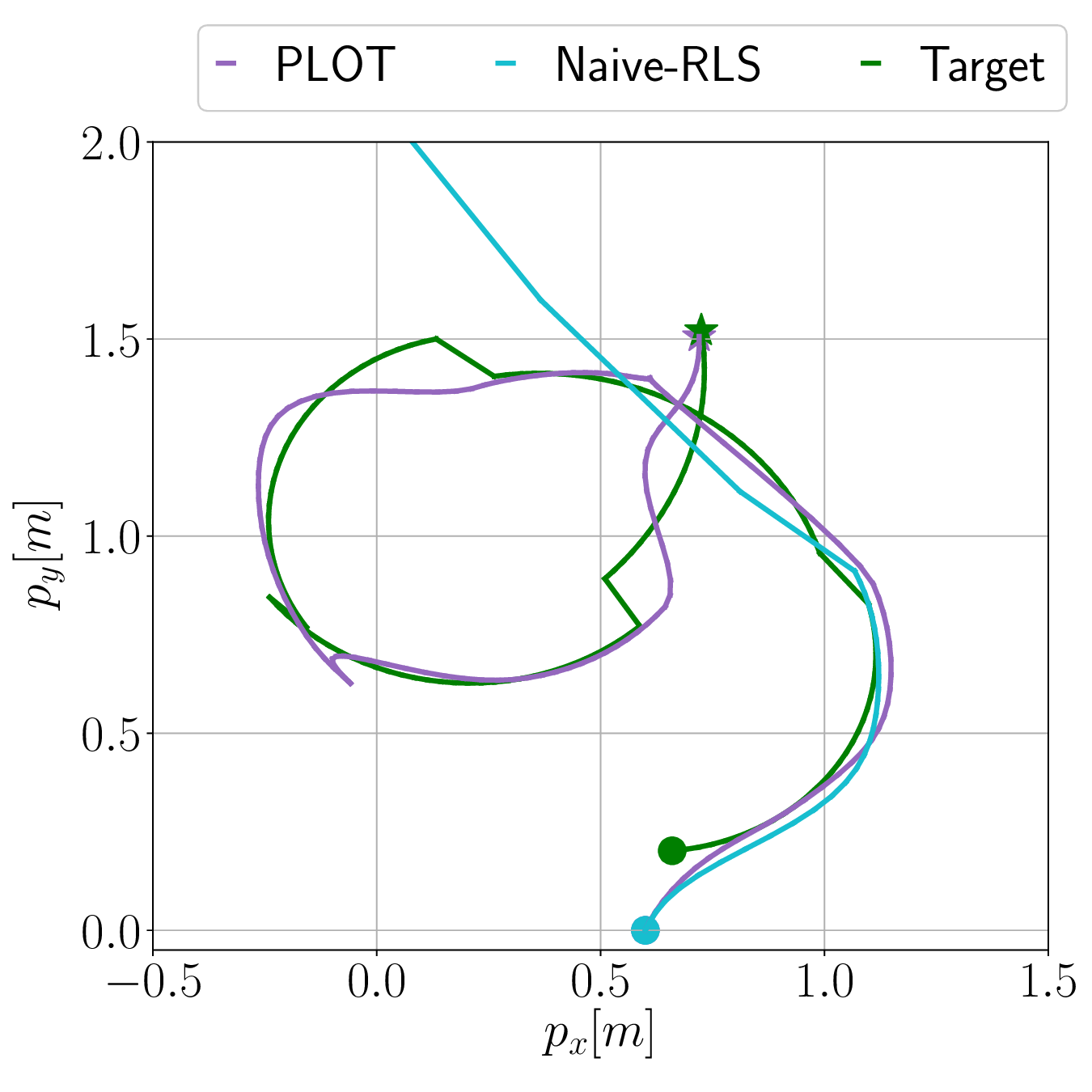}
  \caption{Naive-RLS becomes unstable while PLOT manages tracks the dynamic target of Section \ref{sec:spiral_reference}.}
  \label{fig:naive_rls_spiral}
\end{minipage}%
\end{figure}

}

\clearpage
\section{Implementation on Quadrotors}
\label{app:implementation}
The online tracking setting is of particular interest when applied to the control of quadrotors to track either a virtual target or another object of interest. We extend the simulation setup of Appendix \ref{app:simulations} to a practical one by applying PLOT to track a virtual sequentially revealed target using a Crazyflie 2.1 Quadrotor \cite{bitcraze} depicted in Figure \ref{fig:crazyflie}. The controllers use the linearised model of the quadrotor and the derived in Appendix \ref{app:quadrotor_model}. Below, we detail our experimental setup, as well as the results of the online tracking experiments.
\begin{figure}[!ht]
\centering
\begin{minipage}[t]{.25\textwidth}
   \centering
    \includegraphics[width=\linewidth]{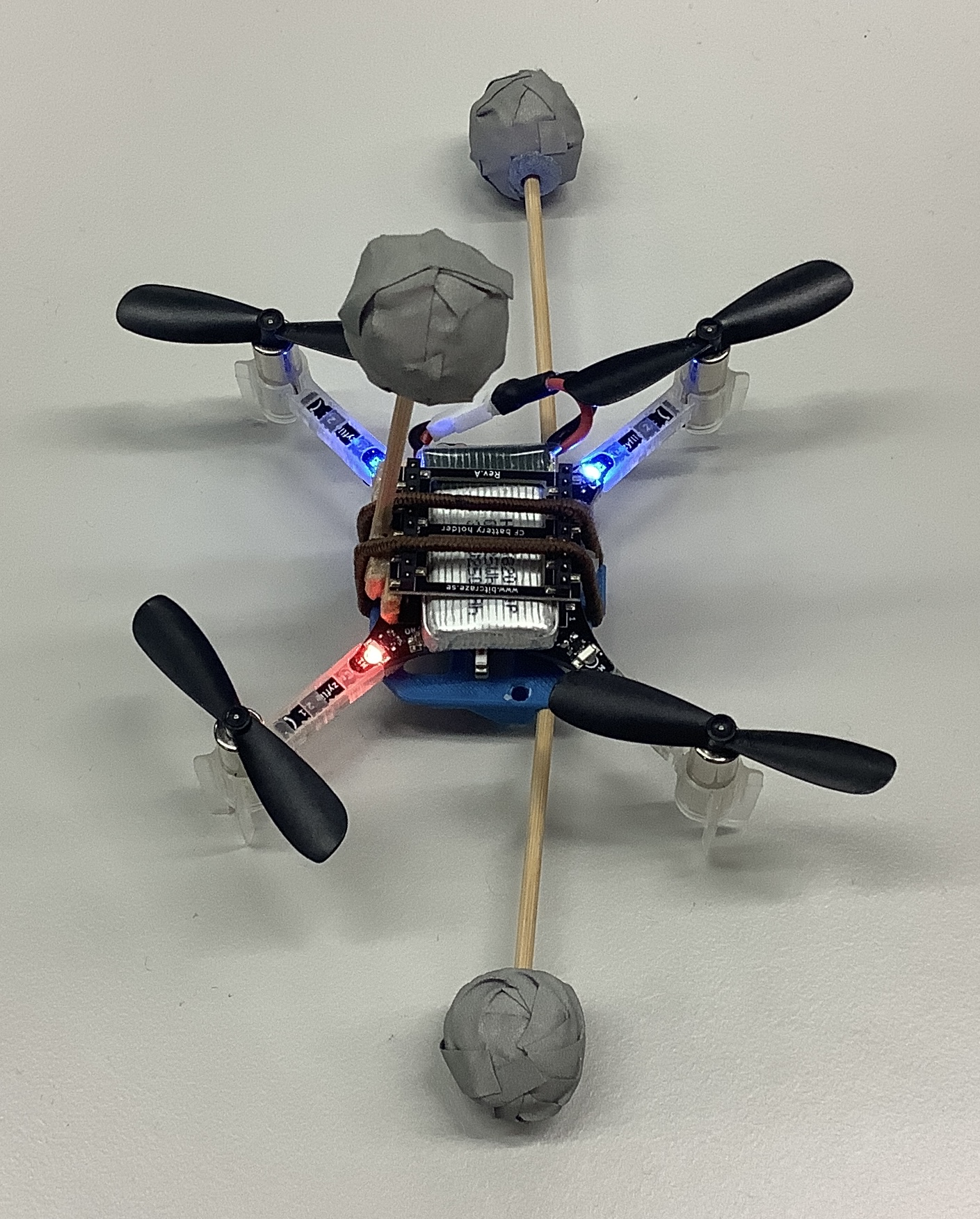}
    \caption{The Crazyflie 2.1 Quadrotor with three visual markers.}
    \label{fig:crazyflie}
\end{minipage}
 \hfill
\begin{minipage}[t]{0.65\textwidth}
\centering
  \includegraphics[width=0.8\linewidth]{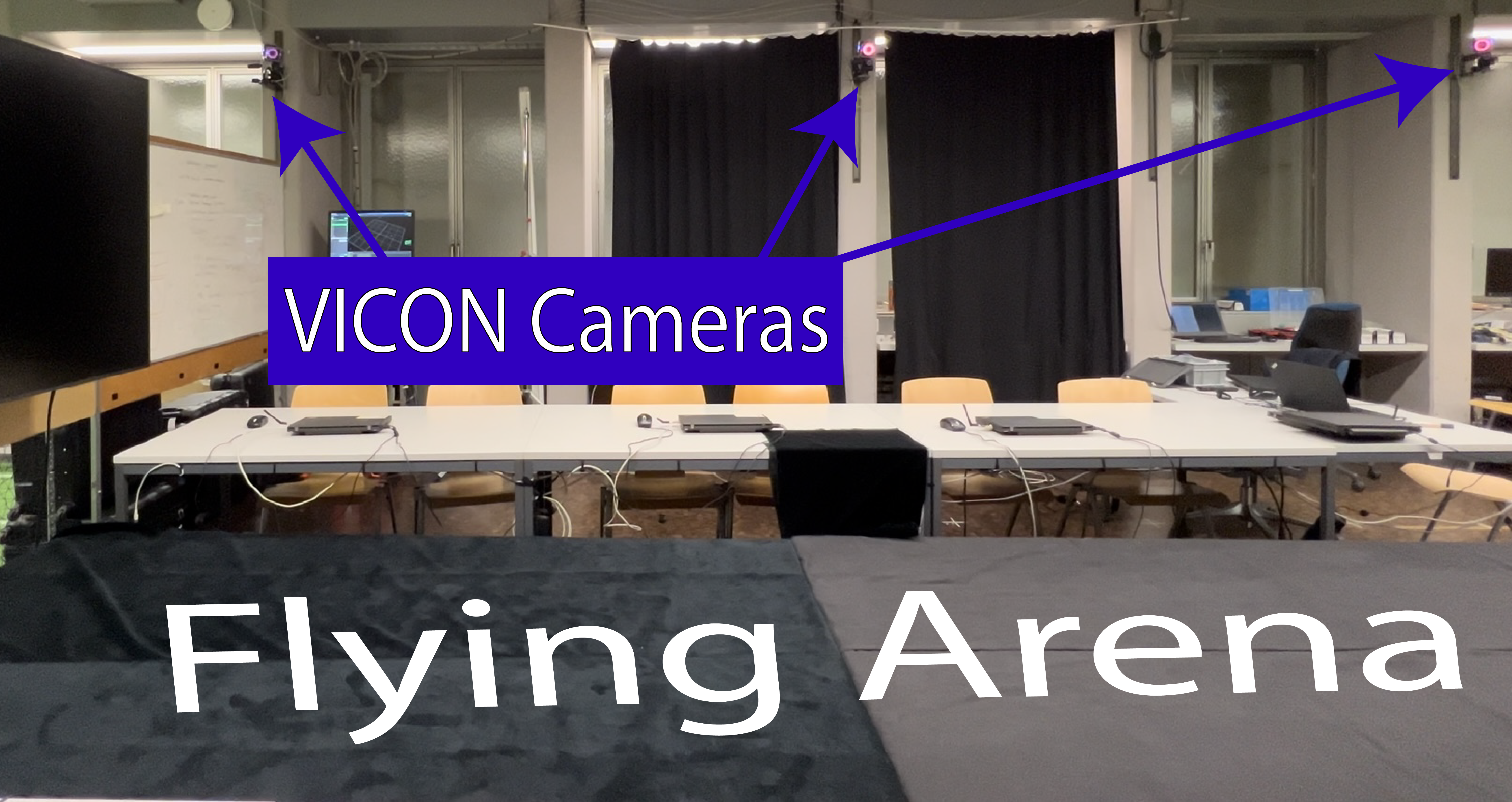}
  \caption{The D-FALL Laboratory of ETH Z\"urich, consisting of a dedicated Flying Arena and $5$ ($3$ shown) mounted VICON Cameras for state estimation.}
  \label{fig:vicon_setup}
\end{minipage}%
\end{figure}

\vspace{-0.5cm}
\subsection{Experimental Setup}

The Crazyflie quadrotors are equipped with a 3-axis accelerometer and gyroscope for onboard angular-rate control and 3 retro-reflective markers, as shown in Figure \ref{fig:crazyflie}. These markers are detected by $5$ VICON cameras mounted around the flying arena, shown in Figure \ref{fig:vicon_setup}, and the VICON system provides a spatial position estimate of the drone in its precalibrated reference frame of reference.

We set up the control architecture with the Robotic Operating System (ROS)~\cite{quigley2009ros}, which allows the controller to receive the state of the drone from the VICON mocap system in real-time. The control action is computed on a local computer and communicated with the drone by sending variable-sized packets through a $2.4$Ghz USB dongle \cite{Crazyradio}. The virtual reference trajectory is generated online after the drone has successfully taken off and is at a pre-defined random initial hovering position at a height of $0.4$ meters. The controller receives the  target 
state through the  ROS network as it flies at a predetermined rate of $10$ Hz.

\subsection{Online Tracking Experiments}

We run PLOT on the described setup to show its tracking performance. The cost matrices are taken to be the same as in Appendix \ref{app:quadrotor_model},  the prediction horizon is fixed to $W=5$ for the ``infinity"-shape and $W=3$ for the circle, the forgetting factor to $\gamma=0.8$ and no projection is performed. The initialization of PLOT is the same as in Appendix \ref{app:hyperparameters}. Two virtual reference target shapes, a circle and an ``infinity"-shape are generated online and published to the ROS network. The quadrotors start in the flying arena in Figure \ref{fig:vicon_setup} and follow the target as soon as state information is received. Figures \ref{fig:circle_crazyflie} and \ref{fig:figure8_crazyflie} below show the trajectory and state plots of PLOT and Naive LQR implemented for a horizon of $T=40$ seconds for the circle and ``infinity"-shape reference trajectories, respectively. Naive LQR is implemented as described in Appendix \ref{app:hyperparameters}.

As in the simulations, without any affine term, the naive LQR controller exhibits a delayed tracking behavior as expected. The constant offset can be noted clearly for both shapes. In comparison, the PLOT algorithm experiences a smaller tracking error. As opposed to the linear simulations in Appendix \ref{app:simulations}, the nonlinear hardware implementation has additional errors, especially for the circular example, due to imperfect mass value or linearization error, among other reasons. The VICON system is also known to introduce a drift through its localization algorithm when some of the cameras are malfunctioning. 

\begin{figure}
\begin{subfigure}{.48\textwidth}
\centering
  \includegraphics[width=0.9\linewidth]{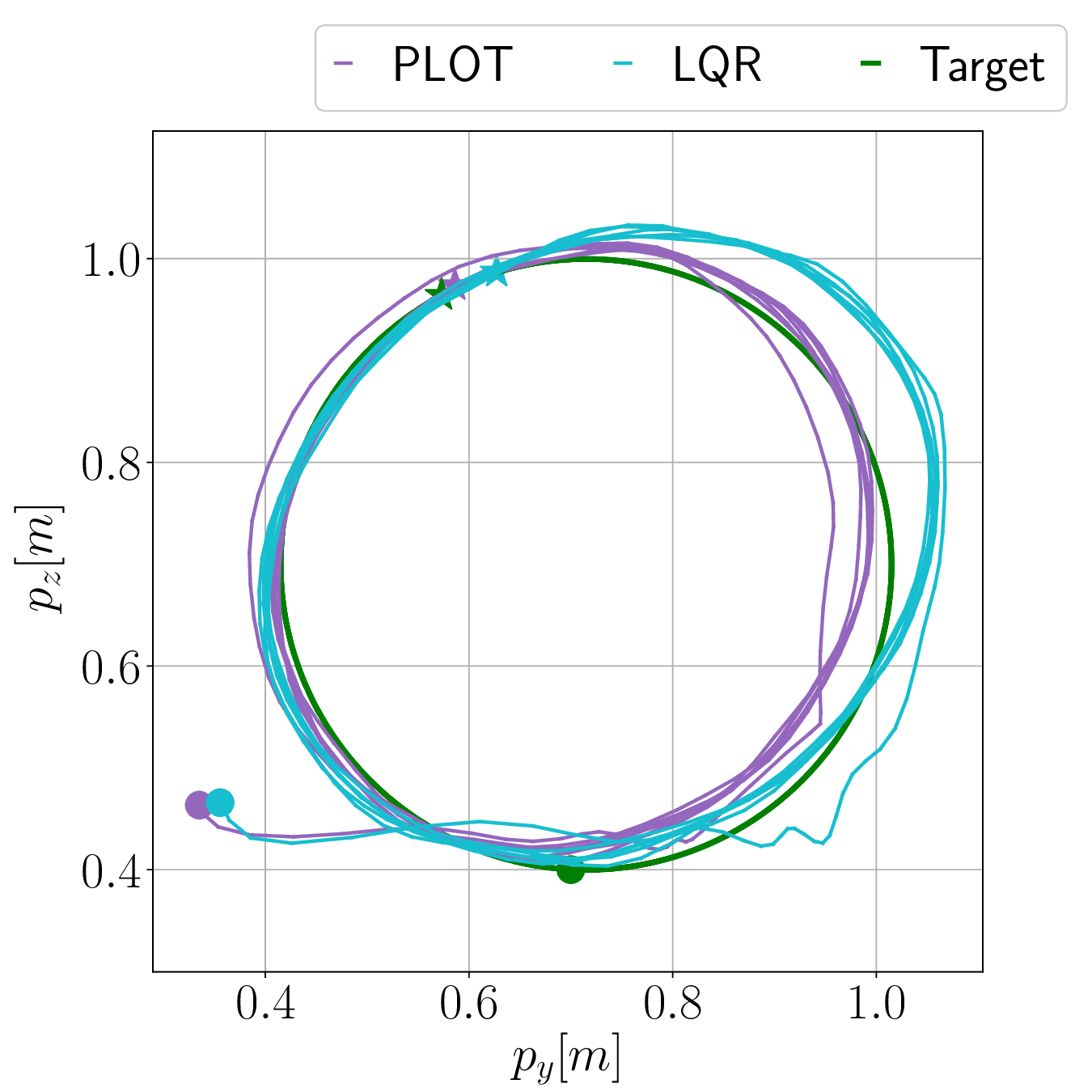}
  \caption{Trajectory plot.}
\end{subfigure}
\begin{subfigure}{.48\textwidth}
\centering
  \includegraphics[width=0.9\linewidth]{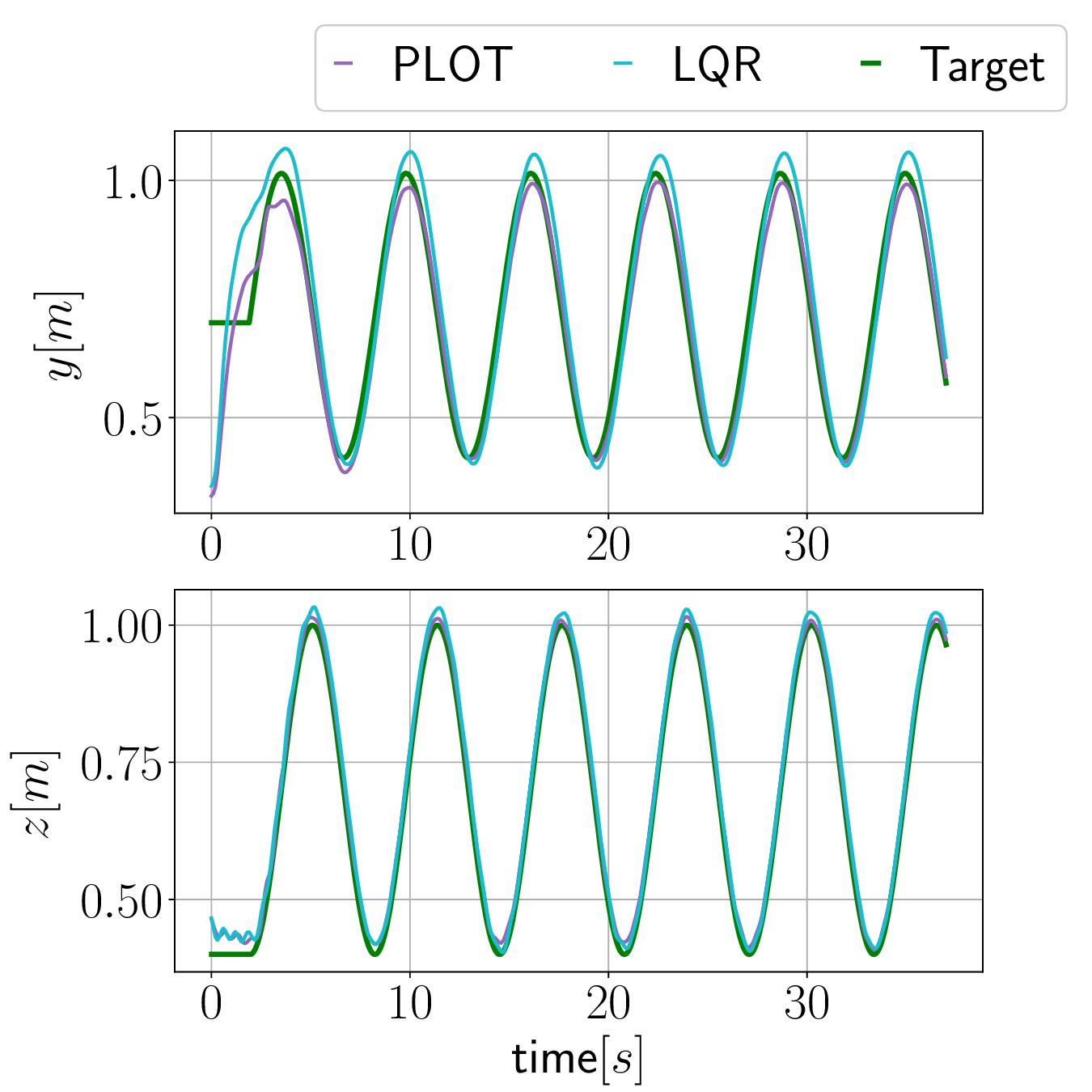}
  \caption{State plot.}
\end{subfigure}%
\caption{Circular reference tracked by Crazyflie Drones with the PLOT and Naive LQR controllers.}
\label{fig:circle_crazyflie}
\end{figure}

\begin{figure}
\begin{subfigure}{.48\textwidth}
\centering
  \includegraphics[width=0.9\linewidth]{figs/infinity_crazyflie.eps}
  \caption{Trajectory plot.}
\end{subfigure}
\begin{subfigure}{.48\textwidth}
\centering
  \includegraphics[width=0.9\linewidth]{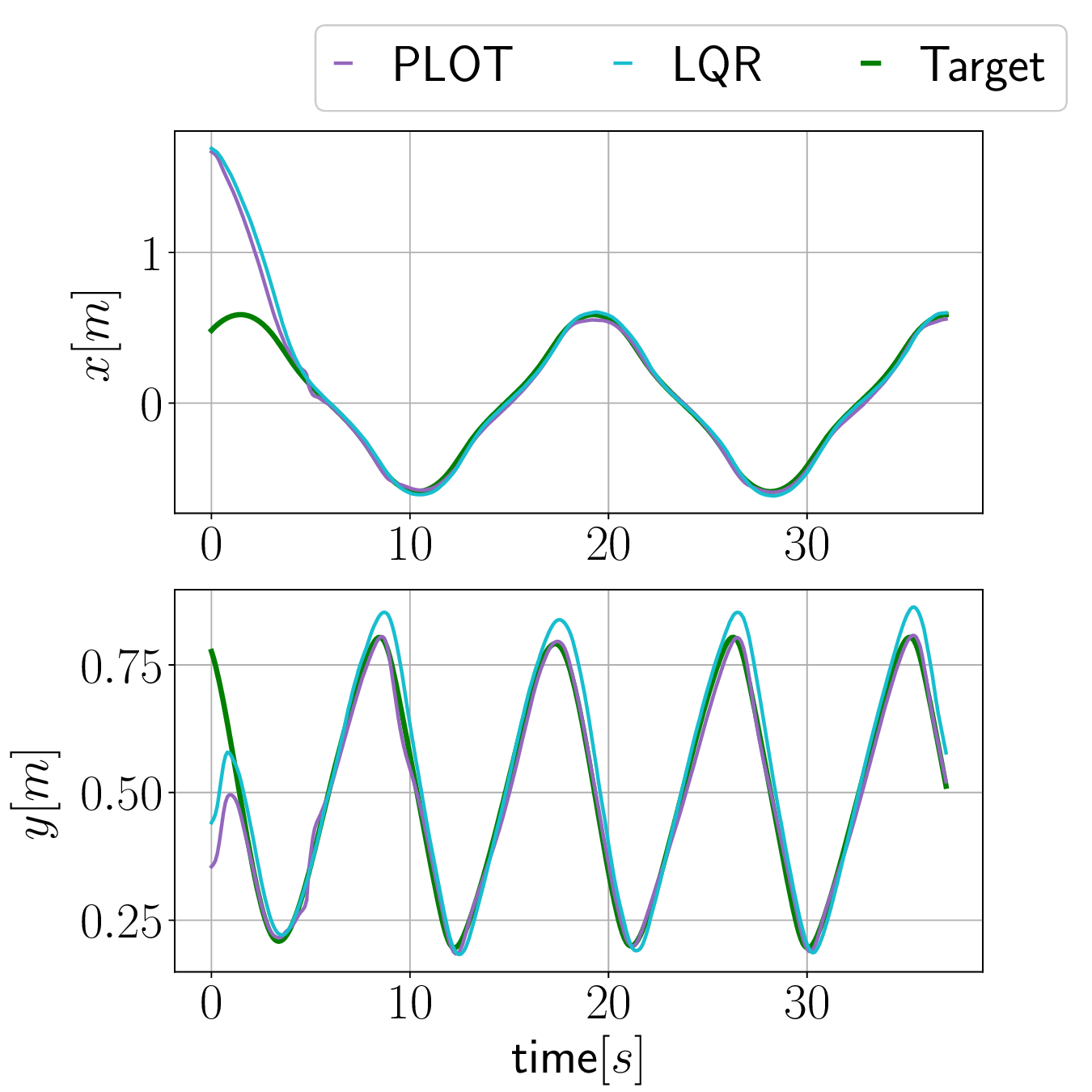}
  \caption{State plot.}
\end{subfigure}%
\caption{``Infinity"-shaped reference tracked by Crazyflie Drones with the PLOT and Naive LQR controllers.}
\label{fig:figure8_crazyflie}
\end{figure}

\end{document}